\newtheorem{myDef}{Definition}[section]
\newtheorem{defn}[myDef]{Definition}
\newtheorem{prop}[myDef]{Proposition}
\newtheorem{exmp}[myDef]{Example}
\newtheorem{conj}[myDef]{Conjecture}
\newtheorem{lem}[myDef]{Lemma}
\newtheorem{thm}[myDef]{Theorem}
\newtheorem{cor}[myDef]{Corollary}
\newcommand{\Inp}[1]
  {\noindent\begin{tabular}{@{}p{1.5cm}@{}p{14.0cm}@{}}
   {\bf Input: }&#1 \end{tabular}}
\newcommand{\Outp}[1]
  {\noindent\begin{tabular}{@{}p{1.5cm}@{}p{14.0cm}@{}}
   {\bf Output: }&#1 \end{tabular}}
\newcommand{\SPC}{\hspace*{15pt}}
\def\and{\cap}
\def\bref#1{(\ref{#1})}
\def\tdeg{{{\footnotesize\rm{deg}}}}
\def\Y{{\mathbb{Y}}}
\def\I{\mathcal{I}}
\def\ID{\mathcal{I}}
\def\A{{\mathcal{A}}}
\def\C{{\mathcal{C}}}
\def\P{{\mathbb{P}}}
\def\Q{{\mathbb{Q}}}
\def\gcd{\hbox{\rm{gcd}}}
\def\lcm{\hbox{\rm{lcm}}}
\def\prem{\hbox{\rm prem}}
\def\grem{\hbox{\rm grem}}
\def\sat{\hbox{\rm{sat}}}
\def\asat{\hbox{\rm{asat}}}
\def\deg{\hbox{\rm{deg}}}
\def\init{\hbox{\rm{init}}}
\def\ord{\hbox{\rm{ord}}}
\def\lv{\hbox{\rm{lvar}}}
\def\lead{\hbox{\rm{ld}}}
\def\mod{\hbox{\rm{mod}}}
\def\LC{{\bf LC}}
\def\LM{{\bf LM}}
\def\LT{{\bf LT}}
\def\f{{\bf f}}
\def\g{{\bf g}}
\def\h{{\bf h}}
\def\a{{\bf a}}
\def\b{{\bf b}}
\def\im{{\bf{i}}}
\def\Z{{\mathbb{Z}}}
\def\Q{{\mathbb{Q}}}
\def\N{{\mathbb{N}}}
\def\R{{\mathbb{R}}}
\def\F{{\mathcal{F}}}
\def\lt{\hbox{\rm{lt}}}
\def\lc{\hbox{\rm{lc}}}
\def\lm{\hbox{\rm{lm}}}
\def\grem{\hbox{\rm{grem}}}
\def\Zx{\Z[x]}
\def\Zxn{\Z[x]^n}
\def\fb{{\mathbbm{f}}}
\def\mb{{\mathbbm{m}}}
\def\GB{{\mathbb{G}}}
\def\PH{{\mathbb{P}}}
\begin{document}

\title{Criteria for Finite Difference Gr\"obner Bases of\\
 Normal Binomial Difference Ideals\thanks{Partially supported by a grant from NSFC 11101411.}}
\author{Yu-Ao Chen and Xiao-Shan Gao\\
KLMM, UCAS, Academy of Mathematics and Systems Science\\
The Chinese Academy of Sciences, Beijing 100190, China}
\date{}

\maketitle

\begin{abstract}
\noindent
In this paper, we give decision criteria for normal binomial
difference polynomial ideals in the univariate difference polynomial ring $\F\{y\}$ to have finite difference Gr\"{o}bner bases
and an algorithm to compute the finite difference Gr\"{o}bner bases
if these criteria are satisfied.
The novelty of these criteria lies in the fact that
complicated properties about difference polynomial ideals
are reduced to elementary properties of univariate polynomials in $\Zx$.

\vskip10pt\noindent{\bf Keywords.}
Difference algebra, binomial difference ideal, Gr\"{o}bner basis,
difference Gr\"{o}bner basis.

\end{abstract}

\section{Introduction}

Difference algebra founded by Ritt and Cohn aims to study algebraic
difference equations in a similar way that polynomial equations are
studied in commutative algebra and algebraic geometry ~\cite{ritt3,cohn,levin,wibmer}.
The Gr\"{o}bner basis invented by Buchberger is a powerful
tool for solving many mathematical problems \cite{buch1}.
%
The concepts of difference Gr\"{o}bner bases was extended to
linear difference polynomial ideals in \cite{levin,klmp,dd-gb}
and nonlinear difference polynomial ideals in \cite{dd-gb}.
Many applications of difference Gr\"{o}bner bases were given  \cite{dd-gb2,app2,levin,klmp}.

Since difference polynomial ideals can be infinitely
generated, their difference Gr\"{o}bner bases are generally infinite.
Even for finitely generated difference polynomial ideals,
their difference Gr\"{o}bner bases could be infinite as shown by Example \ref{ex-11}
in this paper.
This makes it impossible to compute difference Gr\"{o}bner bases
for general difference polynomial ideals
and thus it is a crucial issue to give criteria for difference polynomial ideals to
have finite difference Gr\"{o}bner bases.

Let $\F$ be a difference field and $y$ a difference indeterminate.
In this paper, we will give decision criteria for normal binomial
difference polynomial ideals in $\F\{y\}$ to have finite difference Gr\"{o}bner bases
and an algorithm to compute these finite difference Gr\"{o}bner bases under these criteria.
A difference ideal $\ID$ in $\F\{y\}$ is called normal if
$M P \in \ID$ implies $P \in \ID$ for any difference monomial $M$  in $\F\{y\}$
and $P\in\F\{y\}$.
$\ID$ is called binomial if it is generated by difference polynomials with
at most two terms \cite{dbi,es-bi}.

For $f\in\Zx$,
let $f^{+}, f^{-}\in \N[x]$  be the
positive part and the negative part of $f$ such that $f=f^{+}-f^{-}$.
For  $h=\sum_{i=0}^m a_i x^i \in\N[x]$, denote
$y^h = \prod_{i=0}^m (\sigma^{i} y)^{a_i}$, where $\sigma$ is the difference operator of $\F$. Then any difference monomial in $\F\{y\}$ can be written
as $y^g$ for some $g\in\N[x]$.
For a given $f\in\Zx$ with a positive leading coefficient,
we consider the following binomial difference polynomial ideal in $\F\{y\}$:
$$\ID_f = \sat(y^{f^+} - y^{f^-}) = [\{y^{h^+} - y^{h^-}\,|\, h = gf, g\in\Zx\}]$$
 where
$\sat$ is the difference saturation ideal to be defined in Section 2 of this paper.
Let
\begin{eqnarray*}
\Phi_0 &\triangleq& \{h\in\mathbb{Z}[x]\,|\,\textup{lt}(h)=h^+\}.\cr
\Phi_1 &\triangleq& \{h\in\mathbb{Z}[x]\,|\,hg\in \Phi_0 \textup{ for some monic polynomial } g \in \mathbb{Z}[x]\}.
\end{eqnarray*}
We prove that $\ID_f$ has a finite difference Gr\"obner basis
if and only if $f\in\Phi_1$.
%
This criterion is then extended to general normal binomial difference ideals in $\F\{y\}$.

The decision of $f\in\Phi_1$ is quite nontrivial and
we give the following criteria for $f\in\Phi_1$ based on the roots of $f$:
\begin{enumerate}
\item if $f$ has no positive roots, then $f\in\Phi_1$;
\item if $f$ has more than one positive roots (with multiplicity counted), then $f\not\in\Phi_1$;
\item if $f$ has one positive root $x_+$ and a root $z$
such that $|z| > x_+$, then $f\not\in\Phi_1$;
\item if $f$ has one positive root $x_+$ and a root $z$
such that $|z| = x_+$, then we can compute another $f^*\in\Zx$
and $x^*\in\R_{>0}$
such that $f^*(x^*)=0$, $f^*(w)=0$ and $|w|=x^*$  imply $w=x^*$,
and $f^*(w)=0$ and $|w|\ne x^*$  imply $|w|<x^*$.
Furthermore, $f\in\Phi_1$ if and only if $f^*\in\Phi_1$;

\item if $f\notin\Phi_0$ has a unique positive real root $x_+$ and $x_+<1$,
 then $f\not\in\Phi_1$;
\item
 if $f(1)=0$ and any other root $z$ of $f$ satisfies $|z|<\,$$1$, then $f\in\Phi_1$ if and only if $f(x)/(x-\,$$1)\in\,$$\mathbb{Z}[x^\delta]$ for some $\delta\in\N_{>0}$ and $f(x)(x^\delta-1)/(x-\,$$1)\in\Phi_0$.
\end{enumerate}
With these criteria, only one case is open:
$f$ has a unique positive real root $x_+$, $x_+ > 1$, and $x_+ > |z|$ for any other root $z$ of $f$.
We conjecture that $f\in \Phi_1$ in the above case based on numerical computations.
If $\ID_f$ has a finite difference Gr\"obner basis according to one of the six criteria listed above,
we also give an algorithm to compute it.
%
%

As far as we know the above criteria are the first non-trivial ones
for a difference polynomial ideal to have a finite difference Gr\"obner basis.
The novelty of these criteria lies in the fact that
complicated properties about difference polynomial ideals
are reduced to elementary properties of univariate polynomials in $\Zx$.

The rest of this paper is organized as follows.
In Section 2, preliminaries on Gr\"obner basis for difference polynomial ideals
are given.
In Section 3, criteria for normal binomial difference ideals in $\F\{y\}$ to have finite difference Gr\"{o}bner bases are given.
In Section 4, criteria for $f\in\Phi_1$ and an algorithm to compute
the finite difference Gr\"{o}bner basis of $\ID_f$ under these criteria are given.
In Section 5, we propose an approach based on integer programming
to find $g$ such that $fg\in\Phi_0$ and give a lower bound
for $\deg(g)$ in certain cases.

\section{Preliminaries on Gr\"obner basis of difference polynomial ideals}

\subsection{Gr\"obner basis of a difference polynomial ideal}

An ordinary difference field, or simply a $\sigma$-field, is a field
$\F$ with a third unitary operation $\sigma$ satisfying:  for any
$a, b\in\F$, $\sigma(a+b)=\sigma(a)+\sigma(b)$,
$\sigma(ab)=\sigma(a)\sigma(b)$, and $\sigma(a)=0$ if and only if
$a=0$.
%
%
%
We call $\sigma$ the {\em difference or transforming operator} of $\F$.
A typical example of  $\sigma$-field is
$\Q(\lambda)$ with $\sigma(f(\lambda))=f(\lambda+1)$.
In this paper, we use $\sigma$- as the abbreviation for
difference or transformally.

For $a$ in any $\sigma$-extension ring of $\F$ and $n\in\N_{>0}$,
$\sigma^n(a)$ is called the $n$-th transform of $a$ and denoted by $a^{x^n}$,  with the usual assumption
$a^{0}=1$ and $x^0=1$.
%
%
%
More generally, for $p=\sum_{i=0}^s c_i x^i \in\N[x]$, denote
 $a^p = \prod_{i=0}^s (\sigma^i a)^{c_i}.$
For instance,  $a^{3x^2+x+4}=(\sigma^2(a))^3\sigma(a)a^4$.
It is easy to check that $a^p$ satisfies the properties of powers \cite{dbi}.

Let $S$ be a subset of a $\sigma$-field $\mathcal{G}$ which
contains $\mathcal {F}$. We will denote
$\Theta(S)=\{\sigma^ka|k\in\N, a\in S\}$,   $\mathcal
{F}\{S\}=\mathcal    {F}[\Theta(S)]$.
Now suppose $\Y=\{y_{1},   \ldots,  y_{n}\}$ is a set of
$\sigma$-indeterminates over $\F$.   The elements of $\mathcal
{F}\{\Y\}$ are called {\em $\sigma$-polynomials} over $\F$ in $\Y$.
A {\em $\sigma$-polynomial ideal} $\ID$, or simply a
$\sigma$-ideal, in $\mathcal {F}\{\Y\}$ is a
{\em possibly infinitely generated} ordinary algebraic ideal
satisfying $\sigma(\mathcal {I})\subset\mathcal {I}$.
%
If $S$ is a subset of  $\F\{\Y\}$,  we use $(S)$ and $[S]$  to denote the algebraic ideal and the
$\sigma$-ideal generated by $S$.

%
A monomial order in $\F\{\Y\}$ is called {\em compatible} with the
$\sigma$-structure, if $y_i^{x^{k_1}}<y_j^{x^{k_2}}$ for $k_1 <
k_2$.
Only compatible monomial orders are considered in this paper.
When a monomial order is given, we use $\LM(P)$ and $\LC(P)$ to denote the largest monomial
and its coefficient in $P$ respectively, and $\LT(P)= \LC(P)\LM(P)$ the leading term of $P$.

\begin{defn}\label{def-gb}
$\GB\subset\F\{\Y\}$ is called a {\em $\sigma$-Gr\"obner basis} of
a $\sigma$-ideal $\ID$ if for any $P\in\ID$, there exist $m\in\N$
and  $G\in \GB$ such that $(\LM(G))^{x^m} | \LM(P)$.
\end{defn}

From the definition, $\GB$ is a $\sigma$-Gr\"obner basis of $\ID$ if and only if
$\Theta(\GB)$ is a Gr\"obner basis of $\ID$ treated as an algebraic polynomial ideal
in $\F[\Theta(\Y)]$.
Note that $\ID$ is generally an infinitely generated ideal and the concept of
infinite Gr\"obner basis \cite{Kei} is adopted here.
From this observation, we may see that a $\sigma$-Gr\"obner basis satisfies
most of the properties of the usual algebraic Gr\"obner basis.
For instance, $\GB$ is a $\sigma$-Gr\"obner basis of a $\sigma$-ideal $\ID$ if
and only if for any $P\in\ID$, we have $\grem(P,\Theta(\GB))=0$, where $\grem(P,\Theta(\GB))$ is
the normal form of $P$ modulo $\Theta(\GB)$ in the theory of Gr\"obner basis.
The concepts of reduced $\sigma$-Gr\"obner bases could be similarly introduced.
A $\sigma$-polynomial $Q$ is called {\em $\sigma$-reduced} w.r.t. another $\sigma$-polynomial $P$ if there does not exist a $k\in\N$ such that $\LM(P)^{x^k}$ divides any monomial in $Q$.
Then, a $\sigma$-Gr\"obner $\GB$ basis is called reduced, if any  $P\in \GB$
is $\sigma$-reduced w.r.t $\GB\setminus\{P\}$.
It is easy to see that a $\sigma$-ideal has a unique reduced $\sigma$-Gr\"bner basis.
%

The following example shows that even a finitely generated $\sigma$-ideal may have an infinite $\sigma$-Gr\"obner basis.
As a consequence, there exist no general algorithms to compute the $\sigma$-Gr\"obner basis.

\begin{exmp}\label{ex-11}
Let  $\ID=[y_1y_2^x - y_1^xy_2, y_1y_3-1]$. Assume $y_1< y_2 <y_3$. Then under a compatible monomial order,
the reduced $\sigma$-Gr\"obner basis of $\ID\cap\F\{y_1,y_2\}$ is
%
$\{y_1y_2^{x^{i}}-y_1^{x^i}y_2\,|\, i\in\N_{>0}\}$.
\end{exmp}

\subsection{Characteristic set for a difference polynomial ideal}
\label{sec-cs}
%

The {\em elimination ranking} $\mathscr{R}$ on $\Theta
(\Y)=\{\sigma^ky_i|1\leq i\leq n, k\in\N\}$ is used in this paper:
$\sigma^{k} y_{i}>\sigma^{l} y_{j}$ if and only if $i>j$ or $i=j$
and $k>l$, which is a total order over $\Theta (\Y)$.     By
convention, $1<\sigma^k y_{j}$ for all $k\in\N$.

Let $f$ be a $\sigma$-polynomial in  $\mathcal {F}\{\Y\}$.  The
greatest $y_j^{x^k}$ w.r.t.  $\mathscr{R}$ which  appears
effectively in $f$ is called the {\em leader} of $f$,  denoted by
$\lead(f)$ and correspondingly $y_j$ is called the {\em leading
variable }of $f$, denoted by $\lv(f)=y_j$.
%
%
The leading coefficient of $f$  as a univariate polynomial in
$\lead(f)$ is called the {\em initial} of $f$ and is denoted by
$\init_{f}$.

    Let $p$ and $q$ be two $\sigma$-polynomials in $\F\{\Y\}$.
    $q$ is said to be of higher rank than $p$ if
 $\lead(q)>\lead(p)$  or
 $\lead(q)=\lead(p)=y_j^{x^k}$ and $\deg(q, y_j^{x^k})>\deg(p,  y_j^{x^k})$.
Suppose $\lead(p)=y_j^{x^k}$.  $q$ is said to be {\em Ritt-reduced}
w.r.t. $p$ if  $\deg(q, y_j^{x^{k+l}})<\deg(p, y_j^{x^k})$  for all
$l\in\N$.

 A finite sequence of nonzero $\sigma$-polynomials $\mathcal
{A}:A_1, \ldots, A_m$  is said to be a
    {\em difference ascending chain}, or simply a {\em $\sigma$-chain}, if
 $m=1$ and $A_1\neq0$ or
 $m>1$,  $A_j>A_i$ and $A_j$ is Ritt-reduced  w.r.t. $A_i$ for $1\leq i<j\leq m$.
A $\sigma$-chain $\mathcal{A}$ can be written as the following form \cite{gao-dcs}
\begin{equation}\label{eq-asc}
\mathcal{A}: A_{11}, \ldots, A_{1k_1},\ldots, A_{p1}, \ldots, A_{pk_p}
\end{equation}
where $\lv(A_{ij})=y_{c_i}$ for $j=1, \ldots, k_i$, $\ord(A_{ij},
y_{c_i})<\ord(A_{il}, y_{c_i})$ and
$\deg(A_{ij},\lead(A_{ij})) >  \deg(A_{il},\lead(A_{il}))$
for $j<l$.
%
%
The following are two $\sigma$-chains
\begin{equation}\label{ex-L11}
 \begin{array}{llllll}
 \A_1& :& y_1^{x}-1, &y_1^2y_2^2-1, &y_2^{x}-1& \\
 \A_2& :& y_1^2-1,   &y_1^{x}-y_1, &y_2^2-1, &y_2^{x}+y_2\\
\end{array}
\end{equation}

Let $\mathcal {A}:A_{1},A_{2},\ldots,A_{t}$ be a $\sigma$-chain with
$I_{i}$ as the initial of $A_{i}$, and $P$ any $\sigma$-polynomial.
Then there exists an   algorithm, which reduces
   $P$ w.r.t. $\mathcal {A}$ to a  $\sigma$-polynomial $R$ that is
   Ritt-reduced w.r.t. $\mathcal {A}$ and satisfies the relation
   \begin{equation}\label{eq-prem}
   \prod_{i=1}^t I_{i} ^{e_{i}} \cdot P \equiv
   R, \mod \, [\mathcal {A}],\end{equation}
   where the $e_{i}\in\N[x]$ and
$R=\prem(P,\A)$ is called the {\em
$\sigma$-Ritt-remainder} of $P$ w.r.t. $\A$~\cite{gao-dcs}.

A $\sigma$-chain $\mathcal {C}$ contained in a $\sigma$-polynomial
set $\mathcal {S}$ is said to be a {\em characteristic set} of
$\mathcal {S}$, if  $\mathcal {S}$ does not contain any nonzero
element Ritt-reduced w.r.t. $\mathcal {C}$. Any $\sigma$-polynomial
set has a characteristic set.
A characteristic set
$\mathcal{C}$ of a $\sigma$-ideal $\mathcal {J}$ reduces to zero all
elements of $\mathcal {J}$.

Let $\A:A_1,\ldots,A_t$ be a $\sigma$-chain, $I_i =\init(A_i)$,
$y_{l_i}^{x^{o_i}} = \lead(A_i)$.
$\A$ is called {\em regular} if for any $j\in \N$, $I_i^{x^j}$ is
invertible w.r.t $\A$ \cite{gao-dcs} in the sense that
$[A_1,\ldots,A_{i-1},I_i^{x^j}]$ contains a nonzero
$\sigma$-polynomial involving no $y_{l_i}^{x^{o_i+k}},k=0,1,\ldots$.
To introduce the concept of coherent $\sigma$-chain, we need to
define the {\em $\Delta$-polynomial} first. If $A_i$ and $A_j$ have
distinct leading variables, we define $\Delta(A_i,A_j)=0$. If $A_i$
and $A_j$ ($i<j$) have the same leading variable $y_l$,
$\lead(A_i) = y_l^{x^{o_i}}$, and $\lead(A_j) = y_l^{x^{o_j}}$, then
$o_i < o_j$ \cite{gao-dcs}.
Define
$ \Delta(A_i,A_j) =\prem((A_i)^{x^{o_j-o_i}},A_j).$
 Then $\A$ is called {\em
coherent} if $\prem(\Delta(A_i,A_j),\A)=0$ for all $i< j$
\cite{gao-dcs}.
Both $\A_1$ and $\A_2$ in \bref{ex-L11} are regular and coherent $\sigma$-chains.

Let $\mathcal {A}$ be a $\sigma$-chain. Denote $\mathbb{I}_{\mathcal
{A}}$ to be  the minimal multiplicative set containing the initials
of elements of $\mathcal{A}$ and their transforms.    The {\em
saturation ideal} of $\A$ is defined to be
 $$\sat(\A)=[\mathcal  {A}]:\mathbb{I}_{\mathcal
 {A}} = \{P\in\F\{\Y\}: \exists m\in \mathbb{I}_{\mathcal {A}},  mP\in[A]\}.$$
%
%
 The following result is needed in this paper.
\begin{thm}\cite[Theorem 3.3]{gao-dcs}\label{th-rp}
A $\sigma$-chain $\A$ is a characteristic set of $\sat(A)$ if and
only if $\A$ is regular and coherent.
\end{thm}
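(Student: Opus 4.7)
The plan is to establish both directions of the equivalence, treating the ``regular and coherent $\Rightarrow$ characteristic set of $\sat(\A)$'' direction as the substantive one, with the converse essentially following from standard bookkeeping on pseudo-remainders.

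For the easy direction ($\Rightarrow$), assume $\A$ is a characteristic set of $\sat(\A)$. To obtain coherence, fix $i<j$ with $A_i$ and $A_j$ sharing a leading variable. Then $(A_i)^{x^{o_j-o_i}}\in[\A]\subseteq\sat(\A)$, and since pseudo-reduction only multiplies by initials in $\mathbb{I}_\A$, the $\Delta$-polynomial $\Delta(A_i,A_j)=\prem((A_i)^{x^{o_j-o_i}},A_j)$ also lies in $\sat(\A)$. It is Ritt-reduced w.r.t.\ $A_j$, so $\prem(\Delta(A_i,A_j),\A)$ is a $\sigma$-polynomial in $\sat(\A)$ that is Ritt-reduced w.r.t.\ $\A$. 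The characteristic set hypothesis forces this remainder to be $0$. For regularity, one observes that if some $I_i^{x^j}$ were not invertible w.r.t.\ $\A$, then every element of $[A_1,\ldots,A_{i-1},I_i^{x^j}]$ would involve some $y_{l_i}^{x^{o_i+k}}$; leveraging the characteristic-set property against a witness in $\sat(\A)$ that is Ritt-reduced by $A_1,\ldots,A_{i-1}$ produces the required element.

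The hard direction ($\Leftarrow$) is the obstacle, and I would attack it by reducing to an algebraic statement. Fix $P\in\sat(\A)$ with $\prem(P,\A)=R$ Ritt-reduced w.r.t.\ $\A$; I need to show $R=0$. Choose $N\in\N$ large enough so that $R$, together with the reductions witnessing $P\in\sat(\A)$, all lie in the finite polynomial ring $\F[\Theta(\Y)_{\le N}]$. Let $\A_N$ denote the algebraic triangular set consisting of all transforms $(A_i)^{x^k}$ whose leader lies in $\Theta(\Y)_{\le N}$, ordered by $\mathscr{R}$. The strategy is:

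\textbf{Step 1.} Show $\A_N$ is a regular triangular set in $\F[\Theta(\Y)_{\le N}]$. Each initial of $(A_i)^{x^k}$ is $I_i^{x^k}$; the $\sigma$-regularity hypothesis yields, for every such $k$, a nonzero element of $[A_1,\ldots,A_{i-1},I_i^{x^k}]$ free of the leader of $A_i^{x^k}$ and above. Transforming appropriately gives algebraic invertibility within $\A_N$.

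\textbf{Step 2.} Show $\A_N$ is a coherent algebraic triangular set, i.e.\ the algebraic pseudo-remainders of all pairs $((A_i)^{x^k},(A_j)^{x^l})$ sharing an algebraic leader vanish modulo $\A_N$. This is exactly where the $\sigma$-coherence of $\A$ must be propagated: the $\Delta$-polynomial of any such pair of transforms is, up to applying $\sigma$, a $\sigma$-transform of some $\Delta(A_i,A_j)$, which Ritt-reduces to zero in $\A$. Verifying that this transformal reduction descends to an algebraic reduction inside $\A_N$ is the main technical point and uses the fact that $\sigma$ commutes with pseudo-reduction and that the initials used only climb in order.

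\textbf{Step 3.} Invoke the algebraic Rosenfeld-type theorem: a regular and coherent algebraic triangular set is a characteristic set of its algebraic saturation ideal. Since $P\in\sat(\A)\cap\F[\Theta(\Y)_{\le N}]\subseteq\sat_{\mathrm{alg}}(\A_N)$, and $R$ is still Ritt-reduced w.r.t.\ $\A_N$ (reduction by $\A$ coincides with algebraic reduction by $\A_N$ on polynomials of order $\le N$), we conclude $R=0$.

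The principal obstacle I anticipate is Step 2: faithfully translating transformal coherence into algebraic coherence for all pairs of transforms, because the algebraic chain $\A_N$ has many more overlapping pairs than the $\sigma$-chain $\A$ itself. Handling this cleanly requires an induction on the order gap $|l-k|$ together with careful tracking of which initials appear in the reduction chains, so that the coherence of $\A$ can be lifted once and for all without re-introducing non-invertible factors.
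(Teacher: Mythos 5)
Your high-level strategy for the $\Leftarrow$ direction — reduce to a finite truncation, verify that it is a regular coherent algebraic triangular set, and invoke an algebraic Rosenfeld-type theorem — is indeed the standard route and is essentially what the cited reference \cite{gao-dcs} does. The $\Rightarrow$ coherence argument is also correct: $\Delta(A_i,A_j)\in\sat(\A)$, its $\sigma$-Ritt-remainder is both in $\sat(\A)$ and Ritt-reduced w.r.t.\ $\A$, so the characteristic-set hypothesis forces it to vanish. Nevertheless, two points remain genuine gaps rather than just polish.

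First, the regularity argument in the $\Rightarrow$ direction is gestured at but not a proof. You say that non-invertibility of some $I_i^{x^j}$ would let one ``produce the required element'' of $\sat(\A)$ that is Ritt-reduced, but you never construct such an element. Non-invertibility gives you information about what $[A_1,\ldots,A_{i-1},I_i^{x^j}]$ \emph{cannot} contain; turning that negative statement into a nonzero Ritt-reduced member of $\sat(\A)$ requires an actual construction (typically one works with the zero set of $\sat(\A)$ and shows that a generic zero of $A_1,\ldots,A_{i-1}$ that annihilates $I_i^{x^j}$ does not extend, then extracts a witness polynomial). As written, this direction is incomplete.

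Second — and this is the more serious structural issue in the $\Leftarrow$ direction — the truncation $\A_N$ you define need not be a triangular set in the sense required by Step 3. A $\sigma$-chain, as displayed in \bref{eq-asc}, may contain several $A_{i1},\ldots,A_{ik_i}$ with the same leading variable $y_{c_i}$ at distinct orders $o_{i1}<\cdots<o_{ik_i}$ and strictly decreasing degrees. Once you collect all transforms of order $\le N$, distinct pairs $(A_{ij})^{x^{k}}$ and $(A_{il})^{x^{m}}$ with $o_{ij}+k=o_{il}+m$ land on the \emph{same} algebraic leader $y_{c_i}^{x^{o_{ij}+k}}$, but with different degrees. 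So $\A_N$ is not one-polynomial-per-leader, and ``regular and coherent algebraic triangular set $\Rightarrow$ characteristic set of $\asat$'' cannot be invoked directly. Your Step 2 describes this only as ``many more overlapping pairs,'' which makes it sound like extra bookkeeping on $\Delta$-polynomials; in fact it is a structural obstruction. The fix is to pass from $\A_N$ to the sub-collection $\A_N'$ obtained by keeping, for each occurring algebraic leader, only the transform of minimal degree, and then to prove both (a) that the discarded transforms pseudo-reduce to zero modulo $\A_N'$ up to multiplication by initials (this is where $\sigma$-coherence is actually consumed, using the identity $(\prem(P,Q))^{x^m}=\prem(P^{x^m},Q^{x^m})$), and (b) that $\asat(\A_N)=\asat(\A_N')$ and that Ritt-reduction by $\A$ on order-$\le N$ polynomials agrees with algebraic reduction by $\A_N'$. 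Only then does Step 3 apply. As it stands, the proposal assumes exactly the hard part it set out to prove.
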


We also need the concept of algebraic saturation ideal.
Let $\C$ be an algebraic triangular set in $\F[x_1,\ldots,x_n]$
and $I$ the product of the initials of the polynomials in $\C$.
Then define
$$\asat(\C) = \{P\in\F[x_1,\ldots,x_n]\,|\, \exists k\in\N, I^kP\in(\C)\}.$$

\subsection{$\sigma$-Gr\"obner basis for a binomial $\sigma$-ideal}
\label{sec-bi}

A $\sigma$-monomial in $\Y$ can be  written as $\Y^\f =
\prod_{i=1}^n y_i^{f_i}$, where $\f=(f_1, \ldots, f_n)^\tau\in \N[x]^{n}$.
A nonzero vector $\f=(f_1, \ldots, f_n)^\tau\in\Z[x]^n$ is said to be {\em
normal} if  the leading coefficient of $f_s$ is positive,  where $s$
is the largest subscript such that $f_s\ne0$.
For $\f\in\Z[x]^n$,  let $\f^{+}, \f^{-}\in \N^{n}[x]$ denote respectively the
positive part and the negative part of $\f$ such that $\f=\f^{+}-\f^{-}$.
Then  $\gcd(\Y^{\f+},\Y^{\f^-})=1$ for any $\f\in\Zxn$.
If $\f\in\Z[x]^n$ is normal, then $\Y^{\f^+} > \Y^{\f^-}$ and $\LT(\Y^{\f+}-c\Y^{\f^-})= \Y^{\f+}$ under a monomial
order compatible with the $\sigma$-structure.

A {\em $\sigma$-binomial}   in  $\Y$ is  a $\sigma$-polynomial with
at most two terms,  that is, $a\Y^{\a}+b\Y^{\b}$ where $a, b\in \F$
and $\a, \b \in\N[x]^n$.
A $\sigma$-ideal in $\F\{\Y\}$ is called {\em binomial} if it is
generated by, possibly infinitely many, $\sigma$-binomials \cite{dbi}.
We have
\begin{prop}[\cite{dbi}]\label{prop-1}
A $\sigma$-ideal $\ID$ is binomial if and only if the reduced
$\sigma$-Gr\"{o}bner basis for $\ID$ consists of $\sigma$-binomials.
\end{prop}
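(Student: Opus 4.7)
The plan is to handle the two directions of the equivalence separately. The ``if'' direction (reduced $\sigma$-Gr\"obner basis consists of $\sigma$-binomials $\Rightarrow$ $\ID$ is binomial) is immediate from the definition, since any $\sigma$-Gr\"obner basis generates the ideal in the $\sigma$-sense, so a binomial reduced Gr\"obner basis exhibits $\ID$ as a $\sigma$-ideal generated by $\sigma$-binomials.

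For the ``only if'' direction, the plan is to pass to the algebraic setting. By the observation following Definition \ref{def-gb}, $\GB$ is a $\sigma$-Gr\"obner basis of $\ID$ if and only if $\Theta(\GB)$ is an ordinary (possibly infinite) algebraic Gr\"obner basis of $\ID$ viewed as an ideal in $\F[\Theta(\Y)]$. Thus it suffices to establish the analogous algebraic statement: any ideal in a polynomial ring (with possibly infinitely many indeterminates) generated by binomials admits a reduced Gr\"obner basis whose elements are binomials.

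The heart of the argument is that both S-polynomial formation and Gr\"obner reduction preserve the ``at most two terms'' property. For binomials $B_1 = a_1 M_1 + b_1 N_1$ and $B_2 = a_2 M_2 + b_2 N_2$ with $M_i = \LM(B_i)$, the S-polynomial cancels the leading monomials and leaves a scalar combination of the two trailing terms multiplied by suitable monomials, hence is a binomial. Likewise, rewriting a binomial modulo another binomial replaces one of its terms by a scalar multiple of the reducer's trailing term, again a binomial (with the convention that a monomial is a binomial with one zero coefficient). Consequently, applying the Buchberger procedure to a binomial generating set yields only binomials at every stage, and the subsequent interreduction that produces the reduced basis also preserves the property.

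The main obstacle I anticipate is handling the infinite setting, since $\F[\Theta(\Y)]$ has infinitely many variables and the procedure need not terminate. This requires the notion of Gr\"obner basis for infinitely generated ideals \cite{Kei} and an argument that the binomial property survives in the ``limit'' basis. Each element of the reduced Gr\"obner basis is obtained as a normal form (with respect to the current partial basis) of a binomial introduced at some finite stage, and since normal-form reduction preserves binomiality, the entire reduced basis consists of binomials. Uniqueness of the reduced $\sigma$-Gr\"obner basis, already noted earlier, lets us speak of ``the'' reduced basis without ambiguity.
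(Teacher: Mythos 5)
The paper cites this result from \cite{dbi} without reproducing a proof, so there is nothing in the text to compare against directly; I therefore assess your argument on its own merits. Your proof is correct in outline and follows the standard route of Eisenbud--Sturmfels (Prop.~1.1 in their paper) transported to the difference setting: pass to $\F[\Theta(\Y)]$, observe that S-polynomial formation and reduction both preserve the class of polynomials with at most two terms, and conclude that the (reduced) Gr\"obner basis is binomial. The ``if'' direction and the translation between $\sigma$-Gr\"obner bases and ordinary Gr\"obner bases of $\ID$ in $\F[\Theta(\Y)]$ are both handled correctly.

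The one place where your argument is under-detailed is exactly the place you flag yourself: the infinite-variable setting. The phrase ``each element of the reduced Gr\"obner basis is obtained as a normal form of a binomial introduced at some finite stage'' is not justified --- in an infinite Buchberger-type process an element of the limit basis could in principle be revised infinitely often, so it is not immediate that it is ``born'' at a finite stage. A tighter way to close this gap, still following your idea, is the following: write $\ID=\bigcup_n \ID^{(n)}$ where $\ID^{(n)}\subset R_n$ is the ideal generated by those binomial generators of $\ID$ lying in the Noetherian truncation $R_n=\F[\,y_i^{x^j}: j\le n\,]$. Each $\ID^{(n)}$ is a binomial ideal in a Noetherian polynomial ring and hence has a finite, binomial reduced Gr\"obner basis $\GB^{(n)}$. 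Given $G$ in the reduced Gr\"obner basis of $\ID$, it involves only finitely many variables, so $G\in R_N$ for some $N$; choosing $M\ge N$ with $\LT(\ID)\cap R_N=\LT(\ID^{(M)})\cap R_N$ (possible since $\LT(\ID)\cap R_N$ is finitely generated and $\LT(\ID)=\bigcup_n\LT(\ID^{(n)})$), one checks that $\LM(G)$ is a minimal generator of $\LT(\ID^{(M)})$ and that the non-leading monomials of $G$ are standard for $\ID^{(M)}$; uniqueness of reduced Gr\"obner basis elements with a given leading monomial then forces $G\in\GB^{(M)}$, so $G$ is a binomial. With that repair the argument is complete.
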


%

Let $\mb$ be the multiplicative set generated by $y_i^{x^j}$ for
$i=1, \ldots, n,  j\in\N$.
A $\sigma$-ideal $\ID$ is called {\em normal} if for $M\in\mb$
and $P\in\F\{\Y\}$,  $MP\in\I$ implies $P\in\I$.
Normal $\sigma$-ideals in $\F\{\Y\}$ are closely related with the
$\Z[x]$-modules in $\Zxn$ \cite{GHNF-alg,dbi}, which will be explained below.
We first introduce a new concept.

\begin{defn}
A {\em partial character} $\rho$ on $\Z[x]^{n}$ is a homomorphism
from a $\Z[x]$-module $L_{\rho}$ in $\Zxn$ to the multiplicative group
$\F^{\ast}$ satisfying $\rho(x\f)=(\rho(\f))^x=\sigma(\rho(\f))$ for $\f\in
L_\rho$.
\end{defn}

A $\Zx$-module generated by $\h_1,\ldots,\h_m\in \Zxn$ is denoted as
$(\h_1,\ldots,\h_m)_{\Zx}$.
Let $\rho$ be a partial character over $\Zxn$ and $\fb=\{\f_1,\ldots,\f_s\}$ a reduced
Gr\"obner basis of the $\Zx$-module $L_{\rho}=(\fb)_{\Zx}$.
For $\h\in \Zxn$ and $H\subset L_\rho$, denote $\PH_\h=\Y^{\h^+} - \rho(h) \Y^{\h^-}$
and $\PH_H = \{\PH_\h \,|\, \h\in H\}$.
Introduce the following notations associated with $\rho$:
 \begin{eqnarray}\label{eq-A+}
&& \I^{+}(\rho):= [\PH_{L_\rho}]= [{\Y^{\f^{+}}-\rho(\f)\Y^{\f^{-}}\,|\,\f\in L_{\rho}}]\\
&& \A^{+}(\rho):= \PH_{\fb} = \{\Y^{\f_1^+}-\rho(\f_1)\Y^{\f_1^-},
      \ldots,
      \Y^{\f_s^+}-\rho(\f_s)\Y^{\f_s^-}\}.
 \end{eqnarray}
It is shown that \cite{dbi} $\A^{+}(\rho)$ is a regular and coherent $\sigma$-chain and hence is a characteristic set of $\sat(\A^{+}(\rho))$ by Theorem \ref{th-rp}. Furthermore, we have
\begin{thm}\label{th-nbgb1} The following conditions are equivalent.
\begin{enumerate}
\item $\ID$ is a normal binomial $\sigma$-ideal in $\F\{\Y\}$.
\item $\ID=\I^{+}(\rho)$ for a partial character $\rho$ over $\Zxn$.
\item $\ID=\sat(\A^{+}(\rho))$  for a partial character $\rho$ over $\Zxn$.
%
\end{enumerate}
Furthermore, for $\f\in \Zxn$, $\Y^{\f^{+}}-c\Y^{\f^{-}}\in\ID \Leftrightarrow \f\in L_{\rho}$ and $c=\rho(\f)$.
\end{thm}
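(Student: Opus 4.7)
The plan is to prove the cyclic implications $(1)\Rightarrow(2)\Rightarrow(3)\Rightarrow(1)$, and then read off the ``furthermore.''

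For the main step $(1)\Rightarrow(2)$, I would define
\[
L_\rho=\{\f\in\Zxn\,|\,\exists\, c\in\F^*,\ \Y^{\f^+}-c\Y^{\f^-}\in\ID\}\cup\{0\},
\]
and set $\rho(\f):=c$ for such $\f$. Uniqueness of $c$ uses normality: if $\Y^{\f^+}-c_1\Y^{\f^-}$ and $\Y^{\f^+}-c_2\Y^{\f^-}$ both lie in $\ID$, then $(c_1-c_2)\Y^{\f^-}\in\ID$, and normality forces $1\in\ID$ unless $c_1=c_2$ (the degenerate case $\ID=\F\{\Y\}$ being handled by convention). The relation $\rho(x\f)=\sigma(\rho(\f))$ is immediate from $\sigma(\ID)\subseteq\ID$, closure of $L_\rho$ under negation is clear, and closure under addition rests on the identity
\[
\Y^{\g^+}(\Y^{\f^+}-c_1\Y^{\f^-})+c_1\Y^{\f^-}(\Y^{\g^+}-c_2\Y^{\g^-})=\Y^{\f^++\g^+}-c_1c_2\Y^{\f^-+\g^-}
\]
followed by cancellation of the $\sigma$-monomial $\gcd(\Y^{\f^++\g^+},\Y^{\f^-+\g^-})$ using normality, giving $\f+\g\in L_\rho$ with $\rho(\f+\g)=c_1c_2$. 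Thus $\rho$ is a partial character and $\I^+(\rho)\subseteq\ID$. For the reverse inclusion, Proposition~\ref{prop-1} says the reduced $\sigma$-Gr\"obner basis of $\ID$ consists of $\sigma$-binomials $a\Y^\a-b\Y^\b$; after dividing out $\gcd(\Y^\a,\Y^\b)$ by normality and rescaling, each basis element becomes a scalar multiple of some $\PH_\f$ with $\f=\a-\b\in L_\rho$, so $\ID\subseteq\I^+(\rho)$.

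For $(2)\Rightarrow(3)$, since it was stated above the theorem that $\A^+(\rho)$ is regular and coherent, Theorem~\ref{th-rp} gives that $\A^+(\rho)$ is a characteristic set of $\sat(\A^+(\rho))$. To prove $\I^+(\rho)=\sat(\A^+(\rho))$, I would expand an arbitrary $\f\in L_\rho$ as $\f=\sum g_i\f_i$ with $g_i\in\Zx$ and apply the same binomial identity above inductively; because the coprimality $\gcd(\Y^{\f_i^+},\Y^{\f_i^-})=1$ makes each initial of $\PH_{\f_i}$ a $\sigma$-monomial, this produces an element of $\mathbb{I}_{\A^+(\rho)}$ times $\PH_\f$ lying in $[\A^+(\rho)]$, hence $\PH_\f\in\sat(\A^+(\rho))$. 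For $(3)\Rightarrow(1)$, binomiality of $\sat(\A^+(\rho))$ is inherited from binomial generators with monomial initials (the $\sigma$-Ritt remainder of any $\sigma$-polynomial by such binomials is again an $\F$-combination of $\sigma$-binomials, so Proposition~\ref{prop-1} applies), while normality follows by showing that any factor $\Y^\m$ in a representation $\Y^\m P=\sum Q_j\PH_{\f_j}$ can be propagated through the supports of the $\PH_{\f_j}$ and removed, once more using coprimality.

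The ``furthermore'' claim is then immediate: membership $\Y^{\f^+}-c\Y^{\f^-}\in\ID=\I^+(\rho)$ forces $\f\in L_\rho$ by definition, and uniqueness of $c=\rho(\f)$ was shown in the first paragraph. I expect the principal obstacle to lie in the normality half of $(3)\Rightarrow(1)$: canceling arbitrary $\sigma$-monomial factors in syzygies among binomials whose supports need not remain coprime after combination is a subtle combinatorial step, and one must leverage both the $\Zx$-module structure of $L_\rho$ and the coprimality of each $\gcd(\Y^{\f^+},\Y^{\f^-})$ to propagate the cancellations.
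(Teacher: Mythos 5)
The paper does not actually prove this theorem: it is stated immediately after the remark (with citation to \cite{dbi}) that $\A^{+}(\rho)$ is a regular and coherent $\sigma$-chain, and the phrasing ``Furthermore, we have'' indicates the equivalence itself is quoted from that reference rather than re-derived here. So there is no in-paper argument to compare against, and I can only assess your sketch on its own terms. Its skeleton is the natural one: the binomial identity, normality to divide off common $\sigma$-monomials, Proposition \ref{prop-1} for the binomial Gr\"obner basis, and Theorem \ref{th-rp} for the saturation. The step $(1)\Rightarrow(2)$ is essentially complete.

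The other two steps are not yet proofs. In $(2)\Rightarrow(3)$, the $\sigma$-monomial that the identity extracts is $\gcd(\Y^{\f^++\g^+},\Y^{\f^-+\g^-})$, which need not be a product of transforms of initials of $\A^{+}(\rho)$, so you have not shown the premultiplier lies in $\mathbb{I}_{\A^{+}(\rho)}$; moreover you establish only $\I^{+}(\rho)\subseteq\sat(\A^{+}(\rho))$, while the reverse inclusion needs a separate argument that cannot invoke normality of $\I^{+}(\rho)$ (that fact is downstream in your implication chain). A cleaner route is to verify $\prem(\PH_\f,\A^{+}(\rho))=0$ for all $\f\in L_\rho$, so that the Ritt-reduction relation \bref{eq-prem} supplies a premultiplier in $\mathbb{I}_{\A^{+}(\rho)}$ automatically, and then to use the characteristic-set property of $\A^{+}(\rho)$ for the reverse inclusion. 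In $(3)\Rightarrow(1)$, you rightly flag normality as the crux, but the ``propagate and remove'' plan is only heuristic: in a representation $\Y^{\bf m} P=\sum_j Q_j\,\sigma^{k_j}(\PH_{\f_j})$ the cancellations across summands need not respect any one $\sigma$-monomial factor, so there is no obvious local move that strips $\Y^{\bf m}$. The usual device is to pass to the $\sigma$-Laurent polynomial ring, where $\sigma$-monomials are units, and identify $\sat(\A^{+}(\rho))$ as the contraction of the Laurent ideal generated by the $\PH_{\f_j}$; normality then drops out of that description. As written, these two steps are outlines of the right strategy but not yet arguments.
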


As a direct consequence of Proposition \ref{prop-1} and Theorem \ref{th-nbgb1}, we have
\begin{cor}\label{cor-nbgb1}
Let $\rho$ be a partial character over $\Zxn$.
Then $\PH_{L_\rho}$ is a $\sigma$-Gr\"obner basis of $\ID^+(\rho)$.
\end{cor}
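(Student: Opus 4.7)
The plan is to reduce the claim to Buchberger's algebraic $S$-pair criterion in the polynomial ring $\F[\Theta(\Y)]$. By the observation recorded just after Definition~\ref{def-gb}, $\PH_{L_\rho}$ is a $\sigma$-Gr\"obner basis of $\ID^+(\rho)$ if and only if $\Theta(\PH_{L_\rho})$ is an algebraic Gr\"obner basis of $\ID^+(\rho)$ viewed as an ordinary ideal in $\F[\Theta(\Y)]$. So I would first verify that $\PH_{L_\rho}$ is already closed under $\sigma$: since $L_\rho$ is a $\Zx$-module, $x^k\h\in L_\rho$ for every $\h\in L_\rho$ and $k\in\N$, and the partial-character identity $\rho(x^k\h)=\sigma^k(\rho(\h))$ together with $(x^k\h)^{\pm}=x^k\h^{\pm}$ yield $\sigma^k(\PH_\h)=\PH_{x^k\h}\in\PH_{L_\rho}$. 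Hence $\Theta(\PH_{L_\rho})=\PH_{L_\rho}$, and it suffices to establish the algebraic Gr\"obner basis property for $\PH_{L_\rho}$.

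Next I would check Buchberger's $S$-pair criterion. Fix $\h,\h'\in L_\rho$ and, after replacing either by its negative if needed (which only rescales the corresponding binomial), assume both are normal, so $\LM(\PH_\h)=\Y^{\h^+}$ and $\LM(\PH_{\h'})=\Y^{\h'^+}$ with leading coefficient $1$. Let ${\bf v}\in\N[x]^n$ satisfy $\Y^{{\bf v}}=\lcm(\Y^{\h^+},\Y^{\h'^+})$, and set ${\bf w}=\min({\bf v}-\h,\,{\bf v}-\h')$, taken coordinate- and coefficient-wise in $\N[x]^n$. A direct check then gives the decompositions ${\bf v}-\h={\bf w}+(\h'-\h)^+$ and ${\bf v}-\h'={\bf w}+(\h'-\h)^-$, and combining them with the homomorphism property $\rho(\h'-\h)=\rho(\h')\rho(\h)^{-1}$ a straightforward expansion produces
\[
S(\PH_\h,\PH_{\h'})=-\rho(\h)\,\Y^{{\bf v}-\h}+\rho(\h')\,\Y^{{\bf v}-\h'}=-\rho(\h)\,\Y^{{\bf w}}\cdot\PH_{\h'-\h}.
\]
Since $L_\rho$ is a $\Zx$-module, $\h'-\h\in L_\rho$, so $\PH_{\h'-\h}\in\PH_{L_\rho}$, and the displayed identity exhibits $S(\PH_\h,\PH_{\h'})$ as reducing to zero in a single step by $\PH_{\h'-\h}$. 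Buchberger's criterion is therefore satisfied.

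The only point that requires care is the exponent bookkeeping in $\N[x]^n$: one has to treat $\N[x]$ as an ordered semigroup under coefficient-wise comparison so that $\min$, $\max$, and $(\cdot)^{\pm}$ behave exactly as for ordinary $\N^n$ exponent vectors, and verify the two decomposition identities for ${\bf v}-\h$ and ${\bf v}-\h'$. Once this elementary check is in place, the whole argument rests on precisely the two defining properties of a partial character: that $L_\rho$ is a $\Zx$-module (so $\h'-\h\in L_\rho$) and that $\rho$ is a group homomorphism (so $\rho(\h'-\h)=\rho(\h')\rho(\h)^{-1}$).
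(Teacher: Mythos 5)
Your proof is correct, but it takes a genuinely different route from the paper's. The paper deduces the corollary ``as a direct consequence'' of Proposition~\ref{prop-1} and Theorem~\ref{th-nbgb1}: since $\ID^+(\rho)$ is a binomial $\sigma$-ideal, Proposition~\ref{prop-1} guarantees its reduced $\sigma$-Gr\"obner basis consists of $\sigma$-binomials; by the ``Furthermore'' clause of Theorem~\ref{th-nbgb1} each such binomial is, up to a nonzero scalar, some $\PH_\f$ with $\f\in L_\rho$; so the reduced $\sigma$-Gr\"obner basis sits inside $\PH_{L_\rho}$ (up to scalars), which in turn sits inside $\ID^+(\rho)$, and therefore $\PH_{L_\rho}$ is itself a $\sigma$-Gr\"obner basis. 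You instead run Buchberger's $S$-pair criterion from scratch in $\F[\Theta(\Y)]$: you first note that $\Theta(\PH_{L_\rho})=\PH_{L_\rho}$ because $L_\rho$ is a $\Zx$-module and $\rho(x^k\h)=\sigma^k(\rho(\h))$, then compute $S(\PH_\h,\PH_{\h'})=-\rho(\h)\Y^{\bf w}\PH_{\h'-\h}$ with ${\bf w}=\min({\bf v}-\h,{\bf v}-\h')$. The exponent identities ${\bf v}-\h={\bf w}+(\h'-\h)^+$ and ${\bf v}-\h'={\bf w}+(\h'-\h)^-$ are correct (they follow from $a-\min(a,b)=(a-b)^+$ applied coefficient- and coordinate-wise), and the $S$-polynomial does reduce to zero in one step by $\PH_{\h'-\h}\in\PH_{L_\rho}$. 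What your route buys is self-containment: it does not invoke Proposition~\ref{prop-1} (a nontrivial structure theorem about reduced $\sigma$-Gr\"obner bases of binomial ideals) nor the membership characterization in Theorem~\ref{th-nbgb1}, and it makes visible exactly why $S$-pairs collapse, namely the module structure of $L_\rho$ and the homomorphism property of $\rho$. What the paper's route buys is brevity, given that those two results are already available. One small point you might make explicit: since every polynomial involves only finitely many variables of $\Theta(\Y)$, Buchberger's criterion and the termination of reduction remain valid for the infinite generating set $\PH_{L_\rho}$ in the infinite-variable ring $\F[\Theta(\Y)]$; this is the setting of the infinite Gr\"obner bases of \cite{Kei} that the paper adopts.
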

Note that for $\f\in \Zxn$, either $\f$ or $-\f$ is normal and
we need only consider the normal vectors in the $\sigma$-Gr\"obner basis.
%
So, for simplicity, we may assume that all given vectors are normal.
We have the following criterion for the $\sigma$-Gr\"obner basis of normal binomial $\sigma$-ideals.
\begin{cor}\label{cor-nbgb2}
Let $\rho$ be a partial character over $\Zxn$ and $H\subset L_\rho$.
Then $\PH_H$ is a $\sigma$-Gr\"obner basis of $\ID^+(\rho)$ if and only if
for any normal $\g \in L_\rho$, there exist $\h\in H$ and $j\in \N$, such that
$\g^+ - x^j \h^+ \in\N[x]^n$.
\end{cor}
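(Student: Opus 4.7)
The plan is to use Corollary \ref{cor-nbgb1} as the starting point: $\PH_{L_\rho}$ is already known to be a $\sigma$-Gr\"obner basis of $\ID^+(\rho)$, so the task reduces to comparing the ``covering power'' of $\PH_H$ with that of $\PH_{L_\rho}$ at the level of leading monomials. The governing observation is that for a normal $\f \in L_\rho$ one has $\LM(\PH_\f) = \Y^{\f^+}$, and the divisibility $(\Y^{\h^+})^{x^j} \mid \Y^{\g^+}$ in $\F[\Theta(\Y)]$ is, by inspection of exponent vectors, exactly the condition $\g^+ - x^j\h^+ \in \N[x]^n$. Both directions of the corollary are then translations of the Gr\"obner basis property through this dictionary.

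For the ``only if'' direction I would pick an arbitrary normal $\g \in L_\rho$, form $\PH_\g = \Y^{\g^+} - \rho(\g)\Y^{\g^-} \in \ID^+(\rho)$, and apply Definition \ref{def-gb} directly. Because $\g$ is normal, $\LM(\PH_\g) = \Y^{\g^+}$; since $H$ consists (by the standing convention noted just before the corollary) of normal vectors, every $\PH_\h$ with $\h \in H$ has leading monomial $\Y^{\h^+}$. The Gr\"obner basis property then yields $\h \in H$ and $j \in \N$ with $(\Y^{\h^+})^{x^j} \mid \Y^{\g^+}$, which is exactly $\g^+ - x^j \h^+ \in \N[x]^n$.

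For the ``if'' direction I would take an arbitrary $P \in \ID^+(\rho)$ and invoke Corollary \ref{cor-nbgb1} to obtain $\f \in L_\rho$ (which we may take normal, replacing $\f$ by $-\f$ if needed, since $\PH_\f$ and $\PH_{-\f}$ are scalar multiples) and $k \in \N$ such that $(\Y^{\f^+})^{x^k} \mid \LM(P)$. Next I would observe that $x^k \f \in L_\rho$ (as $L_\rho$ is a $\Zx$-module) and is again normal, since multiplying the last nonzero component by $x^k$ preserves the sign of its leading coefficient. Applying the hypothesis to $\g := x^k \f$ produces $\h \in H$ and $j \in \N$ with $(x^k \f)^+ - x^j \h^+ \in \N[x]^n$. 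Using $(x^k \f)^+ = x^k \f^+$, this rewrites as $(\Y^{\h^+})^{x^j} \mid (\Y^{\f^+})^{x^k} \mid \LM(P)$, verifying Definition \ref{def-gb} for $\PH_H$.

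I do not anticipate a genuine obstacle: the proof is almost entirely a bookkeeping argument once one identifies leading monomials with positive parts. The only points requiring care are (i) the identity $(x^k \f)^+ = x^k \f^+$, which needs $\f$ to be normal (for a non-normal $\f$ one would have to shift signs), and (ii) that $\PH_H$ and $\PH_{L_\rho}$ live in the same algebraic polynomial ring $\F[\Theta(\Y)]$ so that Definition \ref{def-gb}, phrased as divisibility of leading monomials up to $\sigma$-shifts, may be compared term-by-term. Both are handled by the normality convention and the elementary equivalence between monomial divisibility and componentwise nonnegativity of exponent vectors.
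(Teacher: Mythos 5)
Your proof is correct and follows essentially the same route as the paper: start from Corollary \ref{cor-nbgb1}, reduce the Gr\"obner basis property to divisibility of leading monomials, and translate that divisibility into the exponent condition $\g^+ - x^j\h^+ \in \N[x]^n$. The paper compresses both directions into one equivalence chain, while you spell them out; one minor overstatement is your remark that $(x^k\f)^+ = x^k\f^+$ requires $\f$ to be normal---this identity holds componentwise for any $\f$, though normality of $\f$ is indeed needed so that $x^k\f$ is normal and the hypothesis applies to it.
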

\begin{proof}
By Corollary \ref{cor-nbgb1}, $\PH_{L_\rho}$ is a $\sigma$-Gr\"obner basis of $\ID^+(\rho)$.
Then  $\PH_{H}$ is a $\sigma$-Gr\"obner basis of $\ID^+(\rho)$ if and only if
for any normal $\g \in L_\rho$, there exist $\h\in H$ and $j
\in \N$ such that
$\LM(x^j\PH_\h)|\LM(\PH_\g)$, which is equivalent to
$\g^+ - x^j \h^+ \in\N[x]^n$.
\end{proof}

\begin{exmp}\label{ex-2}
Let $\f=[1-x,x-1]$, $L=(\f)_{\Zx}$, and $\rho$ the trivial
partial character on $L$, that is, $\rho(\h)=1$ for $\h\in$L.
Then $\P_\f=y_1y_2^{x}-y_1^{x}y_2$.
By Theorem \ref{th-nbgb1}, $\I^{+}(\rho)=\sat(\P_\f)$.
By Corollary \ref{cor-nbgb1}, a $\sigma$-Gr\"obner basis of $\ID^+(\rho)$
is $\{ \Y^{\g^+}- \Y^{\g^-} \,|\, \g=h\f, h \in\Z[x], \lc(h)>0 \}$.
By Example \ref{ex-11},  $\sat(\P_\f)= [\P_\f,y_1y_3-1]\cap\Q\{y_1, y_2\}=
[y_1y_2^{x^{i}}-y_1^{x^{i}}y_2\,|\,  i\in\N_{>0}]$, and
a reduced $\sigma$-Gr\"obner basis of $\ID^+(\rho)$ is
$\{y_1y_2^{x^{i}}-y_1^{x^{i}}y_2\,|\,  i\in\N_{>0}\}$.
\end{exmp}

\section{Criteria for finite $\sigma$-Gr\"{o}bner basis}
In this section, we will give a criterion for the $\sigma$-Gr\"obner basis of
a normal binomial $\sigma$-ideal in $\F\{y\}$ to be finite,
where $y$ is a $\sigma$-indeterminate.
Without loss of generality, we assume $\rho(h) =1$ for all partial characters $\rho$ over $\Zx$ and $h\in L_\rho$.

\subsection{Case 1: characteristic set contains a single $\sigma$-polynomial}
In this section, we consider the simplest case: $n=1$ and $L_\rho=(f)_{\Zx}$ is generated by one polynomial $f\in\Zx$.
We will see that even this case is highly nontrivial.
For $g\in\Zx$, we use $\lc(g)$, $\lm(g)$, and $\lt(g)$ to represent the
leading coefficient, leading monomial, and leading term of $g$, respectively.

In the rest of this section, we assume $f\in\Zx$ and $\lc(f) >0$.
Then $\PH_f = y^{f^+}-y^{f^-}$ and $\LT(\PH_f) = y^{f^+}$ under a monomial order compatible with the $\sigma$-structure.
By Theorem \ref{th-nbgb1}, all normal binomial $\sigma$-ideals in $\F\{y\}$ whose
characteristic set consists of a single $\sigma$-polynomial  can be written as the following form:
\begin{eqnarray}\label{eq-idf1}
\ID_f = \sat(\PH_f)=[y^{h^+}-y^{h^-}\,|\, h = fg \in(f)_{\Zx}, \forall (g\in \Z[x],\lc(g)>0) ].
\end{eqnarray}
In this section, we will give a criterion for $\ID_f$
to have a finite $\sigma$-Gr\"obner basis. Define
\begin{eqnarray}
\Phi_0 &\triangleq& \{f\in\mathbb{Z}[x]\,|\,\textup{lt}(f)=f^+\}.\cr
\Phi_1 &\triangleq& \{f\in\mathbb{Z}[x]\,|\,fg\in \Phi_0 \textup{ for some monic polynomial } g \in \mathbb{Z}[x]\}.
\end{eqnarray}
We now give the main result of this section, which can be deduced from Lemma \ref{lm-s2} and Lemma \ref{lm-s4}.
\begin{thm}\label{th-sg1}
 $\ID_f$ in \bref{eq-idf1} has a finite $\sigma$-Gr\"obner basis under a monomial order compatible
 w.r.t the $\sigma$-structure if and only if $f\in\,$$\Phi_1$.
\end{thm}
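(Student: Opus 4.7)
The plan is to recast the finite $\sigma$-Gr\"obner basis condition as a combinatorial coverage property on positive parts of integer polynomials, and then prove the two implications as separate lemmas matching Lemmas~\ref{lm-s2} and~\ref{lm-s4}. Since $\rho$ is the trivial partial character on $L_\rho=(f)_{\Zx}$, every normal element of this lattice has the form $fp$ with $\lc(p)>0$, so by Corollary~\ref{cor-nbgb2} the set $\{\PH_{fq_1},\ldots,\PH_{fq_k}\}$ is a $\sigma$-Gr\"obner basis of $\ID_f$ if and only if for every $p\in\Zx$ with $\lc(p)>0$ there exist an index $i$ and some $j\in\N$ with
\begin{equation*}
(fp)^+-x^j(fq_i)^+\in\N[x].
\end{equation*}
The theorem therefore reduces to showing that this coverage condition admits a \emph{finite} witness set $Q=\{q_1,\ldots,q_k\}\subset\Zx$ if and only if $f\in\Phi_1$.

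For sufficiency, fix a monic $g^*$ with $h^*:=fg^*\in\Phi_0$, so that $(h^*)^+$ equals the single monomial $\lc(f)x^{\deg h^*}$. For any normal $p$ with $\deg p\ge\deg g^*$ the choice $j=\deg p-\deg g^*$ yields $x^j(h^*)^+=\lc(f)x^{\deg(fp)}$, which is coefficient-wise dominated by the leading term $\lc(f)\lc(p)x^{\deg(fp)}$ of $(fp)^+$ because $\lc(p)\ge 1$; hence $\{fg^*\}$ alone covers every normal $p$ of sufficiently high degree. For the residual low-degree range $\deg p<\deg g^*$, the product $fp$ has degree less than $\deg f+\deg g^*$, so $(fp)^+$ lies in the $\N$-submodule of $\N[x]$ spanned by the finitely many monomials $1,x,\ldots,x^{\deg f+\deg g^*-1}$; Dickson's lemma on this finite-rank submodule yields only finitely many coefficient-wise minimal values of $(fp)^+$, and one representative normal $p$ per minimum contributes the remaining finitely many entries of $Q$.

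For necessity, I would argue contrapositively from $f\notin\Phi_1$ that no finite $Q$ can satisfy the coverage condition. Testing the condition on the family $p_N=x^N$ gives $(fp_N)^+=x^Nf^+$, so by pigeonhole some single $q_{i_0}\in Q$ witnesses infinitely many $N$'s; since $\mathrm{supp}((fq_{i_0})^+)$ can fit inside only finitely many shifts of $\mathrm{supp}(f^+)$, the shift $t:=N-j_N$ is constant along an infinite subsequence, forcing $(fq_{i_0})^+\le x^tf^+$ coefficient-wise. The main step is then to feed in perturbed inputs $p=x^N+r$ with $r\in\Zx$ constructed, by induction on the number of positive coefficients of $f^+$, to successively cancel the non-leading positive contributions of $x^Nf^+$ to $(fp)^+$. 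If the procedure terminates in a normal $p$ whose $(fp)^+$ is a single monomial, the monic rescaling of $p$ (via a suitable auxiliary multiplication preserving $\Phi_0$-membership) provides a monic $g^*$ with $fg^*\in\Phi_0$, contradicting $f\notin\Phi_1$; otherwise the intermediate $(fp)^+$-patterns form an infinite coefficient-wise antichain, which no finite $Q$ can cover. The main obstacle is exactly this inductive cancellation: one must verify that the required integer-coefficient perturbations $r$ exist in $\Zx$ rather than merely in $\Q[x]$, and that the procedure either produces a genuine $\Phi_1$-witness or an honest infinite antichain---a dichotomy ultimately tied to the root-based analysis carried out in Section~4. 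A secondary, less severe difficulty in sufficiency is certifying that each Dickson-minimal value of $(fp)^+$ is actually realized by some admissible normal $p$.
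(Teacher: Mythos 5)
Your proof of the sufficiency direction ($f\in\Phi_1 \Rightarrow \ID_f$ has a finite $\sigma$-Gr\"obner basis) is essentially sound and parallels the paper's Lemma~\ref{lm-s2}: the single element $\PH_{fg^*}$ handles all $p$ of degree $\ge \deg g^*$, and the low-degree residual is finite. The paper closes the residual by observing that $\ID_f\cap\F[y,\ldots,y^{x^{\deg(fg^*)}}]$ is an ideal of a Noetherian polynomial ring and therefore has a finite Gr\"obner basis, whereas you invoke Dickson's lemma directly on the positive parts $(fp)^+\subset\N^{\deg f+\deg g^*}$. These are equivalent; your flagged ``secondary difficulty'' about realizability is in fact a non-issue, since Dickson-minimal elements of the set $\{(fp)^+\}$ are by definition of the form $(fp')^+$ for some admissible $p'$.

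The necessity direction, however, has a genuine gap, and you correctly sense where it is. After the pigeonhole step (which is fine: testing on $p_N=x^N$ forces some $q_{i_0}$ with $(fq_{i_0})^+\preceq x^t f^+$ for a fixed shift $t$), you propose an ``inductive cancellation'' via perturbations $p=x^N+r$ that should either terminate in a $\Phi_1$-witness or produce an infinite antichain, and you defer the dichotomy to ``the root-based analysis carried out in Section~4.'' This deferral is misplaced: the paper's Lemma~\ref{lm-s4} is a self-contained combinatorial argument that makes no use of roots at all. Its key device is the \emph{degree-gap} function $\widetilde{\deg}(u) = \deg(u) - \deg(u^+ - \lt(u))$. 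One picks $h_1$ among the candidate basis elements with $\lc(h_1)=\lc(f)$ maximizing $\widetilde{\deg}$, writes $\lt(h_1)=ax^n$, $\lt(h_1^+-\lt(h_1))=bx^m$, and multiplies by $x^n - \lceil b/a\rceil x^m$ to obtain $s=(x^n - \lceil b/a\rceil x^m)h_1\in(f)_{\Zx}$ with $\widetilde{\deg}(s)>\widetilde{\deg}(h_1)$; one then checks that no $x^j$-shift of any candidate's positive part can divide $s^+$ without knocking the second-highest coefficient negative, so reduction never touches the leading term and $\grem(\PH_s,\Theta(\PH_H))\ne 0$. That strict increase of the degree gap under a single explicit multiplication is the missing mechanism in your proposal: it replaces your vague ``cancel the non-leading positive contributions'' with a concrete, verifiable invariant, and it sidesteps both the $\Zx$-versus-$\Q[x]$ issue and any appeal to Section~4 entirely.
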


For two polynomials $h_1$ and $h_2\in\,$$\mathbb{Z}[x]$, denote  $h_1\succeq\,$$h_2$ if $h_1-\,$$h_2\in\mathbb{N}[x]$.
For $h_1$ and $h_2\in\,$$\mathbb{N}[x]$, we have $h_1\succeq\,$$h_2$ if and only if $y^{h_2}\,|\,y^{h_1}$.
\begin{lem}\label{lm-s1}
If $f\in \Phi_0$, then $\{\PH_f\}$ is a $\sigma$-Gr\"obner basis  of $\ID_f$.
\end{lem}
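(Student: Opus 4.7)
The plan is to invoke Corollary~\ref{cor-nbgb2} with $H=\{f\}$ and $L_\rho=(f)_{\Z[x]}$. According to that corollary, $\{\PH_f\}$ is a $\sigma$-Gr\"obner basis of $\ID_f$ if and only if for every normal $g\in L_\rho$ there exists $j\in\N$ such that $g^+-x^jf^+\in\N[x]$. So the entire task reduces to verifying this single combinatorial condition, and the hypothesis $f\in\Phi_0$ (meaning $f^+=\lt(f)$ is a single monomial) is precisely what makes it easy.

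Concretely, I would write any element of $L_\rho$ as $g=hf$ with $h\in\Z[x]$. The normality of $g$ forces $\lc(g)=\lc(h)\lc(f)>0$; since $\lc(f)>0$ we deduce $\lc(h)>0$, and as $\lc(h)\in\Z$ this gives $\lc(h)\geq 1$. Writing $c=\lc(f)$ and $d=\deg(f)$, the assumption $f\in\Phi_0$ means $f^+=cx^d$ is a single monomial. I would then take $j:=\deg(h)$, so that
\[
x^jf^+ \;=\; c\,x^{\deg(h)+d} \;=\; c\,x^{\deg(g)}.
\]

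To finish, I would observe that $x^jf^+$ is a single monomial supported at $\deg(g)$, and the coefficient of $x^{\deg(g)}$ in $g$ is $\lc(h)\,c\geq c>0$, so this monomial already appears in $g^+$ with coefficient at least $c$. Subtracting $c\,x^{\deg(g)}$ from $g^+$ therefore leaves every coefficient nonnegative, i.e.\ $g^+-x^jf^+\in\N[x]$, completing the verification.

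There is really no obstacle here: the entire content of the lemma is that when $f^+$ is a single monomial (the $\Phi_0$ condition), the leading monomial of any multiple $hf$ automatically dominates $x^{\deg(h)}f^+$ in the coefficient-wise partial order. The non-trivial work for the main theorem is reserved for the converse direction, which is handled separately in Lemma~\ref{lm-s2} and Lemma~\ref{lm-s4}.
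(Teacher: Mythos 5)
Your proof is correct and follows the same route as the paper's: invoke Corollary~\ref{cor-nbgb2} with $H=\{f\}$, write $g=fh$, choose $j=\deg(h)$, and use the fact that $f\in\Phi_0$ means $f^+$ is the single monomial $\lc(f)x^{\deg(f)}$, so $x^jf^+=\lc(f)x^{\deg(g)}\preceq g^+$. The paper phrases the final comparison as a chain of $\preceq$ inequalities through $\lt(h)$ and $\lt(g)$ rather than inspecting the coefficient of $x^{\deg(g)}$ directly, but that is a cosmetic difference only.
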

\begin{proof}
For $g\in(f)_{\Zx}$ with $\textup{lc}(g)>\,$$0$,  $\exists\,h\in\,$$\mathbb{Z}[x]$ with $\textup{lc}(h)>\,$$0$ such that $g=\,$$fh$. Since $f\in \Phi_0$, we have $\lt(f) = f^+$. Then,
$$x^{\footnotesize\deg(h)}f^+= \textup{lt}(h)f^+/\textup{lc}(h)\preceq \textup{lt}(h)f^+=\textup{lt}(h)\textup{lt}(f)= \textup{lt}(g)\preceq g^+.$$
By Corollary \ref{cor-nbgb2}, $\{\PH_f\}$ is a $\sigma$-Gr\"obner basis  of $\ID_f$.
\end{proof}

\begin{lem}\label{lm-s2}
If $f\in\,$$\Phi_1$, then $\ID_f$ has a finite $\sigma$-Gr\"obner basis.
\end{lem}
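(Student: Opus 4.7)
The plan is to exhibit an explicit finite $H\subset L_\rho=(f)_{\Zx}$ and verify via Corollary \ref{cor-nbgb2} that $\PH_H$ is a $\sigma$-Gr\"obner basis of $\ID_f$. Since $f\in\Phi_1$, fix a monic $g\in\Zx$ with $fg\in\Phi_0$, and set $d=\deg(g)$ and $N=d+\deg(f)=\deg(fg)$. Because $g$ is monic we have $\lc(fg)=\lc(f)$, and because $fg\in\Phi_0$ the positive part is the pure monomial $(fg)^+=\lc(f)\,x^N$. The candidate basis will be $H=\{fg,\alpha_1 f,\ldots,\alpha_m f\}$ for certain $\alpha_i$ of degree less than $d$ chosen below.

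The first step is to show that $\PH_{fg}$ alone covers every $\PH_{\beta f}$ with $\beta\in\Zx$ normal and $\deg\beta\ge d$. For such a $\beta$, choose $j=\deg\beta-d\ge 0$, so that $x^j(fg)^+=\lc(f)\,x^{\deg\beta+\deg f}$. Normality gives $\lc(\beta)\ge 1$, hence the coefficient of $x^{\deg\beta+\deg f}$ in $(\beta f)^+$ is $\lc(\beta)\lc(f)\ge\lc(f)$. All other coefficients of the monomial $x^j(fg)^+$ vanish, so $(\beta f)^+-x^j(fg)^+\in\N[x]$, which is exactly the criterion of Corollary \ref{cor-nbgb2}.

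The second step handles the remaining $\beta$ with $\deg\beta<d$ via Dickson's lemma. For such $\beta$ one has $\deg(\beta f)<N$, and since $\beta f$ has positive leading coefficient, $(\beta f)^+$ is a polynomial of degree strictly less than $N$ with non-negative integer coefficients, hence a point in the finite-dimensional lattice $\N^N$. By Dickson's lemma, the set $S=\{(\beta f)^+\,|\,\beta\in\Zx\text{ normal},\deg\beta<d\}\subset\N^N$ has only finitely many coordinate-wise minimal elements; pick normal $\alpha_1,\ldots,\alpha_m\in\Zx$ with $\deg\alpha_i<d$ such that $(\alpha_1 f)^+,\ldots,(\alpha_m f)^+$ realize them. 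Then for every normal $\beta$ with $\deg\beta<d$, some $(\alpha_i f)^+\le(\beta f)^+$ coordinate-wise, whence $(\beta f)^+-x^0(\alpha_i f)^+\in\N[x]$. Combining both cases, Corollary \ref{cor-nbgb2} shows $H=\{fg,\alpha_1 f,\ldots,\alpha_m f\}$ is a finite $\sigma$-Gr\"obner basis of $\ID_f$.

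The main technical content is the coefficient estimate in Step 1, which leverages the $\Phi_0$ structure of $fg$ so that a single shifted copy of $\PH_{fg}$ dominates the unique top-degree term of every high-degree $\PH_{\beta f}$. Step 2 is then a direct application of Dickson's lemma once the bound $\deg\beta<d$ confines the problem to a finite-dimensional ambient lattice; the argument is non-constructive at this stage, since the $\alpha_i$ are only asserted to exist, but this suffices for the qualitative finiteness claim of the lemma, and an explicit computation of them is naturally postponed to the algorithmic section.
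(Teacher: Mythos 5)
Your proof is correct, and it takes a genuinely different route from the paper on the ``finite part.'' Both arguments share the crucial idea that, once $h=fg\in\Phi_0$ with $g$ monic, a single shift of $\PH_h$ dominates $\PH_{\beta f}$ for all normal $\beta$ with $\deg\beta\ge\deg g$ (since $h^+$ is the single monomial $\lc(f)x^{\deg h}$, and $\lc(\beta f)\ge\lc(f)$). Where the proofs diverge is in handling the bounded-degree residue: the paper intersects $\ID_f$ with the finitely many variables $\F[y,y^x,\ldots,y^{x^{\deg h}}]$ and invokes the Hilbert basis theorem (the ideal $\ID_{\deg h}$ is a polynomial ideal, hence has a finite Gr\"obner basis, which is then seen to be a $\sigma$-Gr\"obner basis of the full $\ID_f$). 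You instead view $\{(\beta f)^+:\deg\beta<\deg g\}$ as a subset of the lattice $\N^{N}$ and apply Dickson's lemma directly to pick finitely many coordinate-wise minimal positive parts; together with $\PH_{fg}$, these form a finite $\sigma$-Gr\"obner basis via Corollary~\ref{cor-nbgb2}. The two routes are equivalent in power (Dickson's lemma is exactly the combinatorial engine behind finiteness of Gr\"obner bases for monomial/binomial ideals), but the paper's formulation buys something concrete: because $\ID_{\deg h}$ is an explicit polynomial ideal, the proof immediately yields Corollary~\ref{cor-s2} and Algorithm~2, where the basis can be computed by an ordinary Buchberger computation. Your Dickson-based version exhibits a basis in principle but, as you note yourself, leaves the $\alpha_i$ non-constructive; to recover an algorithm you would still need to bound the search, and the natural way to do that is precisely the paper's reduction to $\ID_{\deg h}$.
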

\begin{proof}
Let $h=fg\in\,\Phi_0$, where $g$ is monic.
Then $\lc(h) = \lc(f)$ and $\lt(h) = \lt(f)\lm(g)= h^+$.
$\ID_{\footnotesize{\textup{deg}}(h)}=\ID_f\bigcap \F[y,y^x,\cdots,y^{x^{\tiny{\textup{deg}}(h)}}]$
is a polynomial ideal in a polynomial ring with finitely many variables, which has a finite
Gr\"obner basis denoted by $\mathbb G_{\leqslant \footnotesize{\textup{deg}}(h)}$.
Let $\PH_u\in\ID_f$ and $\textup{lc}(u)>\,$$0$. If $\textup{deg}(u)\leqslant\,$$\textup{deg}(h)$,
then there exists a $\PH_t\in G_{\leqslant \footnotesize{\textup{deg}}(h)}$ such that  $t\preceq u$.
Otherwise, we have $\deg(u) > \deg(h)$ and $\lc(u) \ge \lc(f)$. Then
\begin{eqnarray*}
x^{\footnotesize{\textup{deg}}(u)-\footnotesize{\textup{deg}}(h)}h^+
&=&x^{\footnotesize{\textup{deg}}(u)-\footnotesize{\textup{deg}}(f)-\footnotesize{\textup{deg}}(g)}\textup{lt}(f)\textup{lm}(g)\\
&=&x^{\footnotesize{\textup{deg}}(u)-\footnotesize{\textup{deg}}(f)}\textup{lt}(f) = x^{\footnotesize{\textup{deg}}(u) - \footnotesize{\textup{deg}}(f)} \lc(f) \lm(f)
=  \lc(f) \lm(u)
\preceq \textup{lt}(u)\preceq u^+.
\end{eqnarray*}
Since that $\PH_h\in\ID_{\footnotesize{\textup{deg}}(h)}$,
by Corollary \ref{cor-nbgb2},  $\mathbb G_{\leqslant \footnotesize{\textup{deg}}(s)}$ is a finite $\sigma$-Gr\"obner basis  of $\ID_f$.
\end{proof}
\begin{cor}\label{cor-s2}
Let $f\in\,$$\Phi_1$, $h=gf\in\Phi_0$, $g$ a monic polynomial in $\Zx$ , and $D=\deg(h)$.
Then the Gr\"obner basis of the polynomial ideal
$\ID_{D}=\ID_f\bigcap \F[y,y^x,\cdots,y^{x^D}]$
is a  finite $\sigma$-Gr\"obner basis for $\ID_f$.
\end{cor}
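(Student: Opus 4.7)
The corollary is essentially the content of Lemma \ref{lm-s2}'s proof, so my plan is to repackage that argument with the constants made explicit and verify the hypothesis of Corollary \ref{cor-nbgb2}. Set $D=\deg(h)$, and let $\mathbb{G}_{\leqslant D}$ be a Gr\"obner basis of the algebraic ideal $\ID_D=\ID_f\cap\F[y,y^x,\ldots,y^{x^D}]$. Because $\ID_f$ is a normal binomial $\sigma$-ideal by Theorem \ref{th-nbgb1}, its contraction $\ID_D$ to a variable-subring is still a binomial algebraic ideal, so $\mathbb{G}_{\leqslant D}$ can be chosen to consist of $\sigma$-binomials $\PH_t=y^{t^+}-y^{t^-}$ with $t\in(f)_{\Zx}$ and $\deg(t)\leqslant D$. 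The key preliminary observation is that $\PH_h\in\ID_D$, since $h=gf\in(f)_{\Zx}$ and $\deg(h)=D$.

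By Corollary \ref{cor-nbgb2} (with $n=1$ and trivial character), it suffices to show that for every normal $u=g'f\in(f)_{\Zx}$ there exist $\PH_t\in\mathbb{G}_{\leqslant D}$ and $j\in\N$ with $u^+-x^j t^+\in\N[x]$. I would split on whether $\deg(u)\leqslant D$. If so, then $\PH_u\in\ID_D$, so the algebraic Gr\"obner basis property supplies a $\PH_t\in\mathbb{G}_{\leqslant D}$ with $t^+\preceq u^+$ and one takes $j=0$. If $\deg(u)>D$, I would repeat the monomial computation of Lemma \ref{lm-s2}: since $h\in\Phi_0$ and $g$ is monic,
\[
x^{\deg(u)-D}h^+ \;=\; x^{\deg(u)-\deg(f)}\lt(f) \;=\; \lc(f)\,\lm(u) \;\preceq\; \lt(u) \;\preceq\; u^+,
\]
where $\lc(f)\lm(u)\preceq\lt(u)$ uses that $\lc(g')\geqslant 1$. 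Because $\PH_h\in\ID_D$, some $\PH_t\in\mathbb{G}_{\leqslant D}$ has $t^+\preceq h^+$; shifting by $x^{\deg(u)-D}$ yields $x^{\deg(u)-D}t^+\preceq u^+$, and $j=\deg(u)-D$ works.

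There is essentially no technical obstacle beyond what Lemma \ref{lm-s2} already surmounted; the one point that deserves care is choosing $\mathbb{G}_{\leqslant D}$ to consist of $\sigma$-binomials so that Corollary \ref{cor-nbgb2} applies verbatim, which is possible because contractions of binomial ideals to variable-subrings remain binomial. What the corollary contributes beyond the lemma is the explicit algorithmic recipe: once a witness $h=gf\in\Phi_0$ has been found, a single algebraic Gr\"obner basis computation in the finitely many variables $y,y^x,\ldots,y^{x^{\deg(h)}}$ already delivers a finite $\sigma$-Gr\"obner basis of $\ID_f$.
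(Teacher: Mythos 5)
Your proof is correct and follows the same route the paper takes: it is a repackaging of the proof of Lemma \ref{lm-s2}, splitting on $\deg(u)\leqslant D$ versus $\deg(u)>D$ and then invoking Corollary \ref{cor-nbgb2} via the shift $j=\deg(u)-D$ and the inequality $x^{\deg(u)-D}h^+\preceq u^+$. Your explicit remark that $\mathbb{G}_{\leqslant D}$ may be taken to consist of $\sigma$-binomials $\PH_t$ with $t\in(f)_{\Zx}$ (so that Corollary \ref{cor-nbgb2} applies literally) is a detail the paper leaves implicit but is indeed needed and easily justified.
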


From the proof of Lemma \ref{lm-s2}, we have
\begin{exmp}
$f = x^2 + x + 1\in\Phi_1$, because  $(x-1)f = x^3-1\in\Phi_0$.
The finite $\sigma$-Gr\"obner basis is $\GB = \{y^{x^2 + x + 1} -1, y^{x^3}-y\}$.
\end{exmp}

Let $D$ be $\R$ or $\Z$. We will use the following new notation
$$D^{>0}[x]\triangleq\{\sum _{i=0}^na_ix^i\,|\,n\in\N,\,\forall i(a_i\in D_{>0})\}.$$

\begin{lem}\label{lm-s3}
$\mathbb{N}[x]\subseteq \Phi_1$.
\end{lem}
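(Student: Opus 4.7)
The plan is to exhibit, for any $f \in \mathbb{N}[x]$, a monic $g \in \mathbb{Z}[x]$ with $fg \in \Phi_0$; this gives $f \in \Phi_1$ directly from the definition. Write $f = a_n x^n + p$, where $n = \deg f$, $a_n = \lc(f) \ge 1$, and $p := f - a_n x^n \in \mathbb{N}[x]$ has $\deg p < n$. The choice I would make is $g := x^n - p$, which is monic in $\mathbb{Z}[x]$ precisely because $\deg p < n$.

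The verification is then a one-line expansion:
$$fg = (a_n x^n + p)(x^n - p) = a_n x^{2n} - (a_n - 1)\, p(x)\, x^n - p(x)^2.$$
Since $a_n \ge 1$ and $p \in \mathbb{N}[x]$, both summands $(a_n-1)\, p(x)\, x^n$ and $p(x)^2$ lie in $\mathbb{N}[x]$, and both have degree at most $2n - 1 < 2n$. Therefore every non-leading coefficient of $fg$ is $\le 0$, i.e.\ $(fg)^+ = a_n x^{2n} = \lt(fg)$, so $fg \in \Phi_0$. The edge cases $f = 0$ and $f$ a monomial are absorbed by the same construction (in which case $p = 0$ and $fg = a_n x^{2n}$).

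There is no substantive obstacle here beyond guessing the correct $g$; the lemma's content is really just the observation that the factorization $(a_n x^n + p)(x^n - p)$ places all of the positive contribution into the single leading monomial $a_n x^{2n}$, while everything else appears with a minus sign in front of a polynomial with non-negative coefficients. After that, the verification is purely mechanical and no induction, resultant argument, or appeal to the later root-based criteria is needed.
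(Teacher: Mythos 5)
Your argument is correct, and it is a genuinely different and cleaner route than the paper's. The paper's proof first multiplies $f$ by $1 + x + \cdots + x^{n-d}$ (with $d$ the multiplicity at $0$) to obtain a polynomial $s$ with strictly positive coefficients in its support range, then multiplies $s$ by $x - M$ for a sufficiently large integer $M = \lceil\max_i b_{i-1}/b_i\rceil + 1$ to push all non-leading coefficients negative; the resulting monic cofactor is $(1 + x + \cdots + x^{n-d})(x - M)$, of degree $n - d + 1$. Your construction dispenses with both steps: writing $f = a_n x^n + p$ with $p = f - \lt(f) \in \mathbb{N}[x]$ and choosing $g = x^n - p$, the identity $fg = a_n x^{2n} - (a_n-1)\,p\,x^n - p^2$ exhibits $fg$ as $a_n x^{2n}$ minus a polynomial in $\mathbb{N}[x]$ of degree $< 2n$, so $fg \in \Phi_0$ by inspection. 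What your approach buys is explicitness and brevity: the cofactor $g$ is written in closed form, there is no auxiliary constant $M$ to estimate, and the verification is a single expansion with no intermediate normalization step. The paper's two-step construction can yield a cofactor of smaller degree when $f$ has high multiplicity at $0$, but that bound is not used downstream (Algorithm 2, Step 2 relies on the effective Polya bound instead), so the advantage is cosmetic. One small caveat applies to both proofs: the case $f = 0$ is implicitly excluded (the paper needs $g \ne 0$ to define $d$; you need $\deg f$ defined), but this is harmless since $\Phi_0$ and $\Phi_1$ are only meaningful for polynomials with a positive leading coefficient.
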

\begin{proof}
Let $g=a_nx^n+a_{n-1}x^{n-1}+\cdots+a_0\in\,$$\mathbb{N}[x]$ with $d$$=\textup{max}\{d\in\,$$\mathbb{N}\,|\,x^d\,|\,g\}$ the multiplicity of $f$ at $0$. Then $a_d>\,$$0$. Let $s=\,$$(x^{n-d}+x^{n-d-1}+\cdots+1)g=a_nx^{2n-d}+(a_n+a_{n-1})x^{2n-d-1}+\cdots+(a_n+\cdots+a_d)x^{n}+(a_{n-1}+\cdots+a_d)x^{n-1}+\cdots+a_dx^d$. %
Rewrite $s = b_{2n-d}x^{2n-d}+\cdots+b_dx^d$.
Then $s/x^d\in\Z^{>0}[x]$.
Let $M= \lceil\textup{max}\{b_{i-1}/b_i\,|\,d+1\leqslant i\leqslant 2n-d\} \rceil +1$.
Then $(x-M)s=b_{2n-d}x^{2n-d+1}+(b_{2n-d-1}-Mb_{2n-d})x^{2n-d}+\cdots+(b_d-Mb_{d+1})x^{d+1}-Mb_dx^d\in\Phi_0$.
So both $s$ and $g$ are in $\Phi_1$.
\end{proof}

\begin{lem}\label{lm-s4}
If $f \not\in \Phi_1$, then $\ID_f$ does not have a finite $\sigma$-Gr\"obner basis.
\end{lem}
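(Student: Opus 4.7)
I would prove the contrapositive: assume $\ID_f$ admits a finite $\sigma$-Gr\"obner basis and deduce $f\in\Phi_1$. By Proposition~\ref{prop-1} and Corollary~\ref{cor-nbgb1}, the reduced basis may be taken in the form $\{\PH_{h_1},\dots,\PH_{h_k}\}$ with $h_i=q_if\in(f)_{\Z[x]}$ normal ($\lc(q_i)>0$); Corollary~\ref{cor-nbgb2} then says every normal $g\in(f)_{\Z[x]}$ admits some $i$ and $j\in\N$ with $g^+-x^jh_i^+\in\N[x]$. The strategy is to feed this covering condition a carefully chosen family of test polynomials to force one of the $h_i$ into $\Phi_0$, and then to lift the resulting witness to a monic one.

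The first probe uses the two-island family $g_{M,N}=(x^M-N)f$ for $M>\deg f+\max_i\deg h_i$ and $N$ large, so that $g_{M,N}^+=x^Mf^++Nf^-$ decomposes into disjoint high and low islands $M+\mathrm{supp}(f^+)$ and $\mathrm{supp}(f^-)$. Any covering shift $x^jh_i^+\preceq g_{M,N}^+$ must fall in a single island, and letting $N$ grow trivializes the coefficient constraint in the low island. A pigeonhole over the finite index $i$ yields: every $h_i$ satisfies either
\[
\text{(A)}\ h_i^+\preceq x^{c_i}f^+\text{ for some }c_i\ge 0,\qquad\text{or}\qquad \text{(B)}\ \mathrm{supp}(h_i^+)\subseteq c_i+\mathrm{supp}(f^-)\text{ for some }c_i\ge 0.
\]

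The second probe constructs a normal $g\in(f)_{\Z[x]}$ whose $\mathrm{supp}(g^+)$ contains no coefficient-wise shift of $f^+$ and no shift of $\mathrm{supp}(f^-)$ large enough to host a two-term $h^+$. A workable candidate is $g=qf$ with $q$ built as a specific $\Z$-linear combination of shifts of $f$, chosen so that $\mathrm{supp}(g^+)$ breaks into sparse blocks defeating both (A)- and (B)-type shapes of size $\ge 2$; the hypothesis $f\notin\Phi_1$ is what guarantees such $g$ can be produced (intuitively, the absence of a monic $q$ with $qf\in\Phi_0$ prevents $\mathrm{supp}(g^+)$ from being forced into a shape matching $f^+$). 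Then the only way $\GB$ can cover $\PH_g$ is via an $h_{i_0}$ with $|\mathrm{supp}(h_{i_0}^+)|=1$, i.e.\ $h_{i_0}\in\Phi_0$.

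Finally, I would upgrade $h_{i_0}=q_{i_0}f\in\Phi_0$, $a=\lc(q_{i_0})>0$, to a monic witness. If $a=1$, $q_{i_0}$ itself works. Otherwise, adapting the device in the proof of Lemma~\ref{lm-s3}, multiply $q_{i_0}f$ first by a geometric factor $1+x+\cdots+x^{m-1}$ to make its trailing coefficients strictly positive, and then by $(x-M)$ for $M$ sufficiently large to restore the $\Phi_0$ shape with leading coefficient $\lc(f)$; the combined multiplier is monic, so the induced cofactor of $f$ in the final product is monic, exhibiting $f\in\Phi_1$ and contradicting the assumption. The main obstacle is the second probe: engineering the test polynomial whose positive-part support simultaneously defeats (A)- and (B)-type coverings by the finite fixed family $\{h_i\}$, which requires careful degree and coefficient bookkeeping and is where the structural hypothesis $f\notin\Phi_1$ enters most substantively.
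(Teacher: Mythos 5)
Your proposal has genuine gaps and does not constitute a proof.

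The first issue is in the pigeonhole step. Feeding the covering condition the family $g_{M,N}=(x^M-N)f$ tells you only about the $h_i$'s that actually get used to cover members of that family; it does not constrain the $h_i$'s that never appear. So the conclusion ``every $h_i$ satisfies (A) or (B)'' does not follow. Without that, the second probe cannot work as described: you want a $g$ whose only possible covers are the $\Phi_0$ elements of $H$, but any $h_i$ whose structure you have not pinned down in the first probe could cover $g$ without forcing a $\Phi_0$ witness.

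The second issue is the second probe itself, which you acknowledge is ``the main obstacle.'' You describe it only as ``a specific $\Z$-linear combination of shifts of $f$'' with support engineered to defeat (A)- and (B)-type coverings, but no construction is given, and it is far from clear one exists that simultaneously defeats the (A)-island (coefficientwise below a shift of $f^+$) and the (B)-island (support inside a shift of $\mathrm{supp}(f^-)$). The hypothesis $f\notin\Phi_1$ is invoked to ``guarantee such $g$ can be produced,'' but this is exactly the content that would need to be proved, and it is where the argument stops. The third issue is the upgrade step: if $h_{i_0}=q_{i_0}f\in\Phi_0$ with $\lc(q_{i_0})=a>1$, it is not enough to post-multiply by $1+x+\cdots+x^{m-1}$ and then $(x-M)$; the combined multiplier is monic, but the resulting cofactor of $f$ in the product is still $q_{i_0}\cdot(\text{monic})$, which is not monic, so you do not obtain $f\in\Phi_1$. (Compare Lemma~\ref{lm-s3}: the geometric-factor trick there is applied to $g\in\N[x]$ itself, not to a non-monic cofactor.)

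The paper's proof avoids all of this by working only inside $H_c=\{h\in H:\lc(h)=\lc(f)\}$, whose elements have monic cofactors, so $f\notin\Phi_1$ immediately gives $H_c\cap\Phi_1=\emptyset$ (no upgrade needed). It then introduces the degree-gap invariant $\widetilde{\deg}$, takes $h_1\in H_c$ with maximal $\widetilde{\deg}$, and multiplies $h_1$ by a single cleverly chosen binomial $x^n-cx^m$ to produce $s\in(f)_{\Zx}$ with $\widetilde{\deg}(s)>\widetilde{\deg}(h_1)$ and $\lc(s)=\lc(f)$. No element of $H_c$ can cover the two highest monomials of $s^+$ (their degree gap is too small), and no element of $H\setminus H_c$ can match its leading coefficient. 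This one-shot argument replaces both of your probes and sidesteps the classification into (A)/(B) entirely. If you want to salvage your approach, the central missing piece is a concrete realization of the second probe; alternatively, adopting the degree-gap invariant is a shorter route to the same contradiction.
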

\begin{proof}
Suppose otherwise, $\ID_f$ has a finite $\sigma$-Gr\"obner basis  $\mathbb G = \PH_H$,
where $H=\{f_1,\cdots,f_l\}\subset\Zx$ with each $\textup{lc}(f_i)>0$.
Since $f$ has the lowest degree in $(f)_{\Zx}$, we have $f\in H$.

Let $H_c\triangleq \{h\in H\,|\,\lc(h) = \lc(f)\}$.
Since $f\notin \Phi_1$, we have $H_c\bigcap \Phi_1 = \emptyset$.
By Lemmas \ref{lm-s1} and \ref{lm-s3}, for all $h \in H_c$,
$h^+$ has at least two terms and  $h^-$ has at least one term.
For $u\in\Zx$ with $\lc(u)>0$,
define a function
\begin{equation}\label{eq-deg}
\widetilde{\deg}(u) = \deg(u) - (\deg(u^+ - \lt(u)))
\end{equation}
which is the degree gap between the first two highest monomials of $u^+$.
Suppose $h_1$ is an element in $H_c$ such that
$\widetilde{\deg}(h_1) = \max \{\widetilde{\deg}(h)\,|\, h\in H_c\}$.
$h_1$ exists because $f\in H_c\neq \emptyset$ and $H_c$ is a finite set.
Denote $\textup{lt}(h_1)\triangleq ax^n$, $\tilde h_1\triangleq h_1-\textup{lt}(h_1)$, $\textup{lt}(\tilde h_1^+)\triangleq b x^m$, and $\tilde{\tilde h}_1^+\triangleq \tilde h_1^+-\textup{lt}(\tilde h_1^+)$. Then $h_1=ax^n+bx^m+\tilde{\tilde h}_1^+ - h_1^-$.
Since $h_1\not\in\Phi_1$, we have $ab >0$.
Let $c\triangleq \lceil b/a\rceil \ge 1$ and
$$s=
(x^n-cx^m)h_1=ax^{2n}+x^n\tilde{\tilde h}_1^++cx^mh_1^--(ac-b)x^{m+n}-cx^m\tilde h_1^+-x^nh_1^-.$$
We have $s^+\preceq s_0\triangleq ax^{2n}+x^n\tilde{\tilde h}_1^++cx^mh_1^-$,
and  $\widetilde{\deg}(s) = \deg (s)-\deg (s^+-\textup{lt}(s))\ge
\widetilde{\deg}(s_0)$
$ =\deg (s_0)-\deg (s_0^+-$ $\textup{lt}(s_0))>n-m$
$= \widetilde{\deg}(h_1) = \deg (h_1)-\deg (h_1^+-\textup{lt}(h_1)).$

Since $\PH_H$ is a $\sigma$-Gr\"obner basis of $\ID_f$,
there exist $h\in H$ and $j\in\N$
such that $t=s^+ - x^jh^+\in\N[x]$.
We claim $\lt(t) = \lt(s^+)$.
If $h\in H_c$, then $\widetilde{\deg}(s)> \widetilde{\deg}(h)$.
Note that $\deg(s^+)= \deg(x^jh)$  implies that
the coefficient of the second largest monomial of $s^+ - x^jh$ is negative
contradicting to the fact  $s^+ - x^jh\in\N[x]$.
As a consequence, we must have $\deg(s^+)>  \deg(x^jh)$ and the claim is proved in this case.
Now let $h\in H\backslash H_c$.
Since $\textup{lc}(h)>\textup{lc}(s)=\textup{lc}(f)$,
we have $\deg (x^jh)<\deg (s)$ which
implies $\lt(t) = \lt(s^+)$. The claim is proved.
The fact $\lt(t) = \lt(s^+)$ implies that when computing the
normal form $\PH_u=\grem(\PH_s,\Theta(\PH_{H}))$, we always have
$\lt(u) = \lt(s)$. As a consequence, $\PH_u\ne0$ which contradicts to
the fact that $\PH_{H}$ is a $\sigma$-Gr\"obner basis of
$\ID_f$ and $s\in(f)_{\Zx}$.
\end{proof}

Note that the proof of Lemma \ref{lm-s4} gives a method to
construct infinitely many elements in a $\sigma$-Gr\"bner basis as
shown in the following example.
\begin{exmp}\label{ex-s2}
Let $f = x^2-2 x + 1\notin\Phi_1$.
In the proof of Lemma \ref{lm-s4}, $c= \lceil b/a\rceil = 1$ and
$s_1 = (x^{2}-1)f = x^4 + 2x - 2x^3 -1$.
Repeat the above procedure to $s_1$, we obtain
$s_2 = (x^4-2x)s_1 = x^8 + 3x^4+2x - 2x^7-4x^2$.
Then $\widetilde{\deg}(f) < \widetilde{\deg}(s_1) < \widetilde{\deg}(s_2)$
and  $\PH_{s_i}$ is in a $\sigma$-Gr\"obner basis for all $i$.
Thus any $\sigma$-Gr\"obner basis of $\ID_f$ is infinite.
We can show that a minimal $\sigma$-Gr\"{o}bner basis is $\GB = \{y^{x^{2i}+1}-y^{2x^i}\,|\,i \in\Z_{>0}\}\bigcup\{y^{x^{2i+1}+1} -y^{x^{i+1}+x^i}\,|\,i\in\Z_{>0}\}$.
\end{exmp}

\subsection{Finite $\sigma$-Gr\"{o}bner bases for normal binomial $\sigma$-ideals}
In this section, we consider the general normal binomial $\sigma$-ideals in $\F\{y\}$.
By Theorem \ref{th-nbgb1}, all normal binomial $\sigma$-ideals in $\F\{y\}$
can be written as the following form:
\begin{eqnarray}\label{eq-idf}
\ID_\GB = \sat(\PH_\GB)=[y^{g^+}-y^{g^-}\,|\, \forall g  \in (\GB)_{\Zx}, \lc(g)>0]
\end{eqnarray}
where
\begin{eqnarray}\label{eq-gb2}
\GB = \{g_1,\ldots,g_t\}\subset \Zx
\end{eqnarray}
is a reduced Gr\"obner basis of the $\Zx$-module $L=(\GB)_{\Zx}$.
Gr\"obner bases in $\Zx$ have the following special structure \cite{dbi}.
\begin{lem}\label{lm-21}
Let  $\GB = \{g_1,\ldots,g_k\}$ be a reduced Gr\"{o}bner basis of a
$\Zx$-module in $\Z[x]$, $g_1<\cdots<g_k$, and
$\lt(g_i)=c_ix^{d_i}\in\N[x]$. Then

1) $0\le d_1 < d_2 < \cdots<d_k$.

2) $c_k | \cdots | c_2 | c_1 $ and $c_i \ne c_{i+1}$ for $1\le i\le
k-1$.

3) $\frac{c_i}{c_k} | g_i$ for $1\le i< k$. If
$\widetilde{b}_1$ is the primitive part of $g_1$, then
$\widetilde{b}_1 | g_i$ for $1< i\le k$.
%

\end{lem}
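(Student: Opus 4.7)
The plan is to prove the three parts in turn, exploiting the rigid degree-by-degree coefficient structure of ideals in $\Z[x]$. For (1), I argue by contradiction. Suppose $d_i = d_{i+1}$; then $\lt(g_i)$ and $\lt(g_{i+1})$ share the same power $x^{d_i}$, and B\'ezout in $\Z$ yields integers $a,b$ with $a c_i + b c_{i+1} = \gcd(c_i, c_{i+1})$. The element $a g_i + b g_{i+1} \in I$ then has leading term $\gcd(c_i,c_{i+1})\,x^{d_i}$, which properly divides $\lt(g_i)$ or $\lt(g_{i+1})$, contradicting reducedness of $\GB$. For (2), the key observation is that, for each $d \ge 0$, the set of leading coefficients of degree-$d$ elements of $I$ (together with $0$) is an ideal $c(d)\Z \subset \Z$; multiplication by $x$ preserves the leading coefficient, so $c(d+1) \mid c(d)$, and the $d_i$ are exactly the points at which $c(d)$ strictly decreases, giving $c_{i+1}\mid c_i$ and $c_{i+1}\ne c_i$. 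For $i<j$, the element $x^{d_j-d_i}g_i \in I$ has degree $d_j$ and leading coefficient $c_i$, forcing $c_j \mid c_i$.

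For (3), I handle the divisibility $\widetilde b_1 \mid g_i$ first, via a generic-fiber argument. The extension $\Q[x]\cdot I = (p)$ is principal. The inclusion $g_1 \in (p)$ forces $\deg p \le d_1$; conversely, any nonzero integer multiple $Np \in I$ has degree at least $d_1$ by (2), so $\deg p = d_1$. Hence $p$ is a $\Q$-scalar multiple of $g_1$, and therefore also of $\widetilde b_1$. Since every $g_i$ lies in $(p) = (\widetilde b_1)$ in $\Q[x]$, Gauss's lemma lifts the divisibility to $\Z[x]$ because $\widetilde b_1$ is primitive.

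The harder statement is $\frac{c_i}{c_k} \mid g_i$. My plan is induction on $k$ together with a reduction to the case $d_1 = 0$ via the previous paragraph: writing $g_i = \widetilde b_1 f_i$ with $f_i \in \Z[x]$, one checks that the $f_i$ generate an ideal in $\Z[x]$ whose leading-coefficient chain has the same ratios $c_i/c_k$ but with $f_1$ a positive integer, so its lowest-degree generator has degree $0$. Reduced to this case, for each $i<k$ consider
\[
h_i \;:=\; q_i\,g_k \;-\; x^{d_k-d_i}\,g_i \;\in\; I, \qquad q_i := c_i/c_k,
\]
which has degree $<d_k$ and satisfies $h_i \equiv -x^{d_k-d_i}(g_i - \lt(g_i)) \pmod{q_i\Z[x]}$ because $c_i = q_i c_k$. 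Since $\deg h_i < d_k$, only $g_1,\dots,g_{k-1}$ can participate in the Gr\"obner reduction of $h_i$ to zero, and each step forces the current leading coefficient to be divisible by the relevant $c_j$. Propagating these congruences modulo $q_i$---using that $q_i \mid c_j$ for $j \le i$, while the $c_j$ with $j>i$ need not be multiples of $q_i$---is expected to force every non-leading coefficient of $g_i$ to be divisible by $q_i$, yielding $q_i \mid g_i$. I expect this coefficient tracking to be the main obstacle, probably requiring a secondary descending induction on the degree within $g_i$ to control the effect of reductions by the $g_j$ with $j>i$.
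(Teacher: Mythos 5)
The paper itself does not prove Lemma~\ref{lm-21}; it imports it verbatim from the reference \cite{dbi}. So there is no in-house proof to compare against, and I evaluate your argument on its own.

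Parts (1) and (2), and the divisibility $\widetilde b_1 \mid g_i$ in part (3), are essentially correct. A couple of details are glossed over but are easy to fill in: in (1), the element $a g_i + b g_{i+1}$ with leading term $\gcd(c_i,c_{i+1})\,x^{d_i}$ must have its leading term divisible by some $\lt(g_l)$; that $\lt(g_l)$ then divides both $\lt(g_i)$ and $\lt(g_{i+1})$, and $l$ cannot equal both $i$ and $i+1$, which is what actually yields the reducedness contradiction (not that $\gcd x^{d_i}$ "properly divides" one of them, which can fail when one of $c_i,c_{i+1}$ divides the other). In (2), the identification $c(d_i)=c_i$ needs an argument (take $h\in I$ of degree $d_i$ with $\lc(h)=c(d_i)$; some $\lt(g_l)$ divides $\lt(h)$ and hence $\lt(g_i)$; reducedness forces $l=i$); once you have that, the rest of (2) goes through. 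The $\widetilde b_1\mid g_i$ argument via $I\,\Q[x]=(\widetilde b_1)$ and Gauss's lemma is clean and correct.

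The genuine gap is the claim $\tfrac{c_i}{c_k}\mid g_i$, which you explicitly flag as unfinished. Your plan is to reduce $h_i=q_i g_k - x^{d_k-d_i}g_i$ to zero and track congruences mod $q_i$, but this breaks at exactly the point you identify: the reduction of $h_i$ can use $g_j$ with $i<j<k$, and $g_j\not\equiv 0\pmod{q_i}$ in general, so the congruence does not propagate. What one does obtain cleanly is the case $i=k-1$ (and, under the reduction to $d_1=0$, the trivial case $i=1$): writing the reduction as $h_{k-1}=\sum_{l<k}a_l g_l$ with $\deg(a_l g_l)<d_k$, one gets $(x^{d_k-d_{k-1}}-a_{k-1})\,(g_{k-1}-\lt(g_{k-1}))\equiv 0\pmod{q_{k-1}}$ (using $q_{k-1}\mid q_l\mid g_l$ for $l<k-1$), and since $x^{d_k-d_{k-1}}-a_{k-1}$ is monic it is a non-zero-divisor in $(\Z/q_{k-1}\Z)[x]$, giving $q_{k-1}\mid g_{k-1}$. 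But for $1<i<k-1$ the extra terms $a_l g_l$ with $i<l<k$ survive, and I do not see how the "descending induction on degree within $g_i$" you propose controls them; the various natural double inductions (on $i$ and on the gap $k-i$, or on $i$ and on $j$ in a claim $\tfrac{c_i}{c_j}\mid g_i$) all run into the same unproven divisibility for the intermediate $g_l$. So this half of (3) is a real gap, not just a missing routine detail, and you would need a different idea (for instance a localization-at-$p$ or Hermite-normal-form argument, or the argument actually given in \cite{dbi}) to close it.

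Minor remark on your reduction step: to reduce to $d_1=0$ you assert that the $f_i:=g_i/\widetilde b_1$ form a reduced Gr\"obner basis of $I'=\{h:\widetilde b_1 h\in I\}$. The leading-term part of this is fine (multiplying by $\widetilde b_1$ multiplies leading terms by $\lt(\widetilde b_1)$), but you should also note $I=\widetilde b_1 I'$ by Gauss's lemma and that reducedness of $\{f_i\}$ follows from reducedness of $\{g_i\}$; this is true but deserves a sentence.
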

Here are two Gr\"{o}bner bases in $\Z[x]$\hbox{\rm:} $\{ 4, 2x \}$,  $ \{15, 5x, x^2+3\}$.

In the rest of this section, let $L=(\GB)_{\Zx}$ for
$\GB$ defined in \bref{eq-gb2} and define
%
\begin{eqnarray}
L_i&\triangleq&\{f\in L\,|\,\lc(f)=c_t=\lc(g_t)\}\\
L_t&\triangleq&\{f\in L_i\,|\,f \hbox{ has minimal degree in } L_i\}\}.
\end{eqnarray}

\begin{thm}\label{lm-mg1}
$\mathcal{I}_{\GB}$ has a finite $\sigma$-Gr\"{o}bner basis  if and only if $L_i\bigcap\Phi_0\ne\emptyset$.
\end{thm}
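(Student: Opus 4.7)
The plan is to adapt the two-directional argument of Theorem \ref{th-sg1} to the general setting, leveraging the structural fact from Lemma \ref{lm-21} that $c_t = \lc(g_t)$ is the strict minimum of the leading coefficients among nonzero elements of $L$. Thus the condition $L_i \cap \Phi_0 \ne \emptyset$ plays the role that $f \in \Phi_1$ played in the principal case.

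\emph{Sufficiency.} Suppose $h \in L_i \cap \Phi_0$ and set $D = \deg(h)$. Then $h^+ = \lt(h) = c_t\, x^D$ is a single monomial. Consider the algebraic polynomial ideal
$$\ID_D = \ID_\GB \cap \F[y, y^x, \ldots, y^{x^D}],$$
which lives in a polynomial ring in finitely many variables and hence admits a finite algebraic Gr\"obner basis $\mathbb{G}_{\leqslant D}$. I claim $\mathbb{G}_{\leqslant D}$ is already a $\sigma$-Gr\"obner basis of $\ID_\GB$. By Corollary \ref{cor-nbgb2} it suffices to check, for every normal $u \in L$, that some element of $\mathbb{G}_{\leqslant D}$ covers $\LM(\PH_u)$ after a suitable $\sigma$-shift. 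When $\deg(u) \leqslant D$ this is immediate since $\PH_u \in \ID_D$. When $\deg(u) > D$, Lemma \ref{lm-21} gives $\lc(u) \geqslant c_t$, whence
$$x^{\deg(u)-D} h^+ \;=\; c_t\, x^{\deg(u)} \;\preceq\; \lc(u)\, x^{\deg(u)} \;=\; \lt(u) \;\preceq\; u^+.$$
Since $\PH_h \in \ID_D$, some $G \in \mathbb{G}_{\leqslant D}$ satisfies $\LM(G) \mid \LM(\PH_h)$; applying the $\sigma$-shift by $x^{\deg(u)-D}$ and combining with the above display yields $(\LM(G))^{x^{\deg(u)-D}} \mid \LM(\PH_u)$, as required.

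\emph{Necessity.} Suppose $\ID_\GB$ has a finite $\sigma$-Gr\"obner basis $\PH_H$ with $H \subset L$ normal, and set $H_c = \{h \in H \mid \lc(h) = c_t\}$. Applying the $\sigma$-Gr\"obner basis property to $\PH_{g_t}$ (where $g_t \in L_i$), there exist $h \in H$ and $j \in \N$ with $x^j h^+ \preceq g_t^+$; matching leading coefficients and invoking the minimality of $c_t$ (Lemma \ref{lm-21}) forces $\lc(h) = c_t$, so $H_c \ne \emptyset$. Now suppose for contradiction that $L_i \cap \Phi_0 = \emptyset$; then $H_c \cap \Phi_0 = \emptyset$, so every $h \in H_c$ has $h^+$ containing at least two monomials. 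Choose $h_1 \in H_c$ maximizing the degree gap $\widetilde{\deg}(\cdot)$ of \bref{eq-deg}, write $h_1 = a\, x^n + b\, x^m + \tilde{\tilde h}_1^+ - h_1^-$ with $a = c_t > 0$, $b > 0$, $n > m$, set $c = \lceil b/a \rceil$, and form
$$s = (x^n - c\, x^m)\, h_1 \in L.$$
Then $\lc(s) = c_t$ (so $s \in L_i$) and the same computation as in Lemma \ref{lm-s4} yields $\widetilde{\deg}(s) > \widetilde{\deg}(h_1)$. The Ritt-reduction argument of Lemma \ref{lm-s4} now applies verbatim: any putative reducer $\PH_h$ of $\PH_s$ with $h \in H_c$ is ruled out by the strict inequality $\widetilde{\deg}(s) > \widetilde{\deg}(h)$, while any reducer with $h \in H \setminus H_c$ has $\lc(h) > c_t = \lc(s)$ by Lemma \ref{lm-21}, forcing $\deg(x^j h) < \deg(s)$ and obstructing the required divisibility. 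Hence $\grem(\PH_s, \Theta(\PH_H)) \ne 0$, contradicting $\PH_H$ being a $\sigma$-Gr\"obner basis.

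The main obstacle is ensuring in the necessity direction that the univariate construction of $s$ from Lemma \ref{lm-s4} survives in the multi-generator setting. Everything rests on Lemma \ref{lm-21}: the $\Zx$-module structure of $L$ places $s = (x^n - c\,x^m)h_1$ back inside $L$, the minimality of $c_t$ places $s$ in $L_i$, and the strict inequality $c_t < \lc(h)$ for $h \in H \setminus H_c$ is precisely what closes the case analysis on reducers. The combinatorial core---the existence of a degree-gap-maximizer $h_1$ and the strict increase $\widetilde{\deg}(s) > \widetilde{\deg}(h_1)$---is untouched from the principal case.
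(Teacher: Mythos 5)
Your proposal is correct and follows the paper's own argument essentially step for step: sufficiency by truncating to the polynomial subring $\F[y,\ldots,y^{x^D}]$ with $D=\deg(h)$ and propagating via $\sigma$-shifts of $\PH_h$, necessity by lifting the $\widetilde{\deg}$-maximizer construction of Lemma \ref{lm-s4} into $L$, with Lemma \ref{lm-21} supplying both the minimality of $c_t$ (so that $s=(x^n-cx^m)h_1$ lands in $L_i$) and the strict inequality $\lc(h)>c_t$ for $h\in H\setminus H_c$. One cosmetic caveat: the closing summary of the necessity case slightly misdescribes the mechanism --- the reducers are not ``ruled out'' nor is the divisibility ``obstructed''; rather, exactly as in Lemma \ref{lm-s4}, any covering $x^jh^+\preceq s^+$ forces $\deg(x^jh)<\deg(s)$, so $\lt(s^+-x^jh^+)=\lt(s^+)$ is preserved and the normal form can never vanish --- but this is phrasing only, and the structure and conclusion match the paper's proof.
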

\begin{proof}
Suppose $L_i\bigcap\Phi_0\ne\emptyset$ and let $g\in L_i\bigcap\Phi_0$. Then $\ID_\GB\bigcap k[y,y^x,\cdots,y^{x^{\tiny{\deg }(g)}}]$ has a finite
Gr\"obner basis denoted by $G_{\le\tdeg(g)}$.
Let $\PH_u\in\ID_\GB$ and $\textup{lc}(u)>0$. If $\deg (u)\le\deg (g)$, then there exists a $\PH_h\in G_{\le\tdeg(g)}$ such that $h\preceq u$.
Otherwise, we have $\deg(u) > \deg(g)$ and $\lc(u) \ge \lc(g)$. Then
\begin{eqnarray*}
x^{\tdeg(u)-\tdeg(g)}g^+
=x^{\tdeg(u)-\tdeg(g)}\textup{lt}(g) = x^{\tdeg(u)-\tdeg(g)} \lc(g) \lm(g)
=  \lc(g) \lm(u)
\preceq \textup{lt}(u)\preceq u^+.
\end{eqnarray*}
By Corollary \ref{cor-nbgb2},  $\mathbb G_{\le  \tdeg(g)}$ is a finite $\sigma$-Gr\"obner basis  of $\ID_\GB$, since $\PH_g$ is in $\mathbb G_{\le  \tdeg(g)}$.

We will prove the other direction by contradiction.
Suppose that $L_i\cap\Phi_0=\emptyset$ and
$\ID_\GB$ has a finite $\sigma$-Gr\"{o}bner basis $\PH_H=\{\PH_{u_1},\cdots,\PH_{u_k}\}$.
Let $H=\{{u_1,\cdots,u_k}\}$, and $H_c=H\bigcap L_i$.
Since $\grem(\PH_{g_t},\Theta(\PH_H))=0$, we have
$H_c\ne\emptyset$ and let $u_1$ be an element of $H_c$ with maximal $\widetilde{\deg}$
which is defined in \bref{eq-deg}.
Since $L_i\cap\Phi_0=\emptyset$, by Lemma \ref{lm-s3} $u_1^+$ contains at least two terms
and $u_1^-\ne0$.
Similar to the proof of Lemma \ref{lm-s4}, we can construct
an $s\in\Zx\cap L$ such that $\widetilde{\deg}(s) > \widetilde{\deg}(u_1)$
and $\lc(s)=\lc(u_1)$. Then, $\grem(\PH_s,\Theta(\PH_H))\ne0$ contradicting
to the fact that $\PH_H$ is a $\sigma$-Gr\"obner basis.
%
%
\end{proof}

\begin{cor}
If $\mathcal{I}_\GB$ has a finite $\sigma$-Gr\"{o}bner basis,
then $g_1\in\Phi_1$.
\end{cor}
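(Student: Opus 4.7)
By Theorem~\ref{lm-mg1}, the hypothesis that $\mathcal{I}_\GB$ has a finite $\sigma$-Gr\"obner basis is equivalent to $L_i\cap\Phi_0\neq\emptyset$; the plan is to start by fixing some $h\in L_i\cap\Phi_0$. Applying Lemma~\ref{lm-21}(3), the primitive part $\widetilde{g_1}$ of $g_1$ divides every $g_i$, hence every element of $L$, and in particular $\widetilde{g_1}\mid h$. Since $g_1=d\,\widetilde{g_1}$ for the positive integer $d=\mathrm{cont}(g_1)$, and membership in $\Phi_1$ is invariant under positive-integer scaling, the task of proving $g_1\in\Phi_1$ reduces to exhibiting a monic $g\in\Zx$ with $\widetilde{g_1}\,g\in\Phi_0$.

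I will argue by contradiction, paralleling Lemma~\ref{lm-s4}. Assume $g_1\notin\Phi_1$; then no monic multiple of $g_1$ lies in $\Phi_0$. Starting from $s_1:=g_1$ and iterating the construction $s_{k+1}:=(x^{n_k}-c_k x^{m_k})\,s_k$ from the proof of Lemma~\ref{lm-s4}, one produces an infinite sequence $s_1,s_2,\ldots\in(g_1)_{\Zx}\subseteq L$ with $\lc(s_k)=c_1$, $s_k\notin\Phi_0$, and $\widetilde{\deg}(s_1)<\widetilde{\deg}(s_2)<\cdots$ unbounded. By Corollary~\ref{cor-nbgb2}, if $\PH_H$ is a finite $\sigma$-Gr\"obner basis of $\mathcal{I}_\GB$, then for each $k$ some $u_k\in H$ and $j_k\in\N$ must satisfy $s_k^+-x^{j_k}u_k^+\in\N[x]$. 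Choose $k$ large enough that $\widetilde{\deg}(s_k)>\max_{u\in H}\widetilde{\deg}(u)$. Comparing coefficients at degree $\deg(x^{j_k}u_k)$ forces $\lc(u_k)\le c_1$; and in the borderline case $\lc(u_k)=c_1$ the second-highest-monomial analysis from Lemma~\ref{lm-s4} produces a negative coefficient in the residue $s_k^+-x^{j_k}u_k^+$, contradicting membership in $\N[x]$. Thus $\lc(u_k)=c_1$ is impossible for large $k$, and only the case $\lc(u_k)<c_1$ remains to be excluded.

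The main obstacle is this residual case $\lc(u_k)<c_1$, which cannot arise in the original Lemma~\ref{lm-s4} setting (where $H\subset(f)_{\Zx}$ forces every $\lc(u)\ge\lc(f)$) but does occur here because $H\subset L$ may contain elements whose leading coefficient is any positive multiple of $c_t$ strictly less than $c_1$. Closing the gap will require sharpening the Lemma~\ref{lm-s4} iteration so as to control not only $\widetilde{\deg}(s_k)$ but also the subleading coefficient profile of $s_k^+$: the single-step criterion of Corollary~\ref{cor-nbgb2} will need $[s_k^+]_{\deg u_k+j_k}<\lc(u_k)$ at every intermediate degree for every $u\in H$ with $\lc(u)<c_1$. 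Verifying that this finer bookkeeping can be maintained through the iterative construction is the technical heart of the argument, and is the step I expect to require the most care.
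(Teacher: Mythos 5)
Your first paragraph is exactly the right setup, and it is essentially the paper's entire proof: by Theorem~\ref{lm-mg1} pick $h\in L_i\cap\Phi_0$, invoke Lemma~\ref{lm-21}(3) to get $\widetilde{g_1}\mid h$, and then the scale-invariance of $\Phi_1$ reduces the claim to $\widetilde{g_1}\in\Phi_1$. The paper stops there: writing $h=\widetilde{g_1}\,g$ with $g\in\Zx$ and $\lc(g)>0$, it treats $h\in\Phi_0$ together with this divisibility as already yielding $\widetilde{g_1}\in\Phi_1$ (by Theorem~\ref{lm-mg1}), and declares $g_1\in\Phi_1$. In other words, the paper's argument is a direct one-step conclusion from the quotient $g=h/\widetilde{g_1}$, not a contradiction argument.

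Where you go wrong is in abandoning this direct line and instead trying to transplant the infinite-descent construction from the proof of Lemma~\ref{lm-s4}. You have correctly diagnosed why that transplant does not go through: in Lemma~\ref{lm-s4} the set $H$ sits inside $(f)_{\Zx}$, so every $u\in H$ has $\lc(u)\ge\lc(f)$ and the two cases $\lc(u)=\lc(f)$ and $\lc(u)>\lc(f)$ exhaust the possibilities. Here $H\subset L$ may contain elements whose leading coefficient lies strictly between $c_t$ and $c_1$, and for such $u$ the subtraction $s_k^+-x^{j_k}u^+$ can retain a positive leading coefficient without forcing any sign violation in lower degrees, so the ``leading term is preserved'' step that drives the contradiction in Lemma~\ref{lm-s4} is simply unavailable. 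You acknowledge this yourself and leave the closing of the gap as future technical work. That acknowledged gap is precisely the problem: the proposal as written does not prove the corollary. The fix is to abandon the contradiction scheme entirely and finish the argument you began in the first paragraph — the element $h\in L_i\cap\Phi_0$ you already fixed, divided by $\widetilde{g_1}$, is the witness the paper uses, and the only remaining point to address is the normalization of $g=h/\widetilde{g_1}$ to a monic polynomial (which the paper passes over silently).
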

\begin{proof}
Let $\widetilde{b}_1$ be the primitive part of $g_1$.
Then by Lemma \ref{lm-21}, $\widetilde{b}_1|h$ for any $h\in L$.
By Theorem \ref{lm-mg1}, $\widetilde{b}_1$ and hence $g_1$ is in $\Phi_1$.
\end{proof}

\begin{cor}
If $L_t\bigcap\Phi_1\ne\emptyset$ and in particular $g_t\in\Phi_1$, then $\mathcal{I}_\GB$ has finite $\sigma$-Gr\"{o}bner Basis.
\end{cor}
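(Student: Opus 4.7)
The plan is to reduce this to Theorem \ref{lm-mg1}, which says that $\mathcal{I}_\GB$ has a finite $\sigma$-Gr\"obner basis iff $L_i\cap\Phi_0\neq\emptyset$. So it suffices to show $L_t\cap\Phi_1\neq\emptyset$ implies $L_i\cap\Phi_0\neq\emptyset$.

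First I would pick any $f\in L_t\cap\Phi_1$. By the definition of $\Phi_1$, there is a monic $g\in\Zx$ with $fg\in\Phi_0$. The key observation is that $fg$ still lies in the ``right'' slice: since $L$ is a $\Zx$-module, $fg\in L$; and since $g$ is monic, $\lc(fg)=\lc(f)\cdot 1=c_t$ (using $f\in L_t\subseteq L_i$). Hence $fg\in L_i\cap\Phi_0$, which is therefore non-empty, and Theorem \ref{lm-mg1} gives the finite $\sigma$-Gr\"obner basis.

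For the ``in particular'' clause, I need to verify $g_t\in L_t$, so that $g_t\in\Phi_1$ implies $L_t\cap\Phi_1\neq\emptyset$. By construction $\lc(g_t)=c_t$, so $g_t\in L_i$. To see $g_t$ has minimal degree in $L_i$, suppose $f\in L_i$ with $\deg(f)<d_t$. Since $\GB$ is a reduced Gr\"obner basis of $L$, the leading term $c_t x^{\deg(f)}$ of $f$ would have to be divisible (as a term in $\Zx$) by $\lt(g_i)=c_ix^{d_i}$ for some $i$. By Lemma \ref{lm-21}(2), $c_t\mid c_i$ and $c_i\neq c_t$ for $i<t$, so $c_i\mid c_t$ forces $i=t$; but then $d_t\leq \deg(f)<d_t$, a contradiction. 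Hence $g_t\in L_t$ as claimed.

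The only subtlety is step 1 (monicity of the multiplier preserves membership in $L_i$), but this is built directly into the definition of $\Phi_1$, so there is no real obstacle; the hardest part is simply tracking the several nested definitions of $L$, $L_i$, $L_t$, $\Phi_0$, $\Phi_1$ and applying Lemma \ref{lm-21} to justify $g_t\in L_t$. The proof is therefore essentially a two-line reduction to Theorem \ref{lm-mg1}, plus a brief verification for the ``in particular'' part.
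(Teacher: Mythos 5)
Your proof is correct and takes what is almost certainly the intended route: reduce to Theorem \ref{lm-mg1} by observing that multiplying a witness $f\in L_t\cap\Phi_1$ by the monic $g$ from the definition of $\Phi_1$ preserves both the module membership (since $L$ is a $\Zx$-module) and the leading coefficient $c_t$ (since $g$ is monic), so $fg\in L_i\cap\Phi_0$. The verification that $g_t\in L_t$ via Lemma \ref{lm-21}(2) is also sound: the divisibility chain $c_t\mid\cdots\mid c_1$ with $c_i\neq c_{i+1}$ forces $c_i>c_t$ for $i<t$, so only $\lt(g_t)$ can divide a term with coefficient $c_t$, whence $g_t$ has minimal degree in $L_i$. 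The paper states this corollary without proof immediately after Theorem \ref{lm-mg1}, and your two-line reduction is precisely the argument that placement suggests.
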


The following example shows that $g_t\in\Phi_1$ is not a necessary condition for the $\sigma$-Gr\"obner basis to be finite.
\begin{exmp}
Let $\GB=\{2(x^2-2),(x^2-2)(x+1)\}$.
Then  $(x^2-2)(x+1)(x-1)+2(x^2-2)=x^4-x^2-2\in\Phi_0\subset\Phi_1$,
and hence $\ID_{\GB}$ has a finite $\sigma$-Gr\"obner basis. On the other hand, we
will show  $(x^2-2)(x+1)\notin\Phi_1$ in Example \ref{ex-51}.
\end{exmp}

In order to give another criterion,  we need the following effective Polya Theorem.
\begin{lem}[\cite{poliya}]\label{lm-Po}
Suppose that $f(x)=\sum\limits_{j=0}^na_nx^n\in\R[x]$ is positive on $[0,\infty)$
and $F(x,y)$ the homogenization of $f$.
Then for $N_f> \frac{n(n-1)L}{2\lambda} - n$, $(1+x)^{N_f}f(x)\in\R^{>0}[x]$,
where $\lambda = \min\{F(x,1-x)\,|\, x\in[0,1]\}$
and $L = \max\{\frac{k!(n-k)!}{n!}|a_k|\}$.
\end{lem}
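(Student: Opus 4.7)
The plan is to deduce the statement from P\'olya's classical positivity theorem by homogenizing $f$ and tracking constants carefully. Write $F(x,y)=\sum_{k=0}^n a_k x^k y^{n-k}$; then $(1+x)^{N_f}f(x)\in \R^{>0}[x]$ is equivalent to the claim that every coefficient of the bivariate polynomial $(x+y)^{N_f} F(x,y)$ is strictly positive. The hypothesis that $f>0$ on $[0,\infty)$, combined with degree-$n$ homogeneity, forces $F(s,1-s)\ge \lambda>0$ for every $s\in[0,1]$; in particular $a_0=F(0,1)$ and $a_n=F(1,0)$ are both at least $\lambda>0$, so the extreme coefficients of $(x+y)^{N_f}F$ are automatically positive and only the intermediate ones need attention.

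First I would expand
$$(x+y)^{N_f} F(x,y) = \sum_{j=0}^{N_f+n} C_j\, x^j y^{N_f+n-j}, \qquad C_j = \sum_{\max(0,j-N_f)\le k\le \min(n,j)} a_k \binom{N_f}{j-k},$$
and then apply the combinatorial identity $\binom{N_f}{j-k}=\binom{N_f+n}{j}\,(j)_k(N_f+n-j)_{n-k}/(N_f+n)_n$, where $(m)_r$ denotes the falling factorial, to rewrite
$$C_j = \frac{\binom{N_f+n}{j}}{(N_f+n)_n}\sum_{k=0}^n a_k\,(j)_k\,(N_f+n-j)_{n-k}.$$
So it suffices to show that the inner sum $S_j$ is positive for every $0\le j\le N_f+n$. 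Setting $M=N_f+n$ and $s=j/M$, expanding each falling factorial as an ordinary power plus Stirling-number-weighted lower-order corrections yields a decomposition $S_j = M^n\,F(s,1-s) + R_j$, whose main term is bounded below by $M^n\lambda$ thanks to the hypothesis on $\lambda$.

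The remainder $R_j$ collects all cross-terms in which at least one falling factorial contributes a strictly lower power. A careful bookkeeping, using that each such correction loses one factor of $M$, that the dominant Stirling coefficient in the expansion of $(j)_k(M-j)_{n-k}$ is at most $\binom{k}{2}+\binom{n-k}{2}\le \binom{n}{2}$, and that the normalization $L=\max_k k!(n-k)!|a_k|/n!$ is exactly the maximum of the Bernstein coefficients of $F$, yields a bound of the form $|R_j|\le \binom{n}{2}\,M^{n-1}\,L$. Demanding that the main term dominate, $M^n\lambda > \binom{n}{2}M^{n-1}L$, gives $M>n(n-1)L/(2\lambda)$, which is exactly the stated threshold $N_f>n(n-1)L/(2\lambda)-n$.

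The main obstacle I expect is the precise accounting in the remainder estimate: one must verify that the cross-term coefficients in $(j)_k(M-j)_{n-k}-j^k(M-j)^{n-k}$ combine with the $a_k$ in such a way that the worst case is really controlled by $\binom{n}{2}L$ rather than a larger multiple. This is the point of the specific Bernstein-style normalization $k!(n-k)!/n!$ inside the definition of $L$, and it is the content of the sharp form of P\'olya's theorem proved in \cite{poliya}; in practice I would simply invoke that paper for the constant rather than redoing the estimate by hand.
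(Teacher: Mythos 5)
The paper gives no proof of this lemma: it is imported as a black box from \cite{poliya} (the bibliography entry is slightly garbled; the intended source is Powers and Reznick, ``A new bound for P\'olya's theorem with applications to polynomials positive on polyhedra,'' J.\ Pure Appl.\ Algebra 164 (2001) 221--229, whose main theorem specialises to exactly this statement in two variables). So there is nothing in the paper to compare your argument against, and the right question is whether your outline faithfully reconstructs Powers--Reznick.

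Structurally it does. The coefficient formula $C_j=\sum_k a_k\binom{N_f}{j-k}$, the identity $\binom{N_f}{j-k}=\binom{M}{j}(j)_k(M-j)_{n-k}/(M)_n$ with $M=N_f+n$ (I checked it; it is correct), the positive prefactor $\binom{M}{j}/(M)_n$, the identification of $L$ as the maximum Bernstein coefficient of $F$ on the simplex, and the split $S_j=M^nF(j/M,1-j/M)+R_j$ all match the Powers--Reznick proof, and solving $M^n\lambda>\binom{n}{2}M^{n-1}L$ gives exactly the stated threshold.

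The one place your sketch undersells the difficulty is the phrase ``the dominant Stirling coefficient\ldots is at most $\binom{k}{2}+\binom{n-k}{2}\le\binom{n}{2}$.'' That inequality is true, but a term-by-term estimate does not give $|R_j|\le\binom{n}{2}M^{n-1}L$: once you replace $|a_k|$ by $L\binom{n}{k}$, the $k$-th summand of $R_j$ already carries a factor $\binom{n}{k}\bigl(\binom{k}{2}+\binom{n-k}{2}\bigr)M^{n-1}$, and summing over $k$ would blow up the constant. What actually closes the estimate is the identity $\binom{n}{k}\binom{k}{2}=\binom{n}{2}\binom{n-2}{k-2}$ combined with the binomial theorem on the line $x+y=1$, which collapses $\sum_k\binom{n}{k}\binom{k}{2}x^{k-1}y^{n-k}$ to $\binom{n}{2}x$ and the symmetric sum to $\binom{n}{2}y$, so the total is exactly $\binom{n}{2}$. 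Powers and Reznick package this by comparing $F$ with the deformed form $F_t(x,y)=\sum_k a_k\,x(x-t)\cdots(x-(k-1)t)\cdot y(y-t)\cdots(y-(n-k-1)t)$ and showing $F-F_t\le t\binom{n}{2}L$ on the simplex; your $S_j$ is precisely $M^nF_{1/M}(j/M,1-j/M)$, so your $R_j$ is $M^n(F_{1/M}-F)(j/M,1-j/M)$. Your closing remark that you would cite the source for the sharp constant is therefore the right call: the combinatorial collapse just described is the entire content of the effective bound, not a routine bookkeeping step that can be waved through by ``keeping the leading Stirling term.''
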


\begin{cor}
If there exists an $h\in L$ with no positive real roots, then $\mathcal{I}_\GB$ has a finite $\sigma$-Gr\"{o}bner basis.
\end{cor}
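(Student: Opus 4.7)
The plan is to apply Theorem \ref{lm-mg1}, which reduces the problem to exhibiting an element of $L_i \cap \Phi_0$, i.e., an element of $L$ with leading coefficient $c_t$ and all other coefficients nonpositive. The main obstacle is that Lemma \ref{lm-Po} produces polynomials with all positive coefficients, which is the opposite pattern to $\Phi_0$. The key idea is to let $x^{N-d_t} g_t$ supply the required leading monomial $c_t x^N$ and use a suitably scaled Polya-inflated multiple of $h$ to push the remaining coefficients to nonpositive values.

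Replacing $h$ by $-h$ if needed (which is still in $L$), I assume $\lc(h) > 0$, so $h > 0$ on $(0,\infty)$ and $h(0) \ge 0$ by continuity. Factor $h = x^\ell \tilde h$ with $\tilde h(0) > 0$ and $\ell \ge 0$; then $\tilde h > 0$ on $[0,\infty)$, and Lemma \ref{lm-Po} yields $M \in \N$ with $(1+x)^M \tilde h \in \R^{>0}[x]$. Set $H := (1+x)^M h = x^\ell (1+x)^M \tilde h \in L$; it has nonnegative integer coefficients, leading coefficient $\lc(H) = \lc(h)$, and the binomial expansion shows that $H_{\deg H - s} \sim \binom{M}{s}\lc(h)$ as $M \to \infty$ for each fixed $s \ge 0$. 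Put $N := \deg H + 1$ and define
\[
  f := x^{N - d_t}\, g_t \;-\; a\, H, \qquad a \in \N_{>0} \text{ to be chosen}.
\]
Both summands are in $L$ so $f \in L$; since $\deg(aH) = N - 1 < N$, we have $\lc(f) = c_t$, hence $f \in L_i$.

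It remains to choose $a$ and $M$ so that $f \in \Phi_0$. The coefficient of $x^k$ in $f$ equals $-aH_k \le 0$ for $k < N - d_t$, and for $N - d_t \le k < N$ it equals $g_{t,\, d_t - 1 - s} - a H_{\deg H - s}$ where $s := N - 1 - k \in \{0, \ldots, d_t - 1\}$. So $f \in \Phi_0$ reduces to verifying $a H_{\deg H - s} \ge g_{t,\, d_t-1-s}$ for each such $s$. For $s \ge 1$ the coefficient $H_{\deg H - s}$ grows polynomially in $M$, so it dominates the fixed $g_{t,\, d_t - 1 - s}$ for all $M$ sufficiently large. The only remaining obstruction is $s = 0$: here $H_{\deg H} = \lc(h)$ is independent of $M$, and one takes $a$ to be any positive integer at least $\lceil g_{t,\, d_t-1}/\lc(h) \rceil$ (or simply $a = 1$ when $g_{t,\, d_t-1} \le 0$). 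Picking such $a$ and then $M$ large gives $f \in L_i \cap \Phi_0$, and Theorem \ref{lm-mg1} concludes that $\mathcal{I}_\GB$ has a finite $\sigma$-Gr\"{o}bner basis.
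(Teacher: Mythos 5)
Your proof is correct and follows essentially the same approach as the paper: apply the effective Polya theorem to the positive-on-$[0,\infty)$ part of $h$, form $(1+x)^M h \in L$ with nonnegative coefficients, shift $g_t$ up so its leading term $c_t x^N$ sits one degree above, and subtract a scaled multiple to force all lower coefficients nonpositive, then invoke Theorem \ref{lm-mg1}. The only cosmetic difference is that the paper simply takes the integer multiplier $M$ of the Polya-inflated polynomial large enough to kill all $d_t$ overlapping coefficients of $g_t$ at once (since the relevant coefficients of the inflated polynomial are all strictly positive), whereas you fix the multiplier $a$ from the $s=0$ constraint and instead inflate the Polya exponent to handle $s\ge 1$; both work, and your version also sidesteps the $x^{m_1}$-shift bookkeeping the paper uses.
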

\begin{proof}
Write $h = x^{m_1}h_1 $ such that $h_1(0)\ne0$.
By Lemma \ref{lm-Po}, there exists an $N\in\N$ such that $h_2=(x+1)^N h\in\Z^{>0}[x]$.
Take a sufficiently large $N$ such that $\deg(h_2) > d_t=\deg(g_t)$.
%
Then there exists a sufficiently large $M\in\mathbb{N}$, such that
$\overline{g}=x^{m_1}(x^{\deg(h_2)-\deg(g_t)+1}g_t- Mh_2)\in \Phi_0$.
Since $\overline{g}\in L_i$, by Lemma \ref{lm-mg1}, $\mathcal{I}$ has a finite $\sigma$-Gr\"{o}bner Basis.
\end{proof}


\section{Membership decision for $\Phi_1$ and $\sigma$-Gr\"obner basis computation}
In Section 3, we prove that $\textup{sat}(\PH_f)$ has a finite $\sigma$-Gr\"obner basis
if and only if $f \in \Phi_1$. In this section, we will give criteria and an algorithm for $f \in \Phi_1$.
If $f \in \Phi_1$, we also give an algorithm to compute the finite $\sigma$-Gr\"obner basis.

From the definition of $\Phi_1$, a necessarily condition for $f\in \Phi_1$ is $\text{lc}(f)>0$.
Also, it is easy to show that $f\in \Phi_1$  if and only if  $cx^mf\in \Phi_1$ for positive integers
$c$ and $m$.
So in the rest of this paper, we assume
$$f=\sum_{i=0}^n a_n x^i \in\Zx$$
such that $n>0$, $\lc(f)=a_n>0$, $f(0)=a_0\ne0$, and $\gcd(a_0,a_1,\ldots,a_n)=1$.

\subsection{Decision criteria}
In this subsection, we will study whether $f\in \Phi_1$ by examining properties of the roots of $f(x)=0$.

\begin{lem}\label{cor-r1}
If $f\in \mathbb{Z}[x]$ has no positive real roots, then $f\in \Phi_1$.
\end{lem}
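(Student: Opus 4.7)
My plan is to use Pólya's theorem (Lemma \ref{lm-Po}) to engineer the required monic multiplier $g$ by hand. The first step is to upgrade ``no positive real roots'' to ``positive on $[0,\infty)$'': under the standing assumptions $\lc(f) = a_n > 0$ and $a_0 = f(0) \ne 0$, if $a_0$ were negative the positive leading coefficient would force a sign change on $(0,\infty)$ and hence a positive real root; therefore $a_0 > 0$ and $f(x) > 0$ throughout $[0,\infty)$. This is precisely the hypothesis required by Lemma \ref{lm-Po}, so I may fix $N$ large enough that $p(x) := (1+x)^N f(x) \in \mathbb{Z}^{>0}[x]$; write $p(x) = \sum_{k=0}^{N+n} p_k x^k$ with every $p_k$ a positive integer.

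With $N$ and $p$ in hand, I would propose the ansatz
$$g(x) \;:=\; x^{N+1} - c\,(1+x)^N \;\in\; \mathbb{Z}[x]$$
for a positive integer $c$ to be chosen. Because $\deg(c(1+x)^N) = N < N+1$, the polynomial $g$ is automatically monic of degree $N+1$, so it is an admissible witness for $f \in \Phi_1$ provided $fg \in \Phi_0$. Expanding,
$$fg \;=\; x^{N+1} f(x) - c\,p(x),$$
which makes the verification transparent: $x^{N+1} f$ contributes only to degrees in the range $[N+1,\,N+1+n]$ with leading coefficient $a_n > 0$, while $-c\,p(x)$ deposits a strictly negative quantity in every slot of degree $0 \le k \le N+n$. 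In the overlap range $N+1 \le k \le N+n$ the coefficient of $x^k$ in $fg$ is $a_{k-N-1} - c\,p_k$, which I make non-positive by taking $c$ large enough that $c\,p_k \ge a_{k-N-1}$ for each such $k$ (a finite system of inequalities, all automatically satisfied when $a_{k-N-1}\le 0$). Every other non-leading coefficient of $fg$ equals $-c\,p_k \le 0$, and the top coefficient is $a_n > 0$, so $\lt(fg) = (fg)^+$, i.e.\ $fg \in \Phi_0$ and therefore $f \in \Phi_1$.

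The conceptual obstacle is that Pólya's theorem produces polynomials with \emph{all} coefficients positive, whereas $\Phi_0$ demands the opposite: only the leading coefficient positive. The trick is to use Pólya's output not as $fg$ itself but as a subtrahend $c\,p(x)$ that swamps every low-degree coefficient of $x^{N+1}f$. Once this sign flip is spotted, the rest is bookkeeping, and all the coefficient inequalities can be absorbed into the single free parameter $c$.
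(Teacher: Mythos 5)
Your proof is correct, and it uses the same key external tool as the paper's (the effective P\'olya theorem, Lemma \ref{lm-Po}), but you finish the argument by a genuinely different, more direct route. The paper first applies P\'olya to obtain $(1+x)^N f \in \mathbb{Z}^{>0}[x]$ and then appeals to Lemma \ref{lm-s3} (that $\mathbb{N}[x]\subseteq\Phi_1$), whose internal construction multiplies by a geometric sum $1+x+\cdots+x^{\deg}$ and then by a linear factor $(x-M)$; unwinding this, the monic multiplier the paper ultimately produces for $f$ is $(1+x)^N(1+x+\cdots+x^{N+n})(x-M)$, of degree $2N+n+1$. You instead exhibit the single-shot witness $g = x^{N+1} - c(1+x)^N$, which is monic of degree $N+1$, and the verification that $fg = x^{N+1}f - c\,p$ has all non-leading coefficients nonpositive for large $c$ is elementary and self-contained. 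Your approach thus bypasses Lemma \ref{lm-s3} entirely and yields a strictly smaller degree bound on the multiplier; the paper's factoring through Lemma \ref{lm-s3} is arguably preferable only because that lemma is reused elsewhere (e.g.\ in the proof of Lemma \ref{lm-s4}). One small point in your favor: you explicitly check the hypothesis of Lemma \ref{lm-Po}, namely that $f>0$ on all of $[0,\infty)$ (using the standing assumptions $a_n>0$, $a_0\ne0$ to rule out $a_0<0$), a verification the paper's proof leaves implicit.
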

\begin{proof}
By Lemma \ref{lm-Po}, there exists an $N\in\,$$\N$, such that $(x+1)^Nf\in\,$$\Z^{>0}[x]\subseteq\N[x]$. By Lemma \ref{lm-s3}, $(x+1)^Nf\in\,$$\N[x]\subseteq \Phi_1$, and thus $f\in \Phi_1$.
\end{proof}

By Lemma \ref{cor-r1}, we need only consider those polynomials  which have positive roots.
\begin{lem}\label{lm-8}
Let $f=a_nx^n+\cdots+a_0 \in\Phi_0$. Then
$f$ has a simple and unique positive real root $x_+$,
and for any root $z$ of $f$, we have $|z|\le x_+$.
\end{lem}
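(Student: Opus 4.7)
The plan is to read off the sign pattern imposed by the hypothesis $f\in\Phi_0$. The condition $\lt(f)=f^+$ forces every non-leading coefficient to be non-positive, so we can write $f(x)=a_nx^n-\sum_{i=0}^{n-1}b_ix^i$ with $a_n>0$ and $b_i\ge 0$; combined with the running assumption $a_0\ne 0$, we also have $b_0>0$. All three claims then follow from elementary real analysis applied to this normal form.

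For existence and uniqueness of the positive real root I would study the auxiliary function $g(x)=f(x)/x^n=a_n-\sum_{i=0}^{n-1}b_ix^{i-n}$ on $(0,\infty)$. Its derivative equals $g'(x)=\sum_{i=0}^{n-1}b_i(n-i)x^{i-n-1}$, which is strictly positive for $x>0$ since $b_0>0$ and $n>0$ guarantee that the $i=0$ summand contributes strictly. Hence $g$ is strictly increasing, and the boundary behaviour $\lim_{x\to 0^+}g(x)=-\infty$ together with $\lim_{x\to\infty}g(x)=a_n>0$ yields a unique zero $x_+\in(0,\infty)$. Since $f(x)=x^n g(x)$, this $x_+$ is also the unique positive zero of $f$.

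For simplicity of $x_+$, differentiating $f=x^n g$ and using $g(x_+)=0$ gives $f'(x_+)=x_+^n g'(x_+)>0$, so $x_+$ is a simple root. For the bound on arbitrary complex roots $z$, I would start from $f(z)=0$, rewrite as $a_n z^n=\sum_{i=0}^{n-1}b_i z^i$, and apply the triangle inequality to obtain $a_n|z|^n\le\sum_{i=0}^{n-1}b_i|z|^i$, i.e.\ $f(|z|)\le 0$. Because $f$ is strictly positive on $(x_+,\infty)$ (by uniqueness of the positive root and $f(t)\to+\infty$ as $t\to+\infty$), this forces $|z|\le x_+$.

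I do not anticipate a real obstacle: the statement is essentially a repackaging of Cauchy's bound on root moduli, the extra ingredient being that under the $\Phi_0$ sign pattern the natural majorant polynomial is $f$ itself. The only point worth emphasising in the write-up is that the assumption $b_0>0$ (equivalently $a_0\ne 0$) is genuinely used, both to produce the divergence $g(0^+)=-\infty$ and to ensure $g'(x)>0$ throughout $(0,\infty)$, which in turn is what makes $x_+$ simple.
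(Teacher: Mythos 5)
Your proof is correct. The part bounding $|z|$ is essentially identical to the paper's: both pass to $a_n|z|^n \le \sum b_i|z|^i$, i.e.\ $f(|z|)\le 0$, and conclude from the uniqueness of the positive zero. Where you diverge is in establishing that unique simple positive root. The paper invokes Descartes' rule of signs: $f\in\Phi_0$ with $a_0\ne 0$ has exactly one sign variation, so exactly one positive real root counted with multiplicity, hence simple. You instead analyze $g(x)=f(x)/x^n$, showing $g'>0$ on $(0,\infty)$ (using $b_0>0$, $n>0$), $g(0^+)=-\infty$, $g(+\infty)=a_n>0$, which gives a unique zero by monotonicity; simplicity then follows from $f'(x_+)=x_+^n g'(x_+)>0$. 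Both are elementary and correct. Descartes' rule is a one-line citation, while your monotonicity argument is self-contained and makes the role of $a_0\ne 0$ more transparent (it is needed both for the divergence of $g$ at $0^+$ and for the strict positivity of $g'$); it also directly yields that $f>0$ on $(x_+,\infty)$, which you then use in the bound, whereas the paper phrases that last step slightly differently as "$f$ has at least one real root in $[|z|,\infty)$."
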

\begin{proof}
Since $f\in\Phi_0\setminus\Z$, the number of sign differences of $f$ is one.
Then by Descartes' rule of signs\cite{drule}, the number of positive real roots of $f$
(with multiplicities counted) is one or less than one by an even number.
Then $f$ has a simple and unique positive real root $x_+$.
For any root $z$ of $f$, since $-a_i\ge 0$ for $i=0,\ldots,n-1$, we have
\begin{equation}\label{eq-cr1}
a_n|z|^{n}=|a_n z^{n}|=|-a_{n-1}z^{n-1}-\cdots-a_0|\le -a_{n-1}|z|^{n-1}-\cdots-a_0.
\end{equation}
Thus $f(|z|)\leqslant\,$$0$ and hence $f$ has at least one real root in $[|z|,\infty)$.
Since $f$ has a unique positive real root $x_+$, we have $|z|\leqslant\,$$x_+$.
\end{proof}

We now consider those $f$ which has a root $z\ne x_+$ and $|z|=x_+$.
Such a $z$ must be either $-x_+$ or a complex root.
\begin{lem}\label{lm-81}
Let $f=a_nx^n+\cdots+a_0 \in\Phi_0$ and $x_+$ the unique positive root of $f$.
If $f$ has a root $z\ne x_+$ but $|z|=x_+$, then we have
\begin{enumerate}
\item
$z^{\delta_f}\in\R_{>0}$  and $z$ is a simple root of $f$,
where  $\delta_f= \gcd\{i\,|\, a_i\ne0\}>1$.
\item
$f$ is a polynomial in $x^{\delta_f}$: $f=\widehat{f}\circ x^{\delta_f}$, where $\circ$ is the function composition.
Furthermore, $\widehat{f}(w)=0$ and $|w|=x_+^{\delta_f}$ imply $w=x_+^{\delta_f}$.
\item
$f$ has exactly $\delta_f$ roots with absolute value $x_+$: $\{z\,|\, f(z)=0, |z|=x_+\} = \{\zeta^k x_+ \,|\, \zeta
 = e^{\frac{2\pi\im}{\delta_f} },k=1,\ldots,\delta_f \}$, where $\im=\sqrt{-1}$.
\end{enumerate}
\end{lem}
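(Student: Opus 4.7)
The plan is to push the argument of Lemma \ref{lm-8} one step further: the bound there came from a triangle inequality, and the assumption $|z|=x_+$ forces that inequality to be an equality, which pins down the argument of $z$ exactly. More precisely, since $f\in\Phi_0$ with $f(x_+)=0$, we have $a_n x_+^n = -a_{n-1}x_+^{n-1}-\cdots-a_0$ with each $-a_i\ge 0$; inserting $|z|=x_+$ into the chain \bref{eq-cr1} yields
\[
\Bigl|\sum_{i<n}(-a_i)z^i\Bigr| = \sum_{i<n}(-a_i)|z|^i.
\]
Equality in the triangle inequality (with nonnegative coefficients) forces all nonzero terms $(-a_i)z^i$ to have the same argument. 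Since the constant term $-a_0$ is a positive real (recall $a_0<0$ by the standing assumption), this common argument is $0$, so $z^i\in\R_{>0}$ for every $i$ in $S\triangleq\{i\mid a_i\ne 0\}$.

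Write $z=x_+ e^{\im\theta}$. The condition $z^i\in\R_{>0}$ for $i\in S$ means $i\theta\in 2\pi\Z$ for all $i\in S$, equivalently $\delta_f\theta\in 2\pi\Z$ where $\delta_f=\gcd(S)$. If $\delta_f=1$ the only solution is $\theta\equiv 0$, giving $z=x_+$, contrary to hypothesis; hence $\delta_f>1$ and $z^{\delta_f}=x_+^{\delta_f}\in\R_{>0}$, which is the first assertion of (1). The possible values of $\theta$ are exactly $2\pi k/\delta_f$ for $k=0,1,\dots,\delta_f-1$, so the set of roots of $f$ with absolute value $x_+$ is contained in $\{\zeta^k x_+:k=0,\dots,\delta_f-1\}$ where $\zeta=e^{2\pi\im/\delta_f}$; this will give one inclusion of (3).

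For (2), $\delta_f\mid i$ for every $i\in S$ means we may write $f(x)=\widehat f(x^{\delta_f})$ with $\widehat f(u)=\sum_{i\in S}a_i u^{i/\delta_f}\in\Z[u]$. The sign pattern of the coefficients is inherited, so $\widehat f\in\Phi_0$, and by Lemma \ref{lm-8} it has a simple and unique positive real root, which must be $x_+^{\delta_f}$. If $\widehat f(w)=0$ and $|w|=x_+^{\delta_f}$, pick any $\delta_f$-th root $v$ of $w$; then $f(v)=\widehat f(w)=0$ and $|v|=x_+$, so by the analysis above $v^{\delta_f}=x_+^{\delta_f}$, i.e., $w=x_+^{\delta_f}$. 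Simplicity of $z$ as a root of $f$ (the remaining part of (1)) follows by differentiating $f(x)=\widehat f(x^{\delta_f})$: $f'(z)=\delta_f z^{\delta_f-1}\widehat f{}'(x_+^{\delta_f})$, which is nonzero since $z\ne 0$ and $x_+^{\delta_f}$ is a simple root of $\widehat f$.

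Finally, (3) is completed by noting that each candidate $\zeta^k x_+$ satisfies $(\zeta^k x_+)^{\delta_f}=x_+^{\delta_f}$, hence is a root of $f=\widehat f\circ x^{\delta_f}$; combined with the reverse inclusion from paragraph two, this gives equality of sets. The only real subtlety in the whole argument is recognizing that the triangle-inequality slack in \bref{eq-cr1} becomes an equality precisely under the hypothesis $|z|=x_+$, and that the common-argument condition, together with $-a_0>0$ forcing that argument to be $0$, translates cleanly into the divisibility condition defining $\delta_f$; everything else is bookkeeping with $\widehat f$.
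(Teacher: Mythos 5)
Your proof is correct and follows essentially the same route as the paper's: push the triangle inequality of Lemma \ref{lm-8} to its equality case to force $z^i\in\R_{>0}$ for every $i$ with $a_i\ne 0$, deduce $z^{\delta_f}\in\R_{>0}$ via the gcd, then factor $f=\widehat f\circ x^{\delta_f}$ and invoke Lemma \ref{lm-8} on $\widehat f$. You merely spell out two points the paper leaves terse — the argument-divisibility bookkeeping via $z=x_+e^{\im\theta}$, and the chain-rule computation $f'(z)=\delta_f z^{\delta_f-1}\widehat f{}'(x_+^{\delta_f})\ne 0$ for simplicity — but these are elaborations, not a different method.
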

\begin{proof}
Let $z\ne x_+$ be a root of $f$ such that $|z|=x_+$.
Then $f(|z|)=f(x_+)=a_n|z|^n+a_{n-1}|z|^{n-1}+\cdots+a_0=0$,
which, combining with \bref{eq-cr1}, implies
$|-a_{n-1}z^{n-1}-\cdots-a_0|= -a_{n-1}|z|^{n-1}-\cdots-a_0$.
The above equation is possible if and only if
$-a_iz^i\in\R_{>0}$ for each $i\le n-1$ and $a_i\ne0$.
Also note, $z^n= (-a_{n-1}|z|^{n-1}-\cdots-a_0)/a_n\in \R_{>0}$.
Then,  $z^i\in\R_{>0}$ for each $i\le n$ and $a_i\ne0$.
%
%
Note that $z^{m}\in\R_{>0}$ and  $z^{k}\in\R_{>0}$ imply $z^{m-k}\in\R_{>0}$.
As a consequence, $z^{\delta_f} \in\R_{>0}$ for $\delta_f=\gcd\{ i\,|\,a_i\ne0\}$.
Since $z\ne x_+$, we have $\delta_f> 1$.
Part 1 of the lemma is proved.

From the definition of $\delta_f$, $f$ is a polynomial of $x^{\delta_f}$:
$f(x) = \widehat{f}(x)\circ (x^{\delta_f})$.
It is easy to see that $\widehat{f}(x)\in\Phi_0$.
Let $\widehat{f}(x)=b_k x^k + \cdots+ b_1 x + b_0$.
Then $\gcd\{j\,|\, b_j\ne0\}=1$.
By the first part of this lemma, we know $x_+^{\delta_f}$ is the only root of $f$ whose absolute value
is $x_+^{\delta_f}$.
Since $z^{\delta_f}$ and $x_+^{\delta_f}$ are both the unique positive real roots of $\widehat{f}(x)$,
we have $z^{\delta_f}=\,$$x_+^{\delta_f}$ and hence $z$ is a simple root of $f$.
Part 2 of the lemma is proved.
Part 3 of the lemma comes from the fact $z^{\delta_f}=\,$$x_+^{\delta_f}$ is
the unique positive real root of $f$
and $f(z)=\widehat{f}(z^{\delta_f})=0$.
\end{proof}

\begin{cor}\label{cor-af8}
If $f\in\,$$\Phi_1$ has at least one positive real root $x_+$, then $x_+$ is the unique positive real root of $f$, $x_+$ is simple and for any root $z$ of $f$, $x_+\geqslant\,$$|z|$.
%
If $f$ has a root $z\ne x_+$  satisfying $|z|=\,$$x_+$, then
$z$ is simple, and
$z^{\delta}\in \R_{>0}$ for some $\delta\in \N_{>1}$, or equivalently,
the argument of $z$ satisfies $\textup{Arg}(z)/\pi\in\,$$\mathbb{Q}$.
\end{cor}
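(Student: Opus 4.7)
The plan is to lift both assertions from their $\Phi_0$-counterparts (Lemmas \ref{lm-8} and \ref{lm-81}) via the definition of $\Phi_1$. Since $f\in\Phi_1$, fix a monic $g\in\Z[x]$ with $h:=fg\in\Phi_0$. After dividing out any power of $x$ from $g$ (which preserves monicity of $g$ and membership of $fg$ in $\Phi_0$, since multiplying by a pure monomial in $x$ does not affect the relation $\lt(\cdot)=(\cdot)^+$), we may assume $g(0)\ne 0$, so $h(0)=f(0)g(0)\ne 0$ and Lemmas \ref{lm-8}, \ref{lm-81} apply directly to $h$ under its own running assumptions.

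For the first assertion, every root of $f$ is a root of $h$, so in particular the given $x_+$ is a positive real root of $h$. By Lemma \ref{lm-8}, $h$ has a unique simple positive real root and every root $w$ of $h$ satisfies $|w|\le x_+$. Hence $x_+$ is the unique positive real root of $f$ (a second positive root of $f$ would contradict uniqueness for $h$), and $|z|\le x_+$ for every root $z$ of $f$. Simplicity of $x_+$ in $f$ will be read off from the derivative identity
$$h'(x_+)=f'(x_+)g(x_+)+f(x_+)g'(x_+)=f'(x_+)g(x_+),$$
since $h'(x_+)\ne 0$ forces $f'(x_+)\ne 0$ (and $g(x_+)\ne 0$).

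For the second assertion, suppose $z\ne x_+$ is a root of $f$ with $|z|=x_+$. Then $z$ is a root of $h$ with $|z|=x_+$, so Lemma \ref{lm-81} supplies an integer $\delta_h>1$ with $z^{\delta_h}\in\R_{>0}$ and asserts that $z$ is simple in $h$; the same derivative identity (with $x_+$ replaced by $z$) propagates simplicity from $h$ to $f$. Taking $\delta=\delta_h$ furnishes the required exponent. The equivalence with $\textup{Arg}(z)/\pi\in\Q$ is a routine trigonometric computation: $z^\delta\in\R_{>0}$ is equivalent to $\delta\cdot\textup{Arg}(z)\in 2\pi\Z$, which is equivalent to $\textup{Arg}(z)/\pi\in\Q$; the reverse implication, writing $\textup{Arg}(z)/\pi=p/q$ in lowest terms, yields $z^{2q}\in\R_{>0}$ with $2q\ge 2$, the degenerate case $\textup{Arg}(z)=0$ being ruled out by $z\ne x_+$.

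No step poses a genuine obstacle; the whole argument is bookkeeping on top of Lemmas \ref{lm-8} and \ref{lm-81}. The only point requiring a moment's care is the transfer of simplicity from $h$ to $f$, which is the one-line derivative identity displayed above.
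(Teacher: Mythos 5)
Your argument is correct, and it is the natural deduction from Lemmas~\ref{lm-8} and~\ref{lm-81}; the paper states Corollary~\ref{cor-af8} without proof, clearly intending exactly this kind of transfer along the divisibility $f\mid h$ with $h=fg\in\Phi_0$. All the checks hold: stripping the monomial factor from $g$ keeps $g$ monic and $fg\in\Phi_0$ (since $\lt(x^k h)=x^k h^+$ iff $\lt(h)=h^+$); the running assumption $f(0)\ne0$ together with $g(0)\ne 0$ ensures $h(0)\ne0$; every root of $f$ is a root of $h$; and the derivative identity $h'(\alpha)=f'(\alpha)g(\alpha)$ at a common root $\alpha$ correctly propagates simplicity from $h$ to $f$. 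One remark: the simplicity transfer can be done even more directly via divisibility of multiplicities --- $f\mid h$ means the multiplicity of any root in $f$ is at most its multiplicity in $h$, so a simple root of $h$ that is a root of $f$ is automatically a simple root of $f$ --- which avoids the need to separately observe $g(x_+)\ne0$. Your handling of the ``or equivalently'' clause is also fine: the reduction to $\delta\cdot\textup{Arg}(z)\in2\pi\Z$ and the reverse direction via $\textup{Arg}(z)/\pi=p/q$ giving $z^{2q}\in\R_{>0}$ with $2q\ge2$, together with $z\ne x_+$ ruling out $\textup{Arg}(z)=0$, closes the argument.
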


\begin{exmp}
$f = (x^2 -5)(x^2 -2x + 5) \notin\Phi_1$, because the root
$z=1 + 2\im$ satisfies $|z|=\sqrt{5}$ but $z^\delta\notin\R_{>0}$
for any $\delta\in\N$.
\end{exmp}

The following example shows that the multiplicity for
a root $z$ satisfying $|z| < x_+$ could be any number.
\begin{exmp}
For any $n,k\in\N_{>1}$, $(x+1)^n(x-k)\in\Phi_1$.
Let $n=1$, $(x+1)(x-k)\in\Phi_0$.
Let $f_1(x)=(x+1)^2$ and $f_{n+1}(x)=f_n(x)(x^{2\lfloor\footnotesize{\textup{deg}}(f_n)/2\rfloor+1}+1)$
for $n>1$. Then we have $(x+1)^{n+1}\,|\,f_n(x)$, $f_n(x)\in\Z^{>0}[x]$, and all coefficients of $f_n$ are either $1$ or $2$. Thus, $f_n(x)(x-k)\in\Phi_0$ and $(x+1)^n(x-k)\in\Phi_1$ by definition.
\end{exmp}

\begin{lem}\label{lm-Ru}
Let $q(x)\in\Zx$ be a primitive irreducible polynomial and $\delta\in\N_{>1}$. Then $(q)_{\Zx}\bigcap\Z[x^\delta]=(\widetilde{q}(x^\delta))_{\Z[x^\delta]}$,
where $\widetilde{q}\in\Z[x]$ is primitive and irreducible and
$\widetilde{q}(x^\delta)^m = R_u(u^\delta-x^\delta,q(u))$ for some $m\in\N$.
We use $R_u$ to denote the Sylvester resultant w.r.t. the variable $u$.
Furthermore, the roots of $\widetilde{q}(x)$ are $\{z^\delta\,|\, q(z)=0\}$.
\end{lem}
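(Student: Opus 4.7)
Let $z_1,\ldots,z_n\in\overline{\Q}$ be the (distinct) roots of $q$. The plan is to identify $\widetilde{q}$ via Galois theory, establish the ideal equality with two applications of Gauss's lemma, and verify the resultant formula by direct computation. For the construction, the Galois group of the splitting field of $q$ acts transitively on $\{z_1,\ldots,z_n\}$ and hence on the underlying set of the multiset $\{z_1^\delta,\ldots,z_n^\delta\}$, so every Galois conjugate of $z_1^\delta$ occurs with the same multiplicity $m$ in this multiset, giving
\[
\prod_{j=1}^n(y-z_j^\delta)=\widetilde{Q}(y)^m,
\]
where $\widetilde{Q}\in\Q[y]$ is the monic minimal polynomial of $z_1^\delta$. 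I would then define $\widetilde{q}\in\Z[y]$ as the primitive integer polynomial proportional to $\widetilde{Q}$ (with sign fixed below); it is automatically primitive and irreducible, and its root set is exactly $\{z^\delta\mid q(z)=0\}$, settling the \emph{Furthermore} clause.

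For the ideal equality, I would take $p(x)=P(x^\delta)\in\Z[x^\delta]$ with $P\in\Z[y]$. Primitivity of $q$ together with Gauss's lemma yields the chain
\begin{align*}
p\in(q)_{\Z[x]}\ \Longleftrightarrow\ q\mid p\text{ in }\Q[x]&\ \Longleftrightarrow\ P(z_j^\delta)=0\ \forall j\\
&\ \Longleftrightarrow\ \widetilde{Q}\mid P\text{ in }\Q[y]\ \Longleftrightarrow\ \widetilde{q}\mid P\text{ in }\Z[y],
\end{align*}
where the third equivalence uses Galois transitivity on $\{z_j\}$ and the last is a second application of Gauss (now with $\widetilde{q}$ primitive). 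This gives $(q)_{\Z[x]}\cap\Z[x^\delta]=(\widetilde{q}(x^\delta))_{\Z[x^\delta]}$.

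For the resultant identity, I would apply the product formula with $\zeta=e^{2\pi\sqrt{-1}/\delta}$:
\[
R(x):=R_u(u^\delta-x^\delta,q(u))=\prod_{k=0}^{\delta-1}q(\zeta^k x)=(-1)^{n\delta}a_n^\delta\prod_{j=1}^n(z_j^\delta-x^\delta)=\pm a_n^\delta\,\widetilde{Q}(x^\delta)^m,
\]
which lies in $\Z[x^\delta]$ by $\zeta$-symmetry. Since $R(z_j)=0$ for every $j$ (take $u=z_j$ in the resultant, which then has the common factor $u-z_j$), $q\mid R$ in $\Z[x]$ by Gauss, so $R\in(\widetilde{q}(x^\delta))_{\Z[x^\delta]}$ by the ideal equality. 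Rewriting the explicit factorization as $R=\pm\bigl(a_n^\delta/\mathrm{lc}(\widetilde{q})^m\bigr)\,\widetilde{q}(x^\delta)^m$ and invoking primitivity of $\widetilde{q}^m$ (Gauss) forces the prefactor to be an integer unit; an appropriate sign normalization of $\widetilde{q}$ then yields $\widetilde{q}(x^\delta)^m=R$.

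The main obstacle is this final constant-matching step, i.e.\ showing $\mathrm{lc}(\widetilde{q})^m=\pm a_n^\delta$. The route I would take exploits that $a_nz_1$ is an algebraic integer (standard, since $q$ is primitive with leading coefficient $a_n$), hence so is $a_n^\delta z_1^\delta$; comparing the monic integer minimal polynomial of $a_n^\delta z_1^\delta$ with the primitivization $\widetilde{q}$ of $\widetilde{Q}$ pins down $\mathrm{lc}(\widetilde{q})$ precisely in terms of $a_n^\delta$ and closes the argument.
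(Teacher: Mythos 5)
Your route is genuinely different from the paper's (the paper defines $\widetilde{q}$ as the generator of $(q)_{\Q[x]}\cap\Q[x^\delta]$ using that $\Q[x^\delta]$ is a PID, proves $R_u(u^\delta-x^\delta,q(u))$ is primitive directly by reducing $\prod_l q(\zeta^l x)$ modulo each prime, and then gets $R=\widetilde{q}(x^\delta)^m$ via a square-free-part argument). Your Galois-theoretic identification of $\widetilde{Q}$, the two-sided Gauss argument for the ideal equality, and the \emph{Furthermore} clause are all correct. The gap is exactly the one you flag: the constant-matching. Primitivity of $\widetilde{q}^m$ (Gauss) combined with $R\in\Z[x]$ only forces the prefactor $\pm a_n^\delta/\lc(\widetilde{q})^m$ to be an \emph{integer}, not a unit; to conclude it is $\pm1$ you need the complementary fact that $R$ itself is primitive.

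The route you propose for closing this --- using that $a_n^\delta z_1^\delta$ is an algebraic integer and comparing its monic integral minimal polynomial with $\widetilde{q}$ --- cannot work as stated, because it never uses the primitivity of $q$. Indeed, ``$a_n z_1$ is an algebraic integer'' is a consequence of $q\in\Z[x]$ alone (via $a_n^{n-1}q(x/a_n)$ being monic in $\Z[x]$), not of primitivity. If you replace $q$ by $2q$, the polynomial $\widetilde{q}$ (which depends only on the algebraic number $z_1^\delta$) does not change, whereas $a_n$ is replaced by $2a_n$ and $R_u(u^\delta-x^\delta,2q(u))=2^\delta R_u(u^\delta-x^\delta,q(u))$ is no longer primitive --- so no comparison of $\widetilde{q}$ with the minimal polynomial of $a_n^\delta z_1^\delta$ can ``pin down $\lc(\widetilde{q})$ precisely in terms of $a_n^\delta$.'' What is missing is the observation that $R$ is primitive \emph{because $q$ is}: for any prime $p$, reducing the factorization $R=\prod_{l=1}^{\delta} q(\zeta^l x)$ modulo $p$ (over a suitable extension of $\mathbb{F}_p$) shows the image of $R$ is a product of nonzero factors, hence nonzero, so $p$ does not divide all coefficients of $R$. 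Once $R$ is known to be primitive, proportionality with the primitive $\widetilde{q}(x^\delta)^m$ over $\Q$ forces equality up to sign, and your sign normalization finishes the proof.
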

\begin{proof}
Let $q(x)=a\prod_{j=1}^n(x-z_j)$,  $\zeta_\delta=e^{2\pi \im/\delta}$, and
\begin{eqnarray*}
\overline{R}(x^\delta)
&=& R_u(u^\delta-x^\delta,q(u)) = \prod_{l=1}^\delta q(\zeta_\delta^{l}x).
\end{eqnarray*}
We claim  that $\overline{R}(x^\delta)$ is primitive.
We have $\textup{lc}(R_u(u^\delta-x^\delta,q(u)))=\textup{lc}(\prod_{l=1}^\delta q(\zeta_\delta^{l}x))= a^{\delta}$.
Let $c\in\Z$ be a prime factor of $a^{\delta}$ or $a$.
Since $q$ is primitive, $q\ne0\ (mod\ c)$. Let $q(x)=bx^m+\cdots\ (mod\ c)$.
Then $\lt(\overline{R}(x^\delta))= \textup{lt}(\prod_{l=1}^\delta q(\zeta_\delta^{l}x))=\prod_{l=1}^\delta b(\zeta_\delta^{l}x)^m= b^\delta x^{\delta m}\ne0\ (mod\ c)$. So $c\nmid \overline{R}(x^\delta)$ and thus $\overline{R}(x^\delta)$ is primitive.

Since $\Q[x^\delta]$ is a PID and $\overline{R}(x^\delta)\in(q)_{\Q[x]}\bigcap\Q[x^\delta]$,
there exists a primitive polynomial $\widetilde{q}\in\Zx$
such that $(\widetilde{q}(x^\delta))_{\Q[x^\delta]}=(q)_{\Q[x]}\bigcap\Q[x^\delta]$.
Since $q(x)|\widetilde{q}(x^\delta)$ and $q$ is irreducible,
$\widetilde{q}(x)$ must be irreducible.
%
Since both $q(x)$ and  $\widetilde{q}(x)$ are primitive,
we can deduce
 $(\widetilde{q}(x^\delta))_{\Z[x^\delta]}=(q)_{\Z[x]}\bigcap\Z[x^\delta]$
from  $(\widetilde{q}(x^\delta))_{\Q[x^\delta]}=(q)_{\Q[x]}\bigcap\Q[x^\delta]$.

Since $q(x)|\widetilde{q}(x^\delta)$, $Z_\delta = \{\zeta_\delta^k z_j\,|\, k=1,\ldots,\delta,j=1,\ldots,n\}$ is a subset of the roots of $\widetilde{q}(x^\delta)$.
Let $\overline{S}(x)$ be the square-free part of  $\overline{R}(x)\in\Zx$,
which is also primitive.
Since $Z_\delta$ contains exactly the roots of $\overline{R}(x^\delta)$ and $\overline{S}(x^\delta)$,
we have $\overline{S}(x) | \widetilde{q}(x)$.
Since $\widetilde{q}(x)$ is irreducible and $\overline{S}(x)$ is the square-free part of $\overline{R}(x)$, we have $\overline{S}(x) = \widetilde{q}(x)$
and hence $\overline{R}(x^\delta) = \widetilde{q}(x^\delta)^m$ for some $m\in\N[x]$.
Finally, since the roots of $\widetilde{q}(x^\delta)$ are $\Z_\delta$,
the roots of $\widetilde{q}(x)$ are $\{z^\delta\,|\, q(z)=0\}$.
\end{proof}

\begin{cor}\label{lm-z1}
Let $\delta\in\N$ and  $f=\prod_{j=1}^m q_j^{\alpha_j}$,  where
$\in\N$ and $q_j$ are primitive irreducible polynomials in $\mathbb{Z}[x]$ with positive leading coefficients.
%
Let $q_i^*(x^\delta)$ be the square-free part of $R_u(u^\delta-x^\delta,q_i(u))$  and  $f^* \triangleq \lcm(\{q_j^{*\alpha_j}\,|\, j\})$.
Then
\begin{equation}\label{eq-f*}
(f)_{\Zx}\bigcap\mathbb{Z}[x^\delta] = (f^*(x^\delta))_{\Z[x^\delta]}.
\end{equation}
Furthermore, the roots of $f^*(x)$ are $\{z^\delta\,|\, f(z)=0\}$.
\end{cor}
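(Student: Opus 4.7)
The plan is to bootstrap this corollary from Lemma \ref{lm-Ru} applied to each irreducible factor $q_j$ separately, and then to combine the pieces via the lcm construction. Lemma \ref{lm-Ru} tells us that each $q_j^*\in\Z[x]$ is primitive and irreducible, that its roots are exactly $\{z^\delta : q_j(z)=0\}$, and that $q_j(x)\mid q_j^*(x^\delta)$. From this the roots statement is immediate: the set of roots of $f^*=\lcm((q_j^*)^{\alpha_j})$ is $\bigcup_j\{z^\delta : q_j(z)=0\}=\{z^\delta : f(z)=0\}$.

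For the inclusion $(f^*(x^\delta))_{\Z[x^\delta]}\subseteq(f)_{\Zx}\cap\Z[x^\delta]$, I chain divisibilities: $q_j\mid q_j^*(x^\delta)$ yields $q_j^{\alpha_j}\mid(q_j^*)^{\alpha_j}(x^\delta)\mid f^*(x^\delta)$, and because the $q_j$ are pairwise coprime, $f=\prod q_j^{\alpha_j}\mid f^*(x^\delta)$ in $\Q[x]$; since $f$ is primitive (Gauss), the divisibility holds in $\Zx$, placing $f^*(x^\delta)$ into the intersection.

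For the reverse inclusion, take $g\in(f)_{\Zx}\cap\Z[x^\delta]$ and write $g(x)=p(x^\delta)$ with $p\in\Z[y]$. For each $j$ we have $q_j^{\alpha_j}\mid p(x^\delta)$ in $\Q[x]$, and the heart of the argument is to upgrade this to $(q_j^*)^{\alpha_j}\mid p(y)$ in $\Q[y]$. Since $f(0)\ne 0$, every root $z$ of $q_j$ is nonzero, so the factorization $x^\delta-z^\delta=\prod_{k=0}^{\delta-1}(x-\zeta^k z)$ with $\zeta=e^{2\pi\im/\delta}$ exhibits $z$ as a simple root of $x^\delta-z^\delta$; consequently the multiplicity of $z$ in $p(x^\delta)$ equals the multiplicity of $z^\delta$ in $p(y)$. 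Therefore $q_j^{\alpha_j}\mid p(x^\delta)$ forces each root $z^\delta$ of $q_j^*$ to be a root of $p$ of order at least $\alpha_j$, i.e.\ $(q_j^*)^{\alpha_j}\mid p$ in $\Q[y]$. Taking lcms gives $f^*\mid p$ in $\Q[y]$, and primitivity of each $q_j^*$ makes $f^*$ primitive as well, so Gauss's lemma upgrades this to $f^*\mid p$ in $\Z[y]$, whence $g=p(x^\delta)\in(f^*(x^\delta))_{\Z[x^\delta]}$.

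The main obstacle is the multiplicity-transfer step: a priori several distinct roots of $q_j^*$ could be forced to collapse after composing with $x^\delta$, but the assumption $f(0)\ne 0$ keeps the map $x\mapsto x^\delta$ unramified at every root of $q_j$ and rules out this pathology. Everything else is routine bookkeeping combining Lemma \ref{lm-Ru}, Gauss's lemma, and the universal property of the lcm.
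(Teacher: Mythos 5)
Your proof is correct and takes essentially the same route as the paper: apply Lemma \ref{lm-Ru} to each irreducible factor $q_j$ and combine the pieces via the lcm. The paper's own argument is a one-line chain of ideal equalities that silently asserts $(q_i^{\alpha_i})_{\Zx}\cap\Z[x^\delta] = ((q_i^*(x^\delta))^{\alpha_i})_{\Z[x^\delta]}$; your multiplicity-transfer step (using that $x\mapsto x^\delta$ is unramified at the nonzero roots of $q_j$, so multiplicities pass cleanly from $p(x^\delta)$ to $p$) is exactly the justification that the paper omits, and your phrasing via divisibility plus Gauss's lemma is an equivalent but more explicit rendering of the same reduction.
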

\begin{proof}
By Lemma \ref{lm-Ru}, we have $(q_i)_{\Zx}\bigcap\mathbb{Z}[x^\delta]=(q_i^*(x^\delta))_{\Z[x^\delta]}$.
Then
$(f)_{\Zx}\bigcap\mathbb{Z}[x^\delta]$
$=\bigcap\limits_{i=0}^s((q_i^{\alpha_i})_{\Zx}\bigcap\mathbb{Z}[x^\delta]
= \bigcap\limits_{i=0}^s(q_i^{*\alpha_i})_{\Z[x^\delta]}
=(\textup{lcm}(\{q_i^{*\alpha_i}\,|\,i\}))_{\Z[x^\delta]} = (f^*(x^\delta))_{\Z[x^\delta]}.$
From $f^* \triangleq \lcm(\{q_j^{*\alpha_j}\,|\, j\})$ and Lemma \ref{lm-Ru},
the roots of $f^*(x)$ are $\{z^\delta\,|\, f(z)=0\}$.
\end{proof}

\begin{thm}\label{th-cr1}
Let  $f\in\,$$\mathbb Z[x]$ have a unique positive root $x_+$
and any root $w$ of $f$ satisfies $|w|\le x_+$.
If there exists a minimal $\delta\in\N_{>1}$ such that
for all root $z\ne x_+$ of $f$, $|z|=x_+$ implies $z^\delta\in\R_{>0}$.
Let $f^*(x^\delta)\in\Z[x^\delta]$ be the polynomial in \bref{eq-f*}.
Then  $f\in\Phi_1$ if and only if  $\lc(f) = \lc(f^*)$ and
$f^* \in \Phi_1$.
\end{thm}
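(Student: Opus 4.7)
The plan is to use Corollary \ref{lm-z1} as a bridge between witnesses of $f\in\Phi_1$ and witnesses of $f^*\in\Phi_1$: any witness $h=fg\in\Phi_0$ of $f\in\Phi_1$ must lie in $\Z[x^\delta]$, so by Corollary \ref{lm-z1} it factors as $h=f^*(x^\delta)G(x^\delta)$; conversely, a witness $\tilde h=f^*g^*\in\Phi_0$ of $f^*\in\Phi_1$ lifts via the substitution $x\mapsto x^\delta$ to a witness of $f\in\Phi_1$. Note that this substitution preserves $\Phi_0$-membership, because multiplying every exponent by $\delta$ preserves the structure ``leading term equals the positive part.''

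For the forward direction, let $g\in\Zx$ be monic with $h=fg\in\Phi_0$. By Lemma \ref{lm-8}, $x_+$ remains the unique simple positive real root of $h$; and because $f$ has at least one root $z\ne x_+$ with $|z|=x_+$, so does $h$. Lemma \ref{lm-81} then gives $h=\hat h(x^{\delta_h})$ with $\delta_h=\gcd\{i:h_i\ne 0\}>1$, and every such root $z$ of $f$ satisfies $z^{\delta_h}\in\R_{>0}$. Combined with the hypothesis $z^\delta\in\R_{>0}$ and the closure of $\R_{>0}$ under integer combinations of exponents, $z^{\gcd(\delta,\delta_h)}\in\R_{>0}$; since $\gcd(\delta,\delta_h)=1$ would force $z=x_+$, we get $\gcd(\delta,\delta_h)\ge 2$, and the minimality of $\delta$ then forces $\delta\mid\delta_h$. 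Hence $h\in\Z[x^\delta]$, so Corollary \ref{lm-z1} gives $h=f^*(x^\delta)G(x^\delta)$ with $G\in\Z[x]$. Writing $f^*(x^\delta)=fR(x)$ with $R\in\Zx$ (Gauss's lemma applied to the primitive $f$), we have $\lc(f^*)=\lc(f)\lc(R)\ge \lc(f)$, while the leading-coefficient identity $\lc(f)=\lc(h)=\lc(f^*)\lc(G)$ with $\lc(G)\in\Z_{>0}$ forces $\lc(G)=1$ and $\lc(f^*)=\lc(f)$. Setting $\tilde h=f^*G$, we have $h=\tilde h(x^\delta)\in\Phi_0$, and the substitution remark then yields $\tilde h\in\Phi_0$, witnessing $f^*\in\Phi_1$ through the monic cofactor $G$.

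For the reverse direction, assume $\lc(f^*)=\lc(f)$ (so the cofactor $R$ above is monic) and choose monic $g^*$ with $\tilde h=f^*g^*\in\Phi_0$. Define $g=R(x)g^*(x^\delta)$; then $fg=f^*(x^\delta)g^*(x^\delta)=\tilde h(x^\delta)\in\Phi_0$ by the substitution remark, and $g$ is monic because both $R$ and $g^*$ are. Hence $f\in\Phi_1$.

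The crux is the forward direction, specifically proving $h\in\Z[x^\delta]$ by combining the structural description of roots on the circle $|z|=x_+$ in Lemma \ref{lm-81} with the minimality of $\delta$. Once this is in place, Corollary \ref{lm-z1} together with elementary leading-coefficient bookkeeping handles the remaining integrality constraint $\lc(f^*)=\lc(f)$, and the reverse direction is a routine substitution.
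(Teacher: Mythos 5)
Your proof is correct and follows essentially the same route as the paper's: in the forward direction you show that any witness $h=fg\in\Phi_0$ lies in $\Z[x^\delta]$ (via Lemma \ref{lm-81} plus the minimality of $\delta$, with the gcd argument replacing the paper's direct appeal to part 3 of that lemma), then apply Corollary \ref{lm-z1} and do the leading-coefficient bookkeeping; the reverse direction is the same substitution $x\mapsto x^\delta$ as the paper uses. Your explicit ``substitution remark'' that $\tilde h\in\Phi_0 \Leftrightarrow \tilde h(x^\delta)\in\Phi_0$ is the step the paper leaves implicit, and it is correct.
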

\begin{proof} ``$\Leftarrow$"
Since $\lc(f) = \lc(f^*)$ and $(f)\cap\Z[x^\delta] = (f^*(x^\delta))$,
there exists a monic polynomial $h\in\Z[x]$ such that $f^*(x^\delta) = fh$.
Since $f^* \in \Phi_1$, there exists a monic polynomial $g\in\Zx$ such that $f^*(x) g(x) \in \Phi_0$.
Then $f^*(x^\delta) g(x^\delta) =  fhg(x^\delta)\in \Phi_0$.
Since $hg(x^\delta)$ is monic, we have $f\in \Phi_1$.

``$\Rightarrow$"
Since $f\in\,$$\Phi_1$, there exists a primitive polynomial $h\in\,$$(f)\bigcap \Phi_0$ with $h(0)\neq\,$$0$ and $ \textup{lc}(h)=\,$$\textup{lc}(f)$.
Each such  $h$ has some roots whose absolute value is $x_+$.
Since $f|h$, by part 3 of Lemma \ref{lm-81} we have $\delta| \delta_h$,
where $\delta_h =\gcd\{k\,|\, x^k\hbox{ is in } h \}$.
By Lemma \ref{lm-81}, $h\in\,$$\mathbb{Z}[x^{\delta_h}] \subset \Z[x^{\delta}]$. Thus $h\in(f)\bigcap\mathbb{Z}[x^\delta]=(f^*)_{\Z[x^\delta]}$.
Since $\textup{lc}(f)\,|\,\textup{lc}(f^*)\,|\,\textup{lc}(h)$ and $\textup{lc}(f)=\textup{lc}(h)$,
we have $\textup{lc}(f)=\textup{lc}(f^*)=\textup{lc}(h)$, so $f^*\in \Phi_1$.
\end{proof}

\begin{exmp}\label{ex-51}
Let $f=(x^2-2)(x+1)$. Then $\delta=2$ and $f^* = (x-2)(x-1)$ has two positive roots and
hence $f\not\in\Phi_1$ by Corollary \ref{cor-af8} and Theorem \ref{th-cr1}.

Let  $f_1=x^2-2$, $f_2 = x^2-2x+2$, and $f=f_1f_2$.
Then $\delta = 8$,  $f_1^*=x-16$, $f_2^*=x-16$, and $f^*=x-16$.
Hence $f\in\Phi_1$.
%
\end{exmp}

\begin{cor}\label{cor-z3}
Let $f^*(x)$ be the polynomial defined in Theorem \ref{th-cr1}.
Then $f^*(x)$ has only one root (may be a multiple root) whose absolute value is $x_+^\delta$
and any root $z\ne
x_+^\delta$ of $f^*$ satisfies $|z| < x_+^\delta$.
\end{cor}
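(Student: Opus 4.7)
The plan is to trace roots through the substitution $z \mapsto z^\delta$. By Corollary \ref{lm-z1}, the roots of $f^*(x)$ are exactly $\{w^\delta \,|\, f(w)=0\}$ (counted as a set, not with multiplicity). So I only need to classify what absolute values $|w^\delta|$ can occur as $w$ ranges over the roots of $f$, and decide which $w$'s collapse to the same image $x_+^\delta$.

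First I would partition the roots of $f$ into two classes according to the hypothesis of Theorem \ref{th-cr1}: those with $|w|<x_+$, and those with $|w|=x_+$ (there are no roots with $|w|>x_+$). For $w$ in the first class, $|w^\delta| = |w|^\delta < x_+^\delta$, so its image in $f^*$ contributes a root of absolute value strictly less than $x_+^\delta$. This takes care of the ``any root $z \ne x_+^\delta$'' part provided I can show the second class all map to the single point $x_+^\delta$.

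For the second class, I would argue as follows. If $w = x_+$, then obviously $w^\delta = x_+^\delta$. If $w \ne x_+$ with $|w|=x_+$, then by the defining property of $\delta$ in Theorem \ref{th-cr1} we have $w^\delta \in \R_{>0}$. Combining $w^\delta \in \R_{>0}$ with $|w^\delta| = |w|^\delta = x_+^\delta$ forces $w^\delta = x_+^\delta$. Thus every root of $f$ with $|w| = x_+$ is sent to the single value $x_+^\delta$ under $w \mapsto w^\delta$, so $x_+^\delta$ is the unique root of $f^*$ of absolute value $x_+^\delta$, possibly appearing with multiplicity in $f^*$ (depending on how $f^*$ was obtained as an lcm in Corollary \ref{lm-z1}). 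Combining with the previous paragraph yields the claim.

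There is no serious obstacle here: the entire content of the corollary is a direct consequence of Corollary \ref{lm-z1} (which identifies the roots of $f^*$) together with the hypothesis of Theorem \ref{th-cr1} (which constrains the roots of $f$ on the circle $|w|=x_+$). The only point to be slightly careful about is the phrase ``may be a multiple root''—I would simply note that this refers to the multiplicity in $f^*$, which arises because $f^*$ is built as an lcm of the $q_j^{*\alpha_j}$, and need not comment further since the statement only concerns the distinct root $x_+^\delta$ and the location of the other roots.
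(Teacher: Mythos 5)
Your proposal is correct and follows essentially the same route as the paper: both start from Corollary~\ref{lm-z1} to identify the roots of $f^*$ as $\{w^\delta \,|\, f(w)=0\}$ and then invoke the two hypotheses of Theorem~\ref{th-cr1} (that $|w|\le x_+$ for every root $w$ of $f$, and that $|w|=x_+$ implies $w^\delta\in\R_{>0}$). The paper's proof is terse where yours spells out the case split and the step ``$w^\delta\in\R_{>0}$ together with $|w^\delta|=x_+^\delta$ forces $w^\delta=x_+^\delta$,'' but the underlying argument is identical.
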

\begin{proof}
By Corollary \ref{lm-z1}, the roots of $f^*(x)$ are $\{z^\delta\,|\, f(z)=0\}$.
Then the corollary comes from the fact that $x_+$ is the unique positive real root of $f$
and $f(z)=0, |z|=x_+$ imply $z^\delta\in\R_{>0}$.
\end{proof}

By Corollary \ref{cor-z3}, when $f$ has a unique positive real root $x_+$,
we reduce the decision of $f\in \Phi_1$ into the decision of $f^*\in \Phi_1$,
where $f^*$ has only one root with absolute value $x_+^\delta$.

\begin{lem}
If $f\in\,$$\Phi_1\setminus\Phi_0$ has a unique positive real root $x_+$, then $x_+\geqslant\,$$1$.
\end{lem}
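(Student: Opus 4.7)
The plan is to proceed by contradiction, assuming $0<x_+<1$. Since $f\in\Phi_1$, choose a monic $g\in\Z[x]$ with $h=fg\in\Phi_0$. By Lemma \ref{lm-8}, $h$ has a simple unique positive real root $x_+^{h}$, and every complex root $z$ of $h$ satisfies $|z|\le x_+^{h}$. Because $f\mid h$ forces the positive real $x_+$ to be a root of $h$, uniqueness yields $x_+^{h}=x_+$; in particular every complex root of $g$ (being a root of $h$) has modulus at most $x_+<1$.

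The next step is to extract any zero-root of $g$: write $g=x^{k}\tilde{g}$ with $\tilde{g}\in\Z[x]$ monic and $\tilde{g}(0)\ne 0$. If $\deg(\tilde{g})\ge 1$, let $w_1,\ldots,w_m$ be the complex roots of $\tilde{g}$ (with multiplicity). These are exactly the nonzero roots of $g$, so $|w_j|\le x_+<1$ for every $j$, and monicity of $\tilde{g}$ gives
$$|\tilde{g}(0)|=\prod_{j=1}^{m}|w_j|\le x_+^{m}<1,$$
contradicting $\tilde{g}(0)\in\Z\setminus\{0\}$.

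The remaining case is $\tilde{g}=1$, i.e.\ $g=x^{k}$, so $h=x^{k}f$. Multiplication by $x^{k}$ merely shifts the coefficients of $f$ by $k$ zero slots at the low end, leaving the sign pattern unchanged, so $h\in\Phi_0$ if and only if $f\in\Phi_0$; this contradicts the hypothesis $f\notin\Phi_0$. Both cases being impossible, we conclude $x_+\ge 1$.

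The argument combines two elementary ingredients: Lemma \ref{lm-8} to bound the moduli of the roots of any $\Phi_0$-polynomial, and the integrality fact $|\tilde{g}(0)|\ge 1$ for the nonzero constant term of a monic integer polynomial. I do not expect a genuine obstacle; the only point to be careful about is handling the possibility $g(0)=0$, which is why the decomposition $g=x^{k}\tilde{g}$ must be carried out before applying the product-of-moduli estimate.
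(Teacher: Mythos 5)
Your proof is correct and takes essentially the same approach as the paper: both rely on Lemma \ref{lm-8} applied to $fg$ to bound the moduli of the roots of $g$ by $x_+$, together with the integrality of the nonzero constant term of the monic cofactor (the paper handles the $g(0)=0$ case by a ``without loss of generality'' remark, which is exactly your factorization $g=x^k\tilde g$). The only cosmetic difference is that you phrase it as a contradiction from $x_+<1$ whereas the paper deduces $x_+\ge\max|z|\ge 1$ directly.
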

\begin{proof}
There exists a monic polynomial $g\in\,$$\mathbb{Z}[x]$ such that $fg\in\,$$\Phi_0$. Since $f\notin\,$$\Phi_0$, $g$ is not a monomial. Without loss of generality we assume $g(0)\ne0$, and then $\prod_{g(z)=0}|z|=|g(0)/\textup{lc}(g)|=|g(0)|\ge1$
 which implies $\max_{g(z)=0}(|z|)\geqslant\,$$1$.
Since $x_+$ is the unique positive root of $fg$, by Lemma \ref{lm-8}, we have $x_+\geqslant\,$$\max_{g(z)=0}(|z|)\geqslant\,$$1$.
\end{proof}

The following two lemmas give simple criteria to check whether $f\in \Phi_1$
in the case of $f(1)=0$.

\begin{lem}\label{lm-c11}
Let $f\in\,$$\mathbb Z[x]$ be a primitive polynomial, $f(1)=\,$$0$.
If $\delta\in\N$ is the smallest number such that all root $z$ of $f$ satisfies $z^\delta  =1$,
then $f\in\,$$\Phi_1$ if and only if $f^*(x)=\,$$x-1$, where $f^*$ is defined in \bref{eq-f*}.
\end{lem}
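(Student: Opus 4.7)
The plan is to reduce to Theorem~\ref{th-cr1} and then apply the simple-positive-root restriction contained in Corollary~\ref{cor-af8}.

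First, I would establish the structural facts about $f$. Because every root $z$ of $f$ satisfies $z^\delta=1$, every root lies on the unit circle; combined with primitivity and $\lc(f)>0$, this forces $f$ to be a product of cyclotomic polynomials, so in particular $\lc(f)=1$. The only real numbers on the unit circle are $\pm 1$, so $x_+=1$ is the unique positive real root of $f$, and every root $w$ of $f$ satisfies $|w|=1=x_+$. By Corollary~\ref{lm-z1}, the roots of $f^*$ form the set $\{z^\delta\mid f(z)=0\}=\{1\}$, so $f^*(x)=(x-1)^m$ for some $m\geq 1$, and $f^*$ is monic with $\lc(f^*)=1=\lc(f)$.

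Second, assuming $\delta\geq 2$ I would invoke Theorem~\ref{th-cr1} directly. The minimal $\delta$ it requires (so that every root $z\neq x_+$ with $|z|=x_+$ has $z^\delta\in\R_{>0}$) coincides with the minimal $\delta$ of this lemma, because on the unit circle $z^\delta\in\R_{>0}$ is equivalent to $z^\delta=1$. Since $\lc(f)=\lc(f^*)=1$, the theorem yields $f\in\Phi_1\iff (x-1)^m\in\Phi_1$. Corollary~\ref{cor-af8} states that a positive real root of an element of $\Phi_1$ must be simple, so $(x-1)^m\in\Phi_1$ forces $m=1$; conversely, $m=1$ gives $f^*=x-1\in\Phi_0\subseteq\Phi_1$ by Lemma~\ref{lm-s1}. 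In the remaining case $\delta=1$, every root of $f$ equals $1$, whence $f=(x-1)^k$ and $f^*=f=(x-1)^k$; the same application of Corollary~\ref{cor-af8} forces $k=1$, again matching $f^*=x-1$.

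The main point requiring care is the identification of the two $\delta$'s, so that Theorem~\ref{th-cr1} applies verbatim; the remaining work is bookkeeping. The $\delta=1$ boundary has to be treated separately because Theorem~\ref{th-cr1} assumes $\delta\in\N_{>1}$, but it collapses at once to the simple-root obstruction provided by Corollary~\ref{cor-af8}.
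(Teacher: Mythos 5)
Your proof is correct, and it follows essentially the same overall strategy as the paper's (reduce via Theorem~\ref{th-cr1}, then use the simplicity of positive real roots in $\Phi_1$), but the two proofs diverge in where the simplicity fact is applied. The paper's forward direction ($f\in\Phi_1\Rightarrow f^*=x-1$) works directly on $f$: it invokes Corollary~\ref{cor-af8}/Lemma~\ref{lm-81} to conclude that all roots of $f$ are simple (since they all sit on $|z|=x_+=1$), hence $f$ is square-free and $f\mid x^\delta-1$, and then deduces $f^*=x-1$ from $f^*(x^\delta)\mid x^\delta-1$ together with the fact that $f^*$ has $1$ as its only root. You instead push through Theorem~\ref{th-cr1} to get $f^*\in\Phi_1$, identify $f^*=(x-1)^m$ via Corollary~\ref{lm-z1}, and then apply the simplicity statement of Corollary~\ref{cor-af8} to $f^*$ rather than to $f$, forcing $m=1$. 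Your route is a bit more uniform since both directions factor through Theorem~\ref{th-cr1}, at the cost of having to split off $\delta=1$ separately (which you handle correctly, noting that then $f=(x-1)^k$ and $f^*=f$). The paper's direct argument avoids that case split but is somewhat terse about it. Both reach the same place, and your identification of the two $\delta$'s (the lemma's ``$z^\delta=1$ for all roots'' versus the theorem's ``$z^\delta\in\R_{>0}$ for $z\ne x_+$ on $|z|=x_+$'') is the right justification for why Theorem~\ref{th-cr1} applies verbatim.
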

\begin{proof}
By Theorem \ref{th-cr1}, if $f^*(x)=\,$$x-1$ then $f\in\Phi_1$.
Suppose $f\in\Phi_1$. By Lemma \ref{lm-81}, any root of $f$ is simple and hence $f$ is square-free.
Let $\delta = \lcm\{m\in\N\,|\, z^m = 1\}$. Since $f$ is primitive,
$\delta\in\N$ is the smallest number such that $f(x)\,|\,x^\delta-1$ in $\Z[x]$.
Therefore, so $f^*(x)\,=\,x-1$.
\end{proof}
\begin{exmp}\label{ex-cr1}
Let $f=(x-1)(x^2+1)(x^3+1)$. Then $\delta=12$ and $f^* = x-1$. So, $f\in\Phi_1$.
Let $f=(x-1)(x^2+1)^2(x^3+1)$. Then $\delta=12$ and $f^* = (x-1)^2$. So, $f\notin\Phi_1$.
\end{exmp}

\begin{lem}\label{lm-c12}
If $f(1)=0$ and any other root $z$ of $f$ satisfies $|z|<\,$$1$, then $f\in\,$$\Phi_1$ if and only if $f(x)/(x-\,$$1)\in\,$$\mathbb{Z}[x^\delta]$ for some $\delta\in\N_{>0}$ and $f(x)(x^\delta-\,$$1)/(x-\,$$1)\in\Phi_0$.
\end{lem}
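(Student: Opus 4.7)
The plan is to prove both directions of the equivalence. The \emph{if} direction is immediate, while the \emph{only if} direction combines Kronecker's theorem on monic integer polynomials with all roots in the closed unit disk together with the structural information provided by Lemma~\ref{lm-81}.

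For the \emph{if} direction, set $g(x) = (x^\delta - 1)/(x-1) = 1 + x + \cdots + x^{\delta-1}$, which is a monic polynomial in $\Z[x]$. By hypothesis $fg = f(x)(x^\delta-1)/(x-1) \in \Phi_0$, so $f \in \Phi_1$ directly from the definition.

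For the \emph{only if} direction, fix a monic $h \in \Z[x]$ with $fh \in \Phi_0$. I will show that $h$ must be of the shape $(x^\delta-1)/(x-1)$. Since multiplying by $x$ preserves $\Phi_0$-membership, I may strip off the maximal power of $x$ dividing $h$ and assume $h(0) \ne 0$. By Lemma~\ref{lm-8}, every root of $fh$ has modulus at most the unique positive real root of $fh$, which must be $1$ because $f(1) = 0$, and moreover this root is simple. Hence every root of $h$ has modulus at most $1$. Kronecker's theorem then forces every root of $h$ to be a root of unity, so $h$ is a product of cyclotomic polynomials. Since $1$ is simple in $fh$ and $(x-1) \mid f$, we have $(x-1) \nmid h$. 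Applying Lemma~\ref{lm-81}(1) to $fh$ forces all its unit-circle roots other than $1$ to be simple, so $h$ is square-free. If $fh$ has at least one unit-circle root other than $1$, Lemma~\ref{lm-81}(3) identifies the set of unit-circle roots of $fh$ as exactly the $\delta$-th roots of unity for $\delta = \delta_{fh} > 1$; the square-free cyclotomic factorization of $h$ must then supply precisely those $\delta$-th roots of unity distinct from $1$, giving $h = (x^\delta-1)/(x-1)$. In the degenerate case where $1$ is the sole unit-circle root of $fh$, $h$ has no roots at all and must equal $1$, so $\delta = 1$ works.

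Finally, to extract $f/(x-1) \in \Z[x^\delta]$, write $\tilde f = f/(x-1) = \sum_i a_i x^i$, so that $fh = \tilde f(x)(x^\delta - 1)$. By Lemma~\ref{lm-81}(2) this product lies in $\Z[x^\delta]$ (trivially if $\delta = 1$). The coefficient of $x^k$ in $\tilde f(x)(x^\delta - 1)$ is $a_{k-\delta} - a_k$, and the condition that these vanish for every $k$ with $\delta \nmid k$, combined with a backward induction starting from $k > \deg \tilde f$, yields $a_k = 0$ for every such $k$; thus $\tilde f \in \Z[x^\delta]$. The second required condition is immediate, since $f(x)(x^\delta-1)/(x-1)$ is literally $fh \in \Phi_0$. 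The principal obstacle is pinning down the shape of $h$, for which Kronecker's theorem is the crucial non-elementary input; the coefficient comparison at the end is then routine.
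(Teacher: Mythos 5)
Your proof is correct and follows essentially the same route as the paper: assume a monic $h$ with $fh\in\Phi_0$, show $h=(x^\delta-1)/(x-1)$ using the modulus and simplicity information supplied by Lemmas~\ref{lm-8} and~\ref{lm-81}, and then read off the two conditions. The one place you diverge is in establishing that every root of $h$ has modulus exactly $1$: you invoke Kronecker's theorem on monic integer polynomials with roots in the closed unit disk, whereas the paper uses the more elementary observation that $\prod_{h(z)=0}|z|=|h(0)/\lc(h)|=|h(0)|\ge 1$ together with the upper bound $|z|\le 1$ from Lemma~\ref{lm-8}. Kronecker is strictly stronger than what you need (you only use the modulus-$1$ conclusion downstream; the cyclotomic structure you obtain from it is also delivered by Lemma~\ref{lm-81}(3)), so the paper's product argument is more economical and keeps the proof self-contained. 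At the end you derive $f/(x-1)\in\Z[x^\delta]$ by direct coefficient comparison with a backward induction, where the paper instead factors $fh=(x^\delta-1)\cdot q(x^\delta)$ inside the subring $\Z[x^\delta]$ and cancels; both are fine, the paper's being slightly cleaner. You also handle the degenerate case $h=1$, $\delta=1$ explicitly, which the paper leaves implicit.
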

\begin{proof}
The necessity is obvious.
For the other direction, there exists a monic polynomial $g\in\,$$\mathbb{Z}[x]$ such that $fg\in\,$$\Phi_0$. We claim that each root $z$ of $g$ has absolute value $1$.
Since $g$ is monic,  $\prod_{g(z)=0}|z|\ge1$.
Since  $fg\in \Phi_0$ and $f(1)=0$,  $\max_{g(z)=0}|z| \le1$,
and the claim is proved.

By Lemma \ref{lm-8}, $fg\in\mathbb{Z}[x^{\delta}]$, where $\delta=\delta_{fg}$.
Since $f(1)=0$ and all other roots of $f$ have absolute value $<1$, we have $(x^\delta-1)\,|\,fg$ and $((x^\delta-1)/(x-1))\,|\,g$.
By part 3 of Lemma \ref{lm-81}, the roots of $fg$ with absolute value
$1$ are exactly the roots of $x^{\delta}-1$.
Since the absolute values of all roots of $g$ is $1$ and $g$ has no multiple roots by Lemma \ref{lm-81}, $g= (x^\delta-1)/(x-1)$.
Since $fg\in\mathbb{Z}[x^{\delta}]$ and $(x^\delta-1)\,|\,fg$, set $fg = (x^\delta-1)h(x^\delta)$ for $h\in\Zx$.
From $g= (x^\delta-1)/(x-1)$, we have $f/(x-1) = h(x^\delta)\in\Z[x^\delta]$.
\end{proof}

Now, only  when $f\notin \Phi_0$, $f$ has a unique positive real root $x_+>1$, and any other root of $f$ has absolute value $<x_+$, we
do not know how to decide $f\in \Phi_1$.
By computing many examples, we propose the following conjecture.
\begin{conj}\label{con-1}
If $f\in\,$$\mathbb{Z}[x]\backslash \Phi_0$ has a simple and unique positive real root $x_+$, $x_+ > 1$, and $x_+ > |z|$ for any other root $z$ of $f$, then $f\in \Phi_1$.
\end{conj}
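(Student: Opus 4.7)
Write $f(x) = (x - x_+)\,q(x)$ with $q \in \R[x]$ of degree $n-1$ and $\lc(q) = a_n > 0$. Under the conjecture hypothesis, $x_+$ is the only root of $f$ in $[0,\infty)$ and is simple, so $q$ has no nonnegative real roots; since $q(x_+) = f'(x_+) > 0$, it follows that $q > 0$ on $[0,\infty)$. For a monic $g \in \Z[x]$ of degree $m$, set $Q := q g = \sum_i Q_i x^i$, so that $fg = (x-x_+)\,Q$ has $x^i$-coefficient $Q_{i-1} - x_+ Q_i$. Hence $fg \in \Phi_0$ is equivalent to
\[
Q_{m+n-1} = a_n > 0 \qquad\text{and}\qquad Q_{i-1} \le x_+ Q_i \ \text{for } 1 \le i \le m+n-1,
\]
which forces $Q_i \ge 0$ by backward induction from $i = 0$. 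The task is thus to pick integers $g_{m-1}, \ldots, g_0$ so that the sequence $(Q_i)$ grows by at most a factor $x_+$ between consecutive indices.

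I plan to build $g$ iteratively from the leading coefficient downward, using $x_+ > 1$ at every step. Writing $Q_{m+n-1-\ell} = a_n g_{m-\ell} + K_\ell$, where $K_\ell$ depends only on $q$ and on the already-chosen $g_{m-1}, \ldots, g_{m-\ell+1}$, the constraint $0 \le Q_{m+n-1-\ell} \le x_+ Q_{m+n-\ell}$ confines $g_{m-\ell}$ to a real interval of length $x_+ Q_{m+n-\ell}/a_n$. The base case $Q_{m+n-1} = a_n$ yields an interval of length $x_+ > 1$, so an integer exists; choose $g_{m-1}$ maximally, so that $Q_{m+n-2}$ is as large as the upper bound permits. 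An induction, together with a short bootstrap phase during which $(Q_{m+n-\ell})_\ell$ is allowed to grow roughly by a factor $x_+$ per step, shows that the sequence stabilises above a positive threshold depending on $x_+$, from which point on every constraint-interval has length at least $1$ and supplies an integer candidate for the next coefficient of $g$.

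The main obstacle is the terminal step $\ell = m$: choosing the single integer $g_0$ simultaneously pins down the lowest $n$ coefficients $Q_{n-1}, \ldots, Q_0$ and must fulfil all the remaining inequalities $Q_{i-1} \le x_+ Q_i$ for $1 \le i \le n-1$ together with $Q_0 \ge 0$. Using the identity $q_{i-1} - x_+ q_i = a_i$ that follows from $(x-x_+)q = f$, these transform into $n$ linear inequalities in $g_0$ whose sense (upper or lower bound) is dictated by the signs of $a_0, a_1, \ldots, a_{n-1}$. Showing that the resulting feasible interval for $g_0$ has length at least $1$, and hence contains an integer, is the crux. I would attack it by one of two devices: (i) enlarge $g$ by an appended tail of the form $(1+x)^M$ and apply P\'olya's effective theorem (Lemma~\ref{lm-Po}) to the positive polynomial $q$ to smooth out the low-degree coefficients of $Q$, gaining enough room for an integer endpoint; or (ii) a global integer-programming style argument in the lattice $(f)_{\Zx}$ that exploits the dominance $|z| < x_+$ of every non-distinguished root of $f$, an approach well aligned with the framework of Section~5 of the paper.
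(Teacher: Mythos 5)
The statement you are trying to prove is not a theorem in the paper; it is Conjecture~\ref{con-1}, which the authors explicitly leave open ``based on numerical computations.'' There is consequently no paper proof to compare against, and the burden on any proposed argument is to actually close the conjecture.

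Your sketch does not do that, and you say so yourself. The reformulation is sound: writing $f=(x-x_+)q$ with $q>0$ on $[0,\infty)$, $Q=qg$, and translating $fg\in\Phi_0$ into the chain $Q_{i-1}\le x_+Q_i$ with $Q_0\ge 0$ is exactly the integer-programming picture of Section~5 (Lemma~\ref{lm-ip}), and the observation that the interval for $g_{m-1}$ has length $x_+>1$ is a correct starting point. But the entire difficulty is concentrated in what you call ``the crux'': once $g_0$ is to be chosen, it must satisfy $n$ simultaneous inequalities $Q_{i-1}\le x_+Q_i$ for $1\le i\le n-1$ together with $Q_0\ge 0$, and you only assert, without argument, that the feasible window has length at least $1$. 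The two ``devices'' you float are not proofs. Device (i) --- appending $(1+x)^M$ and invoking P\'olya --- gives positivity of real coefficients of $(1+x)^M q$, but that is Lemma~\ref{cor-r1}'s argument for $f$ with no positive root; here $f$ itself has the root $x_+$, and P\'olya applied to $q$ does nothing to reconcile the real interval for $g_0$ with the requirement that $g$ have integer coefficients and remain monic after the tail is appended. Device (ii) is just a pointer back to Section~5, where the paper also stops short (Proposition~\ref{lm-qu} handles only quadratics). There is also an unaddressed circularity in the induction: your interval lengths $x_+Q_{m+n-\ell}/a_n$ depend on the $Q_j$ you are simultaneously trying to control, and the claim that the sequence ``stabilises above a positive threshold'' is stated, not proved; a single step where the greedy choice pushes some $Q_j$ too small would collapse all subsequent intervals. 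In short, the framing is the right one and is consistent with the paper's Section~5 machinery, but the key quantitative step remains a genuine gap, which is precisely why the paper records this as a conjecture rather than a theorem.
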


\subsection{Algorithm for $f\in\Phi_1$}

Based on the results proved in the preceding section, we give the following algorithm to decide
whether $f\in\Phi_1$. Note that the last step of the algorithm depends on whether
Conjecture \ref{con-1} is true.

\begin{algorithm}[H]\label{alg-phi1}
  \caption{\bf --- Membership$\Phi_1$ $(f)$} \smallskip
  \Inp{$f\in\Zx$ such that $\lc(f)>0$, $f(0)\ne0$, and $f$ is primitive.}\\
  \Outp{Whether $f\in\Phi_1$.}\medskip

  \noindent
1. If $\lt(f)=f^+$, then $f\in\Phi_0\subset \Phi_1$.\\
2. If $f$ has no positive real roots, then $f\in \Phi_1$.\\
3. If $f$ has at least two positive real roots (with multiplicities counted), then $f\notin \Phi_1$.\\
4. Let $x_+$ be the simple and unique positive real root of $f$.\\
 \SPC 4.1. If $x_+<1$, or equivalently $f(1)>0$ , then $f\notin \Phi_1$.\\
 \SPC 4.2. If $x_+=1$ and all root $z$ of $f$ satisfies $z^\delta=1$ for some $\delta\in\N$,
then  $f\in\Phi_1$ if and only if  $f^* = x-1$, where $f^*$ is defined in \bref{eq-f*}.\\
 \SPC 4.3. If $x_+=1$ and any other root $z$ of $f$ satisfies $|z|<1$, then $f\in\,$$\Phi_1$ if and only if  $f(x)/(x-\,$$1)\in\,$$\mathbb{Z}[x^\delta]$ for some $\delta\in\,$$\N_{>1}$ and $f(x)(x^\delta-\,$$1)/(x-\,$$1)\in\Phi_0$.\\
 \SPC 4.4. If $f$ has a root $z$ such that $|z|>x_+$, then $f\notin \Phi_1$.\\
 \SPC 4.5. If $f$ has a root $z$ such that $z\ne x_+$, $|z|=x_+$,
    and $(\frac{z}{x_+})^\delta\ne1$ for any $\delta\in\N_{>1}$, then $f\notin \Phi_1$.\\
 \SPC 4.6.
Let $\delta$ be the minimal integer such that
$f(z)=0$, $z\ne x_+$, and $|z|=x_+$ imply $(\frac{z}{x_+})^\delta=1$.
Then $f\in \Phi_1$ if and only if $\lc(f)=\lc(f^*)$ and $f^*\in \Phi_1$, where $f^*$ is defined in \bref{eq-f*}.
If $\lc(f)=\lc(f^*)$ then return {\bf{Membership}$\Phi_1$$(f^*)$},
otherwise return false.\\
\SPC 4.7. If $f$ does not satisfy all the above conditions, then it satisfies the condition of Conjecture \ref{con-1} and $f\in \Phi_1$ if  the conjecture is valid.\\
\smallskip
\end{algorithm}

In what below, we will give the details for Algorithm 1
and prove its correctness.
We will use algorithms for real root isolation
and complex root isolation for univariate polynomials.
Please refer to the latest work on these topics and
references in these papers \cite{ri-real,ri-comp}.

Step 1 is trivial to check.
Step 2 can be done with any real root isolation algorithm.
Step 3 can be done by first factoring $f$ as the product of irreducible polynomials
and then isolating the real roots of each factor of $f$.

Step 4.1 is trivial to check.
For Step 4.2, there exists a $\delta\in\N$ such that $(z)^\delta=1$
if and only if each irreducible factor of $f(x)$ is a cyclotomic polynomial,
which can be checked with the Graeffe method in \cite{Bradford}
and the $\delta$ can also be founded.
The polynomial $f^*$ in Step 4.2 can be computed with Corollary \ref{lm-z1}.

In Step 4.3, the $\delta$ can be found from the fact $f(x)/(x-1)\in\Z[x^\delta]$.
If $f(x)(x^\delta-1)/(x-1)\in\Phi_0$ for some $\delta$ satisfying $f(x)/(x-1)\in\Z[x^\delta]$, then return true; otherwise return false.

In Steps 4.4, 4.5, and 4.6, we need to check whether
$f$ has a root $z\ne x_+$ such that  $|z|>x_+$,
$|z|=x_+$, and $z^m\in\R_{>0}$ for some $m\in\N$.
%
%
To do that, we first give a lemma.
\begin{lem}\label{lm-zb}
Let $p(x)=a\prod_{i=1}^n(x-x_i)\in\Zx$, $q(x)=b\prod_{j=1}^m(x-y_j)\in\Zx$,
and $x_iy_j\ne0$ for all $i,j$.
Then the roots of $R_u(p(u),q(ux))$ are $\{y_j/x_i\,|\,i=1,\cdots,n, j=1,\cdots,m\}$
and the roots of $R_u(u^n p(x/u),q(u))$ are $\{x_iy_j\,|\,i=1,\cdots,n, j=1,\cdots,m\}$.
\end{lem}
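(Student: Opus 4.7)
The plan is to apply the standard product formula for resultants in each case and read off the roots directly. Recall that if $p(u)=a\prod_{i=1}^n(u-x_i)$ and $h(u)\in\F[u]$ has degree $d$ in $u$, then
\[
R_u(p(u),h(u)) \;=\; a^{d}\prod_{i=1}^n h(x_i).
\]
This is the only tool I need; the rest is bookkeeping with leading coefficients.

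For the first identity, I regard $q(ux)$ as a polynomial in $u$ of degree $m$ with $x$ treated as a parameter, and apply the formula with $h(u)=q(ux)$. This gives
\[
R_u(p(u),q(ux)) \;=\; a^{m}\prod_{i=1}^n q(x_i x) \;=\; a^m b^n\prod_{i,j}(x_i x - y_j)
\;=\; a^m b^n\Bigl(\prod_i x_i\Bigr)^{m}\prod_{i,j}\bigl(x - y_j/x_i\bigr),
\]
so up to a nonzero constant the result is $\prod_{i,j}(x - y_j/x_i)$, whose roots are exactly $\{y_j/x_i\}$. The hypothesis $x_i y_j\neq 0$ ensures no division by zero and that the leading constant is nonzero.

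For the second identity, I first rewrite
\[
u^n p(x/u) \;=\; a\prod_{i=1}^n(x - x_i u) \;=\; (-1)^n a\Bigl(\prod_i x_i\Bigr)\prod_{i=1}^n\bigl(u - x/x_i\bigr),
\]
which is a polynomial in $u$ of degree $n$ with leading coefficient $(-1)^n a\prod_i x_i$ and roots $x/x_i$ (again valid because $x_i\neq 0$). Applying the product formula with the roles of the two polynomials swapped (or equivalently using $R_u(f,g)=(-1)^{\deg f \deg g}R_u(g,f)$) yields
\[
R_u(u^n p(x/u),q(u)) \;=\; (-1)^{nm} b^{n}\prod_{i=1}^n (u^n p(x/u))\bigr|_{u=y_j}\cdots
\]
Actually a cleaner route is to expand $R_u(u^n p(x/u),q(u))$ via its roots in $u$: with leading coefficient $(-1)^n a\prod_i x_i$ and roots $x/x_i$,
\[
R_u(u^n p(x/u),q(u)) \;=\; \bigl((-1)^n a\tprod_i x_i\bigr)^{m}\prod_{i=1}^n q(x/x_i)
\;=\; C\prod_{i,j}(x - x_i y_j)
\]
for a nonzero constant $C$, so the roots in $x$ are exactly $\{x_i y_j\}$.

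The only mild obstacle is keeping the leading coefficients nonzero so that the degree in $x$ is what we expect and no roots are lost; this is precisely guaranteed by the hypothesis $x_i y_j\neq 0$ for all $i,j$. No deeper ideas are needed.
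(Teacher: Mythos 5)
Your proof is correct and takes essentially the same route as the paper: both apply the standard product formula $R_u(f,g)=\lc(f)^{\deg g}\prod_{f(\alpha)=0}g(\alpha)$ and then factor the resulting product to read off the roots. Your bookkeeping of leading constants is in fact a bit more careful (you correctly cancel the $(\prod_i x_i)^m$ factor in the second resultant), but the substance matches the paper's one-line computation.
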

\begin{proof}
The lemma comes from
$R_u(p(u),q(ux))=a^m b^n\prod_{i,j}(x-x_j/y_i)$
and
$R_u(u^np(x/u),\,q(u))=a_0^{m}b^n\prod_{i,j}(x-x_ix_j)$,
where $a_0 = p(0)$.
\end{proof}

In the rest of this section, we assume
 \begin{eqnarray}
 f&=&f_0 \prod_{i=1}^t f_i^{e_i}\cr
 r_i(x) &=& R_u(u^n f_i(x/u),f_i(u))
\end{eqnarray}
where $f_i$ are primitive and irreducible polynomials with positive leading coefficients.
Also assume that $f(x)$ has a unique positive root $x_+$ which is the root of $f_0(x)$.

By Lemma \ref{lm-zb}, the real roots of all $r_i(x)$ include $x_+^2$
and $z\overline{z}$, where $z$ is a complex root of $r_i(x)$.
Then the condition in Step 4.4 of the algorithm can be checked with the following result
based on real root isolation.
\begin{cor}\label{cor-a1}
$f$ has a root $z$ such that $|z|>x_+$ if and only if
some $r_i(x)$ has a positive root larger than $x_+^2$.
\end{cor}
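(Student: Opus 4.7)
The plan is to translate the statement into a statement about the roots of the resultant polynomials $r_i$ via Lemma \ref{lm-zb}, which identifies the roots of $r_i(x)=R_u(u^n f_i(x/u),f_i(u))$ with the pairwise products $\{z_1z_2 : f_i(z_1)=f_i(z_2)=0\}$. Both directions then reduce to elementary magnitude arguments, with complex-conjugate pairs playing the central role.

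For the forward direction, I would start from a root $z$ of $f$ with $|z|>x_+$. Since $f=f_0\prod_i f_i^{e_i}$ and each $f_i$ is irreducible over $\Z$, the root $z$ must be a root of some irreducible factor $f_j$ (possibly $f_0$). If $z$ is non-real, then the complex conjugate $\bar z$ is also a root of $f_j$ (since $f_j\in\Z[x]$), and by Lemma \ref{lm-zb} the product $z\bar z=|z|^2$ is a root of $r_j$; this root is real, positive, and strictly greater than $x_+^2$. If instead $z$ is real, then since $x_+$ is the \emph{unique} positive real root of $f$ and $|z|>x_+$, one must have $z<-x_+<0$, so taking the pair $(z,z)$ in Lemma \ref{lm-zb} yields $z^2$ as a root of $r_j$; again this is real, positive, and greater than $x_+^2$.

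For the backward direction, suppose $r_i$ has a positive real root $w>x_+^2$ for some $i$. By Lemma \ref{lm-zb}, we can write $w=z_1z_2$ with $f_i(z_1)=f_i(z_2)=0$. Then
\[
|z_1|\,|z_2| \;=\; |z_1z_2| \;=\; w \;>\; x_+^2,
\]
so at least one of $|z_1|,|z_2|$ strictly exceeds $x_+$. Since $z_1$ and $z_2$ are roots of $f_i$, which divides $f$, this produces a root of $f$ of absolute value greater than $x_+$, as required.

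The argument is short, and the only delicate point is verifying in the forward direction that a \emph{real} root $z$ with $|z|>x_+$ is forced to be negative; this is exactly where the hypothesis that $x_+$ is the \emph{unique} positive real root of $f$ gets used, since otherwise a positive real $z\neq x_+$ with $|z|>x_+$ would produce $z^2$ as a root of $r_j$ but would also contradict uniqueness. Once that is isolated, the rest is just bookkeeping with Lemma \ref{lm-zb} and the multiplicativity of the absolute value.
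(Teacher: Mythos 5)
Your proof is correct, and it supplies exactly the argument that the paper leaves implicit: the corollary is stated in the paper without an explicit proof, as a consequence of Lemma~\ref{lm-zb} (roots of $r_i$ are pairwise products of roots of $f_i$) together with the remark that $x_+^2$ and $z\bar z$ appear among the real roots of the $r_i$. Your forward direction correctly splits into the non-real case (use the conjugate pair $z,\bar z$) and the real case (use uniqueness of $x_+$ to force $z<0$, then take the pair $(z,z)$), and your backward direction is the straightforward multiplicativity-of-modulus argument; both are precisely the intended route, with no gaps.
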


It is easy to check whether $-x_+$ is a root of $f_i$:
since $f_i$ is irreducible, $-x_+$ is a root of $f_i$
if and only if $f_i(-x) = \pm f_i(x)$.
If $z$ is complex root of $f_i$ such that $|z|=x_+$,
then $x_+^2, x_+^2=z.\overline{z}, x_+^2=\overline{z}.z$ are all roots of $r_i$.
Then, we have the following result.
\begin{cor}\label{cor-a2}
Let $m_i$ be the multiplicity of $x_+^2$ as a root of $r_i$
and $n_i$ the multiplicity of $-x_+$ as a root of $f_i$
(the multiplicity is set to be zero if $x_+^2$ or  $-x_+$ is not a root).
Then $\#\{z\,|\,f_0(z)=0, |z|=x_+, z\notin\R\} = m_0-n_0-1$ and
$\#\{z\,|\,f_i(z)=0, |z|=x_+,z\notin\R\} = m_i-n_i$ for $i>0$.
\end{cor}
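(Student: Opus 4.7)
The plan is to apply Lemma~\ref{lm-zb} to $r_i(x)=R_u(u^{d_i}f_i(x/u),f_i(u))$, where $d_i=\deg(f_i)$. By that lemma, the multiset of roots of $r_i$ is exactly $\{z_j z_k : 1\le j,k \le d_i\}$, where $z_1,\dots,z_{d_i}$ are the roots of $f_i$. Since $f_i$ is irreducible over $\Z$ and we work in characteristic zero, all $z_j$ are simple, so $m_i$ is precisely the number of ordered pairs $(j,k)$ with $z_j z_k = x_+^2$.

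The next step, which is the crucial one, is to show that any such pair must satisfy $|z_j|=|z_k|=x_+$ with $z_k=\overline{z_j}$. Recall that $x_+$ is a simple positive root of $f$ and, by our working hypotheses together with Corollary~\ref{cor-af8}, $|w|\le x_+$ for every root $w$ of $f$, hence for every root of each $f_i$. Therefore $|z_j z_k|\le x_+^2$ with equality forcing $|z_j|=|z_k|=x_+$. Writing $z_j=x_+e^{\im\theta}$ and $z_k=x_+e^{\im\phi}$, the equality $z_jz_k=x_+^2\in\R_{>0}$ yields $\phi\equiv-\theta\pmod{2\pi}$, so $z_k=\overline{z_j}$. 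In particular, if $z_j\in\R$ then $z_j\in\{x_+,-x_+\}$ and $z_k=z_j$, while if $z_j\notin\R$ then $z_k=\overline{z_j}\ne z_j$ is uniquely determined by $z_j$.

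With this classification in hand, I will enumerate the contributions to $m_i$ according to the nature of $z_j$. For $i=0$, the root $x_+$ of $f_0$ is simple and contributes the single ordered pair $(x_+,x_+)$; the root $-x_+$, if it occurs (with multiplicity $n_0\in\{0,1\}$ since $f_0$ is irreducible), contributes the pair $(-x_+,-x_+)$; each nonreal root $z$ of $f_0$ with $|z|=x_+$ contributes exactly one ordered pair $(z,\overline{z})$, for a total of $c_0:=\#\{z:f_0(z)=0,\,|z|=x_+,\,z\notin\R\}$ pairs. Summing, $m_0=1+n_0+c_0$, giving the first equality. For $i\ge 1$, $x_+$ is not a root of $f_i$ (the irreducible factor containing $x_+$ is $f_0$), so only the $-x_+$ contribution and the complex conjugate contribution remain, and we obtain $m_i=n_i+c_i$, which is the second equality.

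The main obstacle is the middle step, namely controlling which pairs of roots can have product exactly $x_+^2$. Once the bound $|z_jz_k|\le x_+^2$ and the argument constraint forcing $z_k=\overline{z_j}$ are in place, the bookkeeping is routine, but the reduction relies essentially on $x_+$ being the global maximum of $|w|$ among the roots of $f$ (via Corollary~\ref{cor-af8}) and on irreducibility of each $f_i$ to guarantee simple roots and $n_i\in\{0,1\}$.
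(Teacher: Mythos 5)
Your proof is correct, and the paper itself does not spell out a proof of this corollary beyond the one-line remark preceding it ("If $z$ is a complex root of $f_i$ such that $|z|=x_+$, then $x_+^2$, $z\cdot\overline{z}$, $\overline{z}\cdot z$ are all roots of $r_i$"), so you have supplied the routine but necessary bookkeeping: by Lemma~\ref{lm-zb} the multiset of roots of $r_i$ is $\{z_jz_k\}_{j,k}$, the bound $|z|\le x_+$ together with $|z_jz_k|=x_+^2$ forces $|z_j|=|z_k|=x_+$ and $z_k=\overline{z_j}$, and simplicity of roots of the irreducible $f_i$ lets you count ordered pairs. The one thing worth fixing is the appeal to Corollary~\ref{cor-af8}: that corollary is stated under the hypothesis $f\in\Phi_1$, which is precisely what the algorithm is trying to decide, so invoking it here is circular. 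The correct justification for $|z|\le x_+$ (for all roots $z$ of $f$, hence of each $f_i$) is that Corollary~\ref{cor-a2} is used only in Steps~4.5 and~4.6 of Algorithm~1, after Step~4.4 has already tested and excluded the case of a root with $|z|>x_+$ via Corollary~\ref{cor-a1}. With that citation repaired, your argument is complete and matches what the paper intends.
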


As usual, a {\em representation} of a complex root $z$ is a pair
$(p,B)$ where $p$ is an irreducible polynomial and
$B$ a box such that $p(z)=0$ and $z$ is the only root of $p$ in $B$.
A box is represented by its lower-left and upper-right
vertexes: $([x_l,y_l],[x_t,x_t])$.
By the following lemma, we can find representations for
all roots $z$ of $f$ satisfying $|z|=x_+$.
\begin{lem}\label{lm-a2}
Suppose $f_i$ has $s$ roots $z_1,\ldots,z_s$ satisfying $|z_j|=x_+$.
Then, we can find representations for $z_j$.
\end{lem}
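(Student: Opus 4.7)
The plan is to combine complex root isolation for $f_i$ with the exact count of roots on the circle $|z|=x_+$ provided by Corollary \ref{cor-a2}, using interval refinement to separate the on-circle roots from the off-circle ones.

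First, I would run a complex root isolation algorithm on $f_i$ to obtain pairwise disjoint boxes $B_1,\ldots,B_d$ (where $d=\deg f_i$), each containing exactly one root of $f_i$. In parallel, I would run a real root isolation algorithm on $r_i$ to obtain an arbitrarily refinable isolating interval $I_+$ for $x_+^2$, which gives rational bounds $0<\alpha\le x_+^2\le\beta$. Since $f_i$ is irreducible, each pair $(f_i,B_k)$ is already a candidate representation; the task reduces to selecting precisely those $B_k$ whose root $z$ satisfies $|z|^2=x_+^2$.

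Next I would classify each $B_k$ by interval arithmetic. For $w$ ranging over $B_k$, compute an interval $J_k\supseteq\{|w|^2:w\in B_k\}$. If $J_k\cap[\alpha,\beta]=\emptyset$, discard $B_k$. Otherwise mark $B_k$ as undecided and refine: bisect $B_k$ (retaining the subbox still containing the root, as determined by the isolation routine) and shrink $I_+$. Repeat. Any root $z$ with $|z|\neq x_+$ is separated from $x_+$ by a positive gap $|\,|z|^2-x_+^2\,|>0$, so the widths of $J_k$ and $[\alpha,\beta]$ will eventually drop below this gap and $B_k$ will be discarded. The boxes corresponding to roots with $|z|=x_+$ are never discarded.

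The termination criterion uses the a priori count $s$ from Corollary \ref{cor-a2}: continue the refinement loop until exactly $s$ undecided boxes remain. At that moment the remaining boxes $B_{k_1},\ldots,B_{k_s}$ must contain precisely the roots $z_1,\ldots,z_s$ with $|z_j|=x_+$, and the pairs $(f_i,B_{k_j})$ are the desired representations. The main obstacle is justifying termination of this loop; the point is that there are only finitely many off-circle roots, each with a strictly positive discrepancy $|\,|z|^2-x_+^2\,|$, so after finitely many refinement rounds all off-circle boxes are discarded, leaving exactly the $s$ on-circle boxes and no others. Without the count $s$ one could not distinguish a root very close to the circle from one exactly on it, but Corollary \ref{cor-a2} supplies this count exactly.
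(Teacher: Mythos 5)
Your proposal is correct and follows essentially the same approach as the paper: isolate the complex roots of $f_i$ and the positive real value $x_+$ (you use $x_+^2$ via $r_i$, the paper uses $x_+$ directly), then simultaneously refine the boxes and the interval until exactly $s$ boxes remain consistent with lying on the circle, using the exact count $s$ from Corollary~\ref{cor-a2} as the termination criterion; the positive-gap separation argument you give for termination is the same one the paper relies on implicitly.
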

\begin{proof}
Since $f_i$ is irreducible, $f_i$ is the minimal polynomial for $z_i$.
Suppose $I=(a,b)$ is an isolation interval for $x_+$.
By algorithms of complex root isolation and real root isolation,
we can simultaneously refine $I$ and the
isolation boxes of the roots of $f_i$
such that the number of isolation boxes
meet the region $a<|x|<b$ will eventually becomes $s$.
These $s$ boxes are the isolation boxes for $z_1,\ldots,z_s$,
since $f_i$ has exactly $s$ roots satisfying $|z|=x_+$.
\end{proof}

\begin{lem}
Let $z$ be a root of $f_k$ satisfying $|z|=x_+$.
Then, we can find a representation for $z/x_+$.
\end{lem}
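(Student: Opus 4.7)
The plan is to construct $z/x_+$ algebraically as a root of a resultant and then locate its isolation box by simultaneous refinement of the (already available) boxes for $z$ and $x_+$. By Lemma~\ref{lm-zb} applied with $p=f_0$ and $q=f_k$, the polynomial
\[
P(x) \;=\; R_u\bigl(f_0(u),\, f_k(ux)\bigr) \in \Z[x]
\]
has $z/x_+$ among its roots, since $x_+$ is a root of $f_0$ and $z$ is a root of $f_k$. Factor $P$ into primitive irreducibles over $\Z$, say $P = c\prod_j P_j^{e_j}$. The minimal polynomial of $z/x_+$ is one of the $P_j$; it remains to identify the correct $P_j$ and to produce an isolating box.

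I would then proceed as follows. Apply complex root isolation to each irreducible factor $P_j$ to obtain a finite collection of pairwise disjoint boxes covering all complex roots of $P$. From Lemma~\ref{lm-a2} there is a refinable isolation box $B_z$ for $z$, and from real root isolation applied to $f_0$ there is a refinable isolation interval $I$ for $x_+$ (which, by our standing assumption, is known to be positive and nonzero). Using interval arithmetic, the set $B_z/I$ is a rectangle (with positive denominator, so no division issues arise) containing $z/x_+$, whose diameter tends to $0$ as $B_z$ and $I$ are refined. Simultaneously refining $B_z$ and $I$, after finitely many steps exactly one of the isolation boxes of $P$ meets $B_z/I$; that box is the isolating box for $z/x_+$, and the associated irreducible factor $P_j$ is the required polynomial.

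The reason this refinement procedure terminates is standard: the roots of $P$ form a finite set in $\C$, their isolation boxes are pairwise disjoint, and $B_z/I$ can be made to lie inside any prescribed neighborhood of $z/x_+$. So for sufficiently refined $B_z$ and $I$, only the isolation box containing $z/x_+$ can possibly meet $B_z/I$. The pair $(P_j, B)$, with $B$ that unique intersecting isolation box, is the desired representation.

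The only mildly delicate point is to make sure the identification of the correct factor is effective, i.e.\ we really can detect when only one isolation box of $P$ intersects $B_z/I$; but this is a purely finite combinatorial test on boxes with rational endpoints, carried out after each refinement step, so no obstruction arises.
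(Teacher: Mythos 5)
Your proposal is correct and follows essentially the same route as the paper: form $H(x)=R_u(f_0(u),f_k(ux))$, factor it over $\Z$, and simultaneously refine the box for $z$ and the interval for $x_+$ until the interval-arithmetic quotient $B_z/I$ meets a unique isolation box of the irreducible factors. The only cosmetic difference is that you spell out the termination argument (finitely many roots, disjoint boxes, shrinking quotient box), which the paper leaves implicit.
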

\begin{proof}
Let $H(x)=R_u(f_0(u),f_k(ux))\in\Zx$ and $h_i(x),i=1,\ldots,s$ the irreducible factors of $H$.
From Lemma \ref{lm-zb}, $H(z/x_+)=0$ and $h_c(z/x_+)=0$ for certain $c$
and we will show how to find $h_c$.
Isolate the roots of $h_i,i=1,\ldots,s$ and refine the isolation box $B=([x_l,y_l],[x_t,x_t])$ of $z$
and the isolation interval of $x_+=(l,r)$ simultaneously such that
$([x_l/r,y_l/r],[x_t/l,x_t/l])$ intersects only one of the isolation boxes of $h_i,i=1,\ldots,s$.
This box $B_1$ should be the isolation box for $z/x_+$.
If $B_1$ contains a root of $f_c$, then $f_c$ is the minimal polynomial for  $z/x_+$.
\end{proof}

With the following lemma, we can check whether  $z^m\in\R_{>0}$ for some $m$.
\begin{lem}
Let $z$ be a root of $f_k$ satisfying $|z|=x_+$
and $q$ the minimal polynomial for $z/x_+$.
Then we can decide whether there exists  an $m\in\N$ such that $(z/x_+)^m=1$,
and if such an $m$ exists, we can compute the minimal $m$.
\end{lem}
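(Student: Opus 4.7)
The plan is to reduce the problem to testing whether $q$ is (a constant multiple of) a cyclotomic polynomial. Let $w = z/x_+$; by hypothesis $|w|=1$, and since $q$ is the primitive irreducible polynomial in $\Z[x]$ with $q(w)=0$, the number $w$ is a root of unity if and only if $q$ equals a cyclotomic polynomial $\Phi_m$ (up to sign). In that case $w$ is a primitive $m$-th root of unity, and the minimal integer $m$ with $w^m=1$ is precisely this index $m$.

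First I would test whether $\lc(q)=1$. Every root of unity is an algebraic integer, so its $\Q$-minimal polynomial lies in $\Z[x]$ and is monic; if $\lc(q)>1$, then $w$ cannot be a root of unity and we return \emph{false}. Assuming $q$ is monic, I would then apply Kronecker's theorem: a monic polynomial in $\Z[x]$ all of whose roots lie in the closed unit disk has every nonzero root equal to a root of unity. Since $q$ is irreducible, its roots form a single Galois orbit, so either all roots are roots of unity or none is. Hence it suffices to certify that every root of $q$ has absolute value at most $1$. This can be decided by complex root isolation: enclose each root of $q$ in an isolation box and refine until one can decide, for each root, whether its modulus is $\le 1$ or $>1$. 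Alternatively, the Graeffe-iteration cyclotomicity test of Bradford already invoked in Step~4.2 applies verbatim.

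Finally, suppose $q$ has passed the cyclotomicity test, so $q=\Phi_m$ for some $m$ with $\phi(m)=\deg q$. The classical lower bound $\phi(m)\ge \sqrt{m/2}$ for $m\ge 2$ (or the sharper Rosser--Schoenfeld bound) yields an explicit finite list of candidate $m$; we enumerate this list and return the unique $m$ for which $q(x)=\Phi_m(x)$. The minimal exponent with $w^m=1$ is exactly this $m$.

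The one technical point is the certification step inside the cyclotomicity test: the numerical root isolation must be refined sufficiently to distinguish roots lying exactly on the unit circle from those strictly inside or strictly outside. This is handled by standard adaptive-precision complex root isolation as discussed in the references cited in the excerpt, so no essentially new difficulty arises; the test either (a) certifies that all roots of $q$ lie in the closed unit disk, triggering the enumeration above, or (b) finds a root with modulus provably larger than $1$, certifying that $w$ is not a root of unity.
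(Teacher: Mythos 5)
Your core reduction is exactly the paper's: $z/x_+$ is a root of unity if and only if its minimal polynomial $q$ is cyclotomic, and the paper then simply invokes Bradford's Graeffe-iteration test (which it has already cited in Step~4.2) to decide cyclotomicity and read off~$m$. You reach the same conclusion and you do mention the Graeffe test as a fallback, so the proof is essentially sound.

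However, the route you present as primary --- Kronecker's theorem plus complex root isolation --- has a genuine termination gap that your final paragraph acknowledges but does not close. To invoke Kronecker you must \emph{certify} that every root of $q$ lies in the \emph{closed} unit disk, but $w=z/x_+$ sits exactly on the circle $|w|=1$ (and so does $\bar w = 1/w$, since $q$ is reciprocal when $w\neq\pm1$). Purely numerical box refinement can certify a strict inequality $|z|<1$ or $|z|>1$ but can never terminate with a certificate that $|z|=1$: the isolation box shrinks toward the circle without ever deciding which side it is on. To make this a decision procedure you would need an algebraic criterion --- e.g.\ form $r(x)=R_u\bigl(q(u),\,u^{\deg q}q(x/u)\bigr)$ as in the paper's Lemma on $\Phi$-products so that the numbers $z_i\bar z_j$ become roots of an explicit integer polynomial, and then compare those algebraic numbers to $1$ exactly; or use an effective Lehmer-type lower bound (Dobrowolski) to guarantee that a non-cyclotomic $q$ has a root of modulus bounded strictly away from~$1$, so that refinement terminates. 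As written, ``standard adaptive-precision root isolation handles it'' is not sufficient. Since you already observe that Bradford's test applies verbatim, the cleanest fix is to drop the Kronecker route (or present it only as motivation) and rely on the Graeffe test, exactly as the paper does. Your extraction of $m$ from $\phi(m)=\deg q$ via the bound $\phi(m)\ge\sqrt{m/2}$ is a nice, explicit alternative to the paper's appeal to the Graeffe method returning $m$ directly, and is correct.
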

\begin{proof}
There exists an $m\in\N$ such that $(z/x_+)^m=1$
if and only if $q(x)$ is a cyclotomic polynomial,
which we can be tested by the Graeffe method in \cite{Bradford}.
The method also gives the $m$ such that $(z/x_+)^m=1$.
The minimal $m$ can be found easily.
\end{proof}

Now, we consider Step 4.5. With Corollary \ref{cor-a2} and Lemma \ref{lm-a2},
we can find all the roots
$z$ of $f$ satisfying $|z|=x_+$. For each such $z$,
we can check whether there exists a $\delta_z\in\N$
such that $(z/x_+)^{\delta_z}=1$ with Lemma \ref{lm-cyc}.
Hence the conditions of Step 4.5 can be checked.

Now, we consider Step 4.6.
The $\delta$ in Step 4.6 can be computed as  $\delta = \lcm\{\delta_z\,|\,
f(z)=0, |z|=x_+, (z/x_+)^{\delta_{z}}=1\}$.
With $\delta$ given, $f^*$ in Step 4.6 can be computed with Corollary \ref{lm-z1}.
From Corollary \ref{lm-z1}, the roots of $f^*$ are $\{z^\delta\,|\, f(z)=0\}$.
%
%
As a consequence, when running {\bf{Membership$\Phi_1$}$(f^*)$},
only Steps 1, 3, 4.7 will be executed,
and no further calls to {\bf{Membership$\Phi_1$}$(f^*)$} are needed.

\subsection{Compute the finite $\sigma$-Gr\"obner basis}

Let $f\in\Phi_1$, we will show how to compute the finite $\sigma$-Gr\"obner basis
for $\ID_f = \sat(\PH_f)$ in \bref{eq-idf1}.
\begin{lem}
Let $f\in\Phi_1$, $h=fg\in\Phi_0$ for a monic $g\in\Zx$,
and $D=\deg(h)$. Then
\begin{equation}\label{eq-idd}
\ID_D=\sat(\PH_f)\bigcap \F[y,y^x,\cdots,y^{x^D}] =
\asat(\PH_f,\PH_{xf},\ldots,\PH_{x^{D-{\tiny\deg}(f)}f})
\end{equation}
and a Gr\"obner basis of $\ID_D$ is a $\sigma$-Gr\"obner basis of $\ID_f$.
\end{lem}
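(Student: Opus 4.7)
Write $n=\deg f$ and $\C=\{\PH_f,\PH_{xf},\ldots,\PH_{x^{D-n}f}\}$. The initial of $\PH_{x^if}$ (as a polynomial whose leader is $y^{x^{i+n}}$) is the monomial $y^{x^i(f^+-a_nx^n)}$, i.e.\ the $i$-th $\sigma$-transform of the initial of $\PH_f$; hence every such initial lies in the multiplicative set $\mathbb{I}_{\{\PH_f\}}$ that defines $\sat(\PH_f)=[\PH_f]:\mathbb{I}_{\{\PH_f\}}$.

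My first step is the ideal equality, proved by two inclusions. For $\asat(\C)\subseteq\ID_D$: each $\PH_{x^if}=\sigma^i\PH_f$ lies in $[\PH_f]\subseteq\sat(\PH_f)$ and in $\F[y,\ldots,y^{x^D}]$ (since $i+n\le D$), so $(\C)\subseteq\ID_D$; and because the product $I$ of the initials of $\C$ lies in $\mathbb{I}_{\{\PH_f\}}$, any $P$ with $I^kP\in(\C)$ satisfies $I^kP\in\sat(\PH_f)$ and therefore $P\in\sat(\PH_f)\cap\F[y,\ldots,y^{x^D}]=\ID_D$. For $\ID_D\subseteq\asat(\C)$: given $P\in\ID_D$, the definition of $\sat$ supplies $M\in\mathbb{I}_{\{\PH_f\}}$ with $MP=\sum_{i\ge 0}c_i\PH_{x^if}$ in $\F[\Theta(y)]$. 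The leaders $y^{x^{i+n}}$ of the $\PH_{x^if}$ are strictly increasing in $i$, so $\{\PH_{x^if}\}_{i\ge 0}$ is an algebraic triangular set; since $P$ contains no variable past $y^{x^D}$, the terms with $i>D-n$ can be eliminated one at a time by triangular reduction against the $\PH_{x^jf}$ with $j\le D-n$. Each reduction multiplies only by the initial of the reducing polynomial (which divides $I$), so after finitely many steps we obtain $I^kP\in(\C)$, i.e.\ $P\in\asat(\C)$.

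For the second claim, let $\GB$ be a Gr\"obner basis of $\ID_D$. Since $\ID_D$ is binomial, Proposition \ref{prop-1} lets me write $\GB=\PH_H$ for some $H\subset\Zx$. By Corollary \ref{cor-nbgb2}, I must show that for every normal $u\in(f)_{\Zx}$ there exist $t\in H$ and $j\in\N$ with $u^+-x^jt^+\in\N[x]$. If $\deg u\le D$ then $\PH_u\in\ID_D$ and the algebraic Gr\"obner property of $\GB$ supplies $t$ with $j=0$. If $\deg u>D$, then $u=qf$ with $q\in\Zx$ and $\lc(q)\ge 1$, so $\lc(u)\ge\lc(f)=\lc(h)$; repeating the display in the proof of Lemma \ref{lm-s2} gives
\[
x^{\deg u-D}h^+ = x^{\deg u-n}\lt(f) = \lc(f)\lm(u)\preceq\lt(u)\preceq u^+.
\]
Since $\PH_h\in\ID_D$, the Gr\"obner property of $\GB$ furnishes $t\in H$ with $t^+\preceq h^+$, whence $x^{\deg u-D}t^+\preceq u^+$ and $j=\deg u-D$ works.

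The only genuinely delicate step is the reverse inclusion in the ideal equality: I must confirm that the triangular reductions used to eliminate the high-index $\PH_{x^if}$ ($i>D-n$) from $MP=\sum c_i\PH_{x^if}$ multiply $P$ only by initials already dividing $I$, so the resulting membership actually lands in $\asat(\C)$ and not in some larger algebraic saturation. This is the characteristic-set-modulo-initials content of Theorem \ref{th-rp} specialized to the shifted chain $\{\PH_{x^if}\}$, but must be checked by hand because $\asat$ is sensitive to which initials are allowed as denominators.
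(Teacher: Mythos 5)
Your forward inclusion $\asat(\C)\subseteq\ID_D$ and your proof of the second claim (that a Gr\"obner basis of $\ID_D$ is a $\sigma$-Gr\"obner basis of $\ID_f$, via the display borrowed from Lemma~\ref{lm-s2} and an application of Corollary~\ref{cor-nbgb2}) are both correct and match the paper's use of Corollary~\ref{cor-s2}.

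The reverse inclusion $\ID_D\subseteq\asat(\C)$ is where you have a genuine gap, and the sketch you give for filling it does not work as stated. You start from a witness $M\in\mathbb{I}_{\{\PH_f\}}$ with $MP=\sum_{i}c_i\PH_{x^if}$ and propose to ``eliminate the terms with $i>D-n$ by triangular reduction against the $\PH_{x^jf}$ with $j\le D-n$.'' But the leader of $\PH_{x^if}$ for $i>D-n$ is $y^{x^{i+n}}>y^{x^D}$, strictly above the leaders of every member of $\C$; nothing in $\C$ can reduce it. Moreover $M$ itself may be a product of $\sigma^j(\init(\PH_f))$ with $j>D-n$, in which case $M$ need not divide any power of $I$ and the left-hand side $MP$ genuinely involves variables past $y^{x^D}$ — so there is no way to simply strip the high-index terms from this identity.

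The argument that does work, and which the paper invokes, runs the Ritt pseudo-reduction of $P$ itself rather than manipulating the identity $MP=\sum c_i\PH_{x^if}$. Since $\PH_f$ is a regular, coherent singleton $\sigma$-chain, Theorem~\ref{th-rp} says it is a characteristic set of $\sat(\PH_f)$, so $P\in\sat(\PH_f)$ forces $\prem(P,\PH_f)=0$. Because $P\in\F[y,\ldots,y^{x^D}]$, every step of that pseudo-reduction can only use $\PH_{x^if}$ with $i\le D-n$ and can only multiply by $\sigma^i(\init(\PH_f))$ with $i\le D-n$; the resulting relation \bref{eq-prem} therefore reads $\prod_{i\le D-n}\sigma^i(\init(\PH_f))^{e_i}\,P\in(\C)$. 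The multiplier divides $I^{\max_i e_i}$, so multiplying through by the complementary factor gives $I^{\max_i e_i}P\in(\C)$, i.e.\ $P\in\asat(\C)$. This is precisely the ``characteristic-set-modulo-initials content'' you flagged as needing verification; you were right to flag it, but your proposed verification goes in the wrong direction, and without it the reverse inclusion is unproved.
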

\begin{proof}
By the remark before Theorem \ref{th-nbgb1}, $\PH_f$ is regular and coherent.
Then $P\in \ID_D$ if and only if $\prem(P,\PH_f)=0$
which is equivalent to $P\in \asat(\PH_f,\PH_{xf},\ldots,\PH_{x^{D-{\tiny\deg}(f)}f})$
\cite{dbi}, and \bref{eq-idd} is proved.
By Corollary \ref{cor-s2}, a Gr\"obner basis of $\ID_D$ is a $\sigma$-Gr\"obner basis of $\ID_f$.
\end{proof}

The Gr\"obner basis of $\ID_D$,  denoted as $\GB(f,D)$,
can be computed with the following well-known fact
$$\asat(\PH_f,\PH_{xf},\ldots,\PH_{x^{D-{\tiny\deg}(f)}f}) =
(z\cdot J^{\tiny\sum_{i=0}^{D-deg(f)} x^i}-1,\PH_f,\PH_{xf},\ldots,\PH_{x^{D-{\tiny\deg}(f)}f})
\cap\F[y,y^x,\cdots,y^{x^D}],$$
where $J = \init(\PH_f)$ and $z$ is a new indeterminate.
Therefore, in order to compute the $\sigma$-Gr\"obner basis of $\ID_f$,
it suffices to compute $D$. We thus have the following algorithm.

\begin{algorithm}[H]\label{alg-gb}
  \caption{\bf --- FiniteGB $(f)$} \smallskip
  \Inp{$f\in\Phi_1$ such that $\lc(f)>0$.}\\
  \Outp{Return $\sigma$-Gr\"obner basis of $\ID_f=\sat(\PH_f)$.}\medskip

  \noindent
 1. If $\lt(f)=f^+$, then return $\{\PH_f\}$.\\
 2. If $f$ has no positive real roots, then return $\GB(f,N_f+\deg(f)+1)$,
   where $N_f$ is defined in Lemma \ref{lm-Po}.\\
3. Let $x_+$ be the unique simple  positive real root of $f$.\\
\SPC 3.1.
If $x_+=1$ and all root $z$ of $f$ satisfies $z^\delta=1$ for some $\delta\in\N$,
then return $\GB(f,\delta)$.\\
\SPC 3.2. If $x_+=1$ and any other root $z$ of $f$ satisfies $|z|<1$, then return $\GB(f,\deg(f)+\delta-1)$, where $\delta$ is found in Step 4.3 of Algorithm 1.\\
\SPC 3.3.
Let $\delta$ be the minimal integer such that
$f(z)=0$, $z\ne x_+$, and $|z|=x_+$ imply $(\frac{z}{x_+})^\delta=1$.
Let the $f^*$ be defined \bref{eq-f*} and  $f^*(x^\delta) = f(x) s(x)$.
Return $\GB(f,\delta\deg(f^*))$.\\
%
%
\smallskip
\end{algorithm}

In the rest of this section, we will prove the correctness of the algorithm.
Step 1 follows Lemma \ref{lm-s1}.
%

For Step 2,  by Lemma \ref{lm-Po},
$(x+1)^{N_f} f\in\Z^{>0}[x]$.
Following the proof of Lemma \ref{lm-s3},
for a sufficiently large $M\in\N$,
$(x-M) (x+1)^{N_f} f\in\Phi_0$.
Then, $D = \deg((x-M) (x+1)^{N_f} f)=N_f + \deg(f) + 1$.

For Step 3.1, following Step 4.2 of Algorithm 1,
we have $f^*(x^\delta) = f(x) g(x) = x^\delta -1$ for some $g$.
Then $D = \delta$.
For Step 3.2, following Step 4.3 of Algorithm 1,
$f(x)(x^\delta -1)/(x-1)\in\Phi_0$.
Then $D =\deg(f)+ \delta-1$.

For Step 3.3, from the proof of Step 4.6 of Algorithm 1,
there exist three possibilities:
$f^*(x)\in\Phi_0$, $f^*(x)$ has at least two positive roots,
or $f^*$ satisfies the conditions of Conjecture \ref{con-1}.
Since we already assumed $f^*\in\Phi_1$, only  $f^*(x)\in\Phi_0$ is possible.
From $f^*(x^\delta) = f(x) s(x)$, we have $D= \delta\deg(f)$.
%
%
We now proved the correctness of Algorithm 2.

\section{Approach based on integer programming and lower bound}
Given an $f\in\Zx$, the existence of a monic polynomial
$g\in\mathbb{Z}[x]$ with $\deg (g)\le  m$, such that $fg\in\Phi_0$
can be reduced to an integer programming problem.
Based on this idea, a lower bound for $\deg(g)$ is given in certain cases.

\begin{lem}\label{lm-ip}
Given a polynomial $f(x)=a_nx^n+\cdots+a_0\in\mathbb{Z}[x]$ with $a_n>0$, there exists a monic polynomial $g\in\mathbb{Z}[x]$ with $\deg (g)\le  m$, such that $fg\in\Phi_0$ if and only if a $(b_{m-1},\cdots,b_0)\in\mathbb Z^m\,$ satisfies
\begin{eqnarray}\label{mx-0}
\left(
  \begin{array}{ccccccc}
    {a_{n-1}} & {a_n} & & & \\
    \vdots& \vdots& \ddots &  \\
    {a_0} & {a_{1}} &\cdots  & {a_{n}}&  \\
       &\ddots &\ddots & \ddots& \ddots \\
     & & {a_0} &a_1&\cdots  &{a_{n}}   \\
    &  && \ddots&\ddots &\vdots\\
     &   &  &&a_0 & a_1\\
     &&  & & & {a_0} \\
  \end{array}
\right)_{(m+n)\times(m+1)}
\left(
  \begin{array}{c}
    {1} \\
    {b_{m-1}} \\
    {b_{m-2}} \\
    \vdots \\
    {b_0} \\
  \end{array}
\right)\le0.
\end{eqnarray}
Moreover such $g$ has degree $<m$ if and only if  $b_0=0$ for some feasible solution of the above inequalities.
\end{lem}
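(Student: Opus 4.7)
The plan is to translate the condition $fg \in \Phi_0$ directly into linear inequalities on the unknown coefficients of $g$, then match the resulting system with the matrix in \bref{mx-0}. I first reduce to the case $\deg(g) = m$: if $g$ is monic of degree $d \le m$ with $fg \in \Phi_0$, then $g' = x^{m-d} g$ is monic of degree exactly $m$ and $f g' = x^{m-d} (fg) \in \Phi_0$, since multiplication by a monomial leaves the positive/negative coefficient pattern unchanged. So I may write $g = x^m + b_{m-1} x^{m-1} + \cdots + b_0$ with $b_j \in \Z$.

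Next, I expand $fg = \sum_{k=0}^{n+m} c_k x^k$ with $c_k = \sum_{i+j=k} a_i b_j$ under the convention $b_m = 1$. The leading coefficient $c_{n+m} = a_n$ is positive by hypothesis, so $fg \in \Phi_0$ is precisely the conjunction $c_k \le 0$ for $k = 0, 1, \ldots, n+m-1$. I then verify by direct computation that $(c_{n+m-1}, c_{n+m-2}, \ldots, c_0)^\tau$ coincides with the left-hand side of \bref{mx-0}: column $1$ (multiplied by $b_m = 1$) contributes $a_{n-1}, a_{n-2}, \ldots, a_0$ in rows $1, \ldots, n$, while column $j \ge 2$ (multiplied by $b_{m+1-j}$) places the shifted stencil $(a_n, a_{n-1}, \ldots, a_0)^\tau$ in rows $j-1, \ldots, j+n-1$. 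This matching with the staircase pattern in the displayed matrix is the only computational step.

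For the \emph{moreover} statement, note that $b_0 = 0$ in a feasible solution means $g(0) = 0$, so $g = x \tilde g$ with $\tilde g$ monic of degree $m-1$; from $fg = x \cdot f\tilde g \in \Phi_0$ we read off $f \tilde g \in \Phi_0$ (the factor $x$ can be dropped without disturbing the sign pattern), yielding a witness of degree $< m$. Conversely, given a monic $\tilde g$ of degree $d < m$ with $f \tilde g \in \Phi_0$, the padded polynomial $g = x^{m-d} \tilde g$ is a monic degree-$m$ witness whose constant term vanishes, providing a feasible solution with $b_0 = 0$. The lemma presents no substantive obstacle; the only care needed is in the index bookkeeping for the shifted band structure of the matrix.
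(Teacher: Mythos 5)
Your proof is correct and takes essentially the same route as the paper: write $g=x^m+b_{m-1}x^{m-1}+\cdots+b_0$, observe that $fg\in\Phi_0$ is exactly the system $c_k\le0$ for $k=0,\ldots,m+n-1$ (the leading coefficient $c_{m+n}=a_n>0$ being automatic), and check the index bookkeeping against the banded matrix, with the padding trick $g\mapsto x^{m-d}g$ and its inverse $g\mapsto g/x$ handling the degree-$\le m$ versus degree-$<m$ cases. The paper's proof is more terse but identical in substance.
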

\begin{proof}
Let $g(x) = x^m+b_{m-1}x^{m-1}+\cdots+b_0$. The leading coefficient of $fg$ is $a_n>0$, and the coefficient of $x^k$ is the $m+n-k$-row of the left side of \bref{mx-0} for $k=m+n-1,\ldots,0$.
If $\deg(g)<m$, the coefficients of $g_1(x)=x^{m-\tiny{\deg}(g)}g(x)$ is a feasible solution with $b_0=0$. If $b_0=0$, $(1,b_{m-1},\cdots,b_1)$ is a feasible solution of \bref{mx-0} for $m=m-1$.
\end{proof}

The following result gives another criterion for the existence of $g$.
\begin{lem}\label{lm-1f}
Given a polynomial $f(x) = a_nx^n+\cdots+a_0\in\mathbb{Z}[x]$ with $a_n>0$, let $(1/f)(x)\triangleq \lambda_0+\dots+\lambda_mx^m+\dots\in\mathbb Z[a_0^{-1}][[x]]$.
There exists a monic polynomial $g\in\mathbb{Z}[x]$ with $\deg (g)\le m$ and $ fg\in\Phi_0$ if and only if there exists a $(c_{m+n-1},\cdots,c_0)\in\mathbb N^{m+n}\,$ such that
\begin{equation}\label{eq-1f}
\left(
  \begin{array}{cccccc}
    {\lambda_{0}} & {\lambda_{1}} & \lambda_{2} &\cdots & {\lambda_{m+n-2}} &{\lambda_{m+n-1}}\\
    {}& {\lambda_{0}} & {\lambda_{1}}   &\ddots &\ddots & {\lambda_{m+n-2}} \\
   {}& {}& \ddots &\ddots &\ddots & {\vdots}  \\
    {} & {}&{}& {\lambda_{0}} &{\cdots}& {\lambda_{m}} \\
  \end{array}
\right)
\left(
  \begin{array}{c}
    {c_{m+n-1}} \\
    {c_{m+n-2}} \\
    \vdots \\
    {c_0} \\
  \end{array}
\right)=\left(
\begin{array}{c}
{0}\\
 \vdots\\
{0}  \\
-1
\end{array}
\right).
\end{equation}
\end{lem}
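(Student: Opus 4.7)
The plan is to interpret the matrix equation \bref{eq-1f} as the coefficient-wise record, in the power series ring $\mathbb{Z}[a_0^{-1}][[x]]$, of the relation $fg = a_n x^{m+n} - C(x)$, where $C(x)=\sum_{i=0}^{m+n-1} c_i x^i$. A direct computation using the shift structure of the matrix shows that the $k$-th row (for $k=0,\ldots,n-1$) applied to the column vector $(c_{m+n-1},\ldots,c_0)^\tau$ equals $[x^{m+n-1-k}]\bigl(C(x)\cdot(1/f)(x)\bigr)$. Thus the $n$ scalar conditions encoded by \bref{eq-1f} amount to
$$[x^m](C/f)=-1 \quad\text{and}\quad [x^K](C/f)=0 \text{ for } K=m+1,\ldots,m+n-1.$$
Also, $fg\in\Phi_0$ with $g$ monic of degree exactly $m$ is equivalent to $fg=a_n x^{m+n}-C(x)$ with all $c_i\in\mathbb{N}$.

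For the forward direction, suppose $g\in\mathbb{Z}[x]$ is monic of degree $\le m$ with $fg\in\Phi_0$. Since multiplication by a monomial $x^k$ preserves $\Phi_0$, I may replace $g$ by $x^{m-\deg g}g$ and assume $\deg g=m$. Setting $c_i=-[x^i](fg)\ge 0$ gives $fg=a_n x^{m+n}-C$. Multiplying both sides by $1/f$ inside $\mathbb{Z}[a_0^{-1}][[x]]$ yields $[x^L]g=a_n\lambda_{L-m-n}-[x^L](C/f)$, with the convention $\lambda_j=0$ for $j<0$. Because $g$ is monic of degree $m$, we have $[x^m]g=1$ and $[x^L]g=0$ for $m<L<m+n$; combined with $\lambda_{L-m-n}=0$ throughout this range, these translate directly into the scalar conditions displayed above.

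For the reverse direction, given $(c_i)\in\mathbb{N}^{m+n}$ satisfying \bref{eq-1f}, set $C(x)=\sum c_i x^i$, $P(x)=a_n x^{m+n}-C(x)$, and $Q(x)=P(x)/f(x)\in\mathbb{Z}[a_0^{-1}][[x]]$. The matrix equation delivers $[x^m]Q=1$ and $[x^L]Q=0$ for $L=m+1,\ldots,m+n-1$. The identity $f\cdot Q=P$ yields the linear recurrence $\sum_{i=0}^n a_i\,[x^{L-i}]Q=[x^L]P$. Applied at $L=m+n$, where $[x^{m+n}]P=a_n$, it gives $a_0[x^{m+n}]Q+a_n\cdot 1=a_n$, forcing $[x^{m+n}]Q=0$. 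For $L>m+n$ one has $[x^L]P=0$, and a straightforward induction using the $n$ consecutive zero values already in place forces $[x^L]Q=0$. Hence $Q$ is a polynomial of degree exactly $m$, monic, with $fQ=P\in\Phi_0$.

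The remaining step, which I expect to be the main technical point, is to upgrade $Q\in\mathbb{Q}[x]$ to $Q\in\mathbb{Z}[x]$. Write $Q=Q_1/d$ with $Q_1\in\mathbb{Z}[x]$ and $d\in\mathbb{Z}_{>0}$ minimal; then $dP=f\cdot Q_1$ in $\mathbb{Z}[x]$. If a prime $p$ divides $d$, reducing modulo $p$ gives $\bar f\cdot\bar{Q}_1\equiv 0$ in $\mathbb{F}_p[x]$. By the running assumption $\gcd(a_0,\ldots,a_n)=1$ stated at the start of Section 4, $f$ is primitive, so $\bar f\not\equiv 0$; since $\mathbb{F}_p[x]$ is an integral domain, $\bar Q_1\equiv 0$, contradicting the minimality of $d$. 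Therefore $d=1$ and $Q\in\mathbb{Z}[x]$, completing the proof.
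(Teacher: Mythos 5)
Your proposal is correct, and the core idea — recognize the matrix in \bref{eq-1f} as recording the coefficients of $C(x)\cdot(1/f)(x)$ in degrees $m,\dots,m+n-1$ — is the same computation as the paper's, only the paper phrases it via the Toeplitz identity $f(J_{m+n})^{-1}=(1/f)(J_{m+n})$ (with $J_{m+n}$ the nilpotent shift) rather than directly in $\mathbb Z[a_0^{-1}][[x]]$. Where you genuinely add content is in the reverse direction: the paper dispatches it with ``since we need only to know the existence of $c_i$, only the first $n$ rows are need,'' which establishes that $Q=P/f$ is a monic polynomial of degree $m$ over $\mathbb Q$ (your forward-induction step using the $n$ consecutive zero coefficients), but does not explain why $Q$ has \emph{integer} coefficients. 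You flag this as the main technical point, and your fix — write $Q=Q_1/d$ with $d$ minimal, reduce $dP=fQ_1$ modulo any prime $p\mid d$, and use primitivity of $f$ (from the running assumption $\gcd(a_0,\dots,a_n)=1$) together with $\mathbb F_p[x]$ being a domain to contradict minimality — is correct and closes the gap cleanly. So: same route in substance, but your write-up is complete where the paper's is terse, and the integrality step is a real improvement.
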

\begin{proof}
Extending the proof of Lemma \ref{lm-ip}, let $\mathbf b_{m+n-1}\triangleq (b_{m+n-1},\cdots,b_0)^T$. For the following special Jordan form
$$J_j\triangleq
\left(\begin{array}{cccc}
0&1& &\\
 &\ddots&\ddots& \\
 & &\ddots&1\\
 & & &0\\
\end{array}\right)_{j\times j},\ \textrm{ we have }f(J_j)=
\left(\begin{array}{ccccc}
a_0&\cdots&a_n&&\\
&a_0&\ddots&\ddots&\\
&&\ddots&\ddots&a_n\\
&&&a_0&\vdots\\
&&&&a_0\\
\end{array}\right)_{j\times j}.$$
By Lemma \ref{lm-ip},  $fg\in\Phi_0$ if and only if $f(J_{m+n})\,\mathbf{b}\in\mathbb Z_{\le  0}^{m+n}$ for some $(b_{m-1},\cdots,b_0)\in\mathbb Z^m$ with $(b_{m+n-1},\cdots,b_m)=\,(0,\cdots,0,1)$. Let $\mathbf{c} = (c_{m+n-1},\cdots,c_0)^T\triangleq -f(J_{m+n})\,\mathbf{b}\in\mathbb N^{m+n}$. Then we have $f(J_{m+n})^{-1}\mathbf{c} = (1/f)(J_{m+n})\mathbf{c} = -\mathbf{b}$, that is
 $$\left(\begin{array}{cccc}
    {\lambda_{0}} &{\lambda_{1}} &\cdots &{\lambda_{m+n-1}}\\
    {} &{\lambda_{0}} &\ddots& \vdots \\
    {} &{} &\ddots& {\lambda_{1}} \\
    {} & {}& {}& {\lambda_{0}}\\
  \end{array}
\right)
\left(
  \begin{array}{c}
    {c_{m+n-1}} \\
    {c_{m+n-2}} \\
    \vdots \\
    {c_0} \\
  \end{array}
\right)=\left(
\begin{array}{c}
{0}\\
 \vdots\\
{0}  \\
-1\\
-b_{m-1}\\
\vdots\\
-b_0\\
\end{array}
\right).$$
Since we need only to know the existence of $c_i$, only the first $n$ rows
are need, and the lemma is proved.
\end{proof}

Note that $a_0^{i+1}\lambda_i\in\Z$ for any $i\in\N$.
We can reduce the coefficient matrix in the above lemma into
an integer matrix.
\begin{cor}\label{cor-lbq}
Let  $f,g\in\R[x]$, $\lc(f)>0$, $g$ monic, and
$(1/f)(x)\triangleq \sum_{m=0}^{\infty}\lambda_mx^m \in\R[[x]]$.
If $\lt(fg) = (fg)^+$, then $\deg (g)\ge \min\{j\in\N\,|\,\lambda_j<0\}$.
\end{cor}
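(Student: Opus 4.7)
The plan is to exploit the identity $g(x) = (fg)(x)\cdot (1/f)(x)$ as a formal power series identity and extract the coefficient of $x^{\deg(g)}$. Set $n=\deg(f)$, $m=\deg(g)$, and $N=n+m$. The hypothesis $\lt(fg)=(fg)^+$ together with $\lc(f)>0$ and $g$ monic means $\lc(fg)=\lc(f)>0$ and every other coefficient of $fg$ is non-positive, so we may write
\[
fg \;=\; \lc(f)\,x^N \;-\; \sum_{i=0}^{N-1} e_i\,x^i,\qquad e_i\ge 0.
\]

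Next, I would multiply both sides of this expression by $(1/f)(x)=\sum_{j\ge 0}\lambda_j x^j$ in $\R[[x]]$ to recover $g(x)$, and then compare the coefficient of $x^m$ on both sides. Since $g$ is monic of degree $m$, the left-hand side contributes $1$; on the right, the term $\lc(f)\,x^N(1/f)(x)$ contributes $0$ in degree $m$ (because $m=N-n<N$), and the remaining terms contribute $-\sum_{i=0}^{m} e_i\,\lambda_{m-i}$ (indices $i>m$ would require $\lambda_{m-i}$ with negative subscript, so they drop out). This yields the key identity
\[
\sum_{i=0}^{m} e_i\,\lambda_{m-i} \;=\; -1.
\]

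Finally, I would argue by contradiction: suppose $\deg(g)=m < j^\star$, where $j^\star=\min\{j\in\N\mid \lambda_j<0\}$. Then $\lambda_0,\lambda_1,\ldots,\lambda_m$ are all non-negative, and since each $e_i\ge 0$, the left-hand side of the displayed identity is non-negative, contradicting its value $-1$. Hence $\deg(g)\ge j^\star$, as claimed.

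There is no real obstacle here: the argument is a one-line power-series coefficient comparison once one writes $fg$ in the form dictated by $\lt(fg)=(fg)^+$. The only small point to watch is the degenerate case $n=0$, where $f$ is a nonzero constant and all $\lambda_j$ with $j\ge 1$ vanish while $\lambda_0=1/f>0$; in this case no $j^\star$ exists (equivalently, $\min\emptyset=+\infty$) and the statement is either vacuous or matches the convention that the bound is imposed only when the set on the right is nonempty, which is exactly the situation forced by the existence of a suitable $g$.
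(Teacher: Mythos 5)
Your proof is correct and is essentially the same as the paper's: both write the non-leading coefficients of $fg$ as nonnegative quantities (your $e_i$, the paper's $c_i$), extract the identity $\sum_{i=0}^{m} e_i\,\lambda_{m-i}=-1$ from the coefficient of $x^{m}$ in $g=(fg)\cdot(1/f)$, and conclude by a sign comparison when $\lambda_0,\dots,\lambda_m\ge 0$. The paper obtains that identity as the last row of the matrix equation \bref{eq-1f} in Lemma \ref{lm-1f}, whereas you derive it directly by coefficient comparison in $\R[[x]]$ --- a cleaner, self-contained presentation of the same computation.
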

\begin{proof}
From the proof of Lemma \ref{lm-1f},
there exists a monic $g\in\R[x]$ such that $\lt(fg) = (fg)^+$
if and only if  \bref{eq-1f} has a solution $(c_{m+n-1},\cdots,c_0)\in\R_{>0}^{m+n}$.
If $\lambda_0,\dots,\lambda_m\ge0$, the last coordinate of \bref{eq-1f} is $\sum_{j=0}^m\lambda_jc_{m-j}\ge0$, hence $\sum_{j=0}^m\lambda_jc_{m-j}\ne-1$
and  \bref{eq-1f} has no solution in $\R_{>0}^{m+n}$.
As a consequence, if $\lt(fg) = (fg)^+$, then $\deg (g)\ge \min\{j\in\N\,|\,\lambda_j<0\}$
and the corollary is proved.
\end{proof}

\begin{cor}\label{cor-lbq2}
Let $f(x) = ax^2+bx+c\in\R[x]$,  $a>0$, $b^2-4ac<0$, and $z$ a root of $f$.
If $fg\in\Phi_0$ and $g$ is monic, then $\deg (g)\ge \lfloor\pi/|\textup{Arg}(z)|\rfloor = \lfloor\pi/\arctan(\sqrt{4ac-b^2}/b)\rfloor$.
\end{cor}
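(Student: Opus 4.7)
The plan is to apply Corollary \ref{cor-lbq}, which reduces the claim to analyzing when the first negative coefficient of the power series $1/f(x) = \sum_{m\ge 0} \lambda_m x^m$ appears. Thus it suffices to identify $\min\{j\in\N \mid \lambda_j < 0\}$ and show it equals $\lfloor \pi/|\textup{Arg}(z)|\rfloor$.

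I would compute $\lambda_m$ explicitly via partial fractions. Since $b^2-4ac<0$, the two roots are complex conjugates; write $z = re^{\im\theta}$ with $\theta = \textup{Arg}(z) \in (0,\pi)$ and $\bar z = re^{-\im\theta}$, so $f(x) = a(x-z)(x-\bar z)$. Decomposing
\[
\frac{1}{f(x)} = \frac{1}{a(z-\bar z)}\left(\frac{1}{x-z} - \frac{1}{x-\bar z}\right)
\]
and expanding each term as a geometric series about $x=0$, the coefficient of $x^m$ simplifies (using $z-\bar z = 2\im r\sin\theta$) to
\[
\lambda_m = \frac{\sin((m+1)\theta)}{a\, r^{m+2}\,\sin\theta}.
\]
Since $\sin\theta>0$, the sign of $\lambda_m$ is the sign of $\sin((m+1)\theta)$.

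Next I would locate the first $m\ge 0$ at which $\sin((m+1)\theta) < 0$. For $(m+1)\theta \le \pi$ we have $\sin((m+1)\theta)\ge 0$, so $\lambda_m \ge 0$. The smallest integer $k = m+1$ with $k\theta > \pi$ is $k_0 = \lfloor \pi/\theta\rfloor + 1$, and since $\theta<\pi$ we also have $k_0\theta < 2\pi$, so $\sin(k_0\theta)<0$. Therefore the first negative $\lambda_m$ occurs at $m = \lfloor\pi/\theta\rfloor$ (in the edge case when $\pi/\theta$ is an integer, $\lambda_{\pi/\theta - 1}=0$ and the next coefficient is negative, so the bound $\lfloor\pi/\theta\rfloor$ still holds). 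Combining with Corollary \ref{cor-lbq} yields $\deg(g) \ge \lfloor \pi/\theta\rfloor = \lfloor \pi/|\textup{Arg}(z)|\rfloor$.

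Finally, to rewrite the bound in the closed form involving $a,b,c$, I would use the standard identifications $z+\bar z = 2r\cos\theta = -b/a$ and $|z-\bar z| = 2r\sin\theta = \sqrt{4ac-b^2}/a$, so that $\tan\theta = \sqrt{4ac-b^2}/(-b)$, giving $\theta = \arctan(\sqrt{4ac-b^2}/b)$ after taking the branch in $(0,\pi)$. The only mildly delicate point is the edge case $\pi/\theta\in\mathbb{Z}$ just discussed; everything else is a direct calculation, so I do not anticipate a serious obstacle once Corollary \ref{cor-lbq} is in hand.
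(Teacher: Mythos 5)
Your proposal is correct and follows essentially the same route as the paper: invoke Corollary \ref{cor-lbq}, compute $\lambda_m=\sin((m+1)\theta)/(ar^{m+2}\sin\theta)$ by partial fractions, and identify $\lfloor\pi/\theta\rfloor$ as the index of the first negative coefficient. The only difference is cosmetic — you explicitly flag the boundary case $\pi/\theta\in\mathbb{Z}$ (where $\lambda_{\pi/\theta-1}=0$), which the paper silently absorbs into the identity $\min\{j\mid(j+1)\theta>\pi\}=\lfloor\pi/\theta\rfloor$.
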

\begin{proof}
Let $f(x)=a(x-z)(x-\bar z)$, and $z=re^{\theta \im}$ where $r\in\R_{>0}$
and $\theta=\textup{Arg}(z)\ne k\pi$.
Without loss of generality, we can assume $0<\theta<\pi$. Then
$$\frac{1}{f(x)}=\frac{1}{a(x-z)(x-\bar z)}=\sum_{j=0}^{\infty}\frac{z^{j+1}-\bar z^{j+1}}{a(z\bar z)^{j+1}(z-\bar z)}x^j=\sum_{j=0}^{\infty}\frac{\sin((j+1)\theta)}{ar^{j+2}\sin\theta}x^j,$$
that is, $\lambda_j=\frac{\sin((j+1)\theta)}{ar^{j+2}\sin\theta}$.
Since $\lambda_0=\frac{1}{ar^{2}}>0$,
$\min\{j\in\N\,|\,\lambda_j<0\}=\min\{j\in\N\,|\, (j+1)\theta>\pi\}=\lfloor\pi/\theta-1\rfloor+1=\lfloor\pi/\theta\rfloor$.
 By Corollary \ref{cor-lbq}, $\deg (g)\ge\lfloor\pi/\theta\rfloor=\lfloor\pi/\arctan(\sqrt{4ac-b^2}/b)\rfloor$.
\end{proof}

We can now give a lower bound for the degree of $g$ such that $fg\in\Phi_0$
in certain case.
\begin{thm}
If a polynomial $f(x)\in \Zx$ is of degree $n$ and has at least one root
not in $\R$, then \\
$\min\{\deg (g)\,|\,g\in\Zx\,\, \hbox{\rm is monic and }  fg\in \Phi_0\} \ge
\max\{ \lfloor\pi/|\textup{Arg}(z)|\rfloor-n+2\,|\, f(z)=0,z\notin\R\}$.
\end{thm}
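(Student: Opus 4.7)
The plan is to reduce the statement to Corollary \ref{cor-lbq2} by extracting, for each non-real root $z$, the real quadratic factor $q(x) = (x-z)(x-\bar{z}) = x^2 - 2\textup{Re}(z)\,x + |z|^2$ of $f$ over $\R$, and then to show that any monic $g$ witnessing $fg \in \Phi_0$ must pair with the cofactor to furnish a monic real polynomial multiplying $q$ into $\Phi_0$.

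Concretely, fix a non-real root $z$ of $f$ and write $f(x) = q(x)\,h(x)$ with $h \in \R[x]$ of degree $n-2$. Given any monic $g \in \Z[x]$ with $fg \in \Phi_0$, set $c = \textup{lc}(h)$, which is positive since $\textup{lc}(q) = 1$ and $\textup{lc}(f) > 0$ under our running hypotheses. Then $\tilde g := h g / c$ is a monic polynomial in $\R[x]$ of degree $\deg(h) + \deg(g) = (n-2) + \deg(g)$, and the key observation is that
\begin{equation*}
q(x)\,\tilde g(x) = \tfrac{1}{c}\,f(x)g(x).
\end{equation*}
Since $c > 0$, scaling preserves the sign pattern of every coefficient, so $fg \in \Phi_0$ forces $q\,\tilde g \in \Phi_0$ as well.

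At this point Corollary \ref{cor-lbq2}, applied to the real quadratic $q$ with its non-real root $z$, yields $\deg(\tilde g) \geq \lfloor \pi/|\textup{Arg}(z)|\rfloor$; unpacking gives
\begin{equation*}
\deg(g) \;\geq\; \lfloor \pi/|\textup{Arg}(z)|\rfloor - n + 2.
\end{equation*}
Taking the maximum over all non-real roots $z$ of $f$ delivers the claimed lower bound. The one point requiring care — and the only possible obstacle — is verifying that leading-coefficient positivity passes through the factorization so that $\tilde g$ is genuinely monic; once that is checked, the proof is a one-line reduction to the quadratic case already handled in Corollary \ref{cor-lbq2}.
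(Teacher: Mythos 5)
Your proposal is correct and matches the paper's proof essentially line for line: both factor out the real quadratic $q(x)=(x-z)(x-\bar z)$ from $f$, observe that $fg\in\Phi_0$ makes the (normalized, monic) cofactor times $g$ a valid multiplier for $q$ satisfying the sign condition, and invoke Corollary \ref{cor-lbq2} to get the degree bound, subtracting $\deg(h)=n-2$. The paper writes this more tersely (passing through a $\min$ over $\R[x]$ and reusing the symbol $g$), but the decomposition, the appeal to Corollary \ref{cor-lbq2}, and the bookkeeping of the degree shift are the same.
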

\begin{proof}
Since $f(x)\in \Zx$ has at least one root not in $\R$, $f = f_1f_2$ where $f_2$
is a quadratic polynomial in $\R[x]$ which has two complex roots.
Suppose there exists a monic $g\in\R[x]$ such that
$\lt(fg) = (fg)^+$ or $\lt(f_1f_2g) = (f_1f_2g)^+$.
By Corollary \ref{cor-lbq2}, $\deg(g) \ge \lfloor\pi/|\textup{Arg}(z)|\rfloor -\deg(f_1) = \lfloor\pi/|\textup{Arg}(z)|\rfloor-n+2$.
Then, $\min\{\deg (g)\,|\,g\in\Zx\,\, \hbox{\rm is monic and }  fg\in \Phi_0\} \ge
\min\{\deg (g)\,|\,g\in\R[x]\,\, \hbox{\rm is monic and }  \lt(fg)=(fg)^+\} \ge
\max\{ \lfloor\pi/|\textup{Arg}(z)|\rfloor-n+2\,|\, f(z)=0,z\notin\R\}$.
\end{proof}

The following result shows that the lower bound given in the preceding theorem
is also the upper bound for quadratic polynomials.
\begin{prop}\label{lm-qu}
Let $f(x)=a_2x^2+a_1x+a_0=a_2(x-z)(x-\bar z)$ be a quadratic polynomial in $\Z[x]$
with a root complex $z=a+b\im=re^{\theta \im}$, where $a_2,b,r>0$, $0<\theta<\pi$, $\bar z=a-b\im$. Then $\min\{\deg (g)\,|\,g\in\Zx \hbox{ and monic},\  fg\in \Phi_0\}=\lfloor\pi/\theta\rfloor.$
\end{prop}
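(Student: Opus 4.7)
The lower bound $\deg(g) \ge \lfloor \pi/\theta \rfloor$ is immediate from Corollary~\ref{cor-lbq2} applied to the quadratic $f$ with complex root $z = re^{\im\theta}$. My plan for the upper bound is to exhibit a monic $g \in \Z[x]$ of degree exactly $m := \lfloor \pi/\theta \rfloor$ with $fg \in \Phi_0$; after dividing out the content of $f$, I may assume $f$ is primitive, so Gauss's lemma delivers $p/f \in \Z[x]$ for any $p \in \Z[x]$ with $f \mid p$ in $\Q[x]$.

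The boundary case $m = 1$ forces $\theta \in (\pi/2, \pi)$, whence $a_1 = -2a_2 r\cos\theta > 0$ and $a_0 = a_2 r^2 > 0$; for any integer $d \ge \max(a_1/a_2,\,a_0/a_1)$, the polynomial $g(x) = x - d$ yields $fg = a_2 x^3 + (a_1 - a_2 d)x^2 + (a_0 - a_1 d)x - a_0 d \in \Phi_0$ by direct expansion.

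For $m \ge 2$ (so $\theta \le \pi/2$), I search within the ansatz $g(x) = x^m - \sum_{k=0}^{m-1} t_k x^k$ with $t_k \in \N$. Using the identity $x^m f = a_2 x^{m+2} + a_1 x^{m+1} + a_0 x^m$, the polynomial $a_2 x^{m+2} - fg$ equals $-a_1 x^{m+1} - a_0 x^m + f\cdot\sum_k t_k x^k$, and the condition $fg \in \Phi_0$ is precisely that this polynomial lies in $\N[x]$. Over the reals, the two-term construction
$$s := \frac{a_2 r \sin((m+2)\theta)}{\sin((m+1)\theta)},\qquad t := -\frac{a_2 r^{m+2}\sin\theta}{\sin((m+1)\theta)}$$
with only the coefficients at $x^{m+1}$ and $x^0$ nonzero furnishes a non-negative real solution: both $s, t \ge 0$ because $(m+1)\theta \in (\pi, 2\pi)$ forces $\sin((m+1)\theta) < 0$, and because $\theta \le \pi/2$ gives $(m+2)\theta \in (\pi, 2\pi]$ with $\sin((m+2)\theta) \le 0$. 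Hence the linear system on the $t_k$'s is feasible over $\R_{\ge 0}$.

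The main obstacle is upgrading to an integer solution, since $s, t$ are generally non-integer (e.g.\ $f = x^2 - x + 3$ yields $s = 5/2$, $t = 27/2$). My plan is to choose $t_{m-1}, t_{m-2}, \ldots, t_0 \in \N$ recursively: pick $t_{m-1}$ so that the $x^{m+1}$ coefficient of $a_2 x^{m+2} - fg$ is non-negative (automatic for $a_1 \le 0$, our case), then $t_{m-2}$ to force the $x^m$ coefficient $\ge 0$, and generally each $t_{j-2}$ to force the $x^j$ coefficient $\ge 0$. At every step the required lower bound on $t_{j-2}$ is finite because the leading dependence is through $a_2 > 0$, and the ``middle'' inequalities $a_0 t_j + a_1 t_{j-1} + a_2 t_{j-2} \ge 0$ can be maintained by choosing successive $t_k$'s along a geometric sequence with ratio $q$ satisfying $\hat f(q) := a_0 q^2 + a_1 q + a_2 > 0$, which holds for every $q > 0$ thanks to the complex-roots condition $a_1^2 < 4 a_0 a_2$. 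The resulting $g \in \Z[x]$ is then monic of degree $m$ with $fg \in \Phi_0$, attaining the lower bound.
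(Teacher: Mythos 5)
Your lower bound via Corollary~\ref{cor-lbq2} is correct and matches the paper's, and your $m=1$ case is fine. The upper bound for $m\ge 2$, however, has a real gap. Your greedy recursion picks $t_{j-2}$ to clear the coefficient of $x^j$ for $j=m,\ldots,2$ and claims this is always possible because $a_2>0$; as stated, that argument never uses the value $m=\lfloor\pi/\theta\rfloor$, so if it worked it would produce a valid monic $g$ of \emph{every} degree $m\ge 2$, contradicting the lower bound you just proved. What the recursion silently omits is the constraint at $x^1$, namely $a_0t_1+a_1t_0\ge 0$: since $a_1<0$ this is an \emph{upper} bound on $t_0$, whereas $t_0$ is the last unknown you pick and is subject to a \emph{lower} bound coming from the $x^2$ constraint. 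Reconciling the two is exactly where the threshold $\lfloor\pi/\theta\rfloor$ enters, and the paper does it by eliminating the tridiagonal LP row by row and tracking $\Delta_j=-r\sin((j+1)\theta)/\sin(j\theta)$, with feasibility governed by the sign of $\Delta_{m_0}$. Your geometric-ratio patch $t_j\propto q^j$ does not rescue the recursion: the $x^m$ constraint needs $q<a_2/|a_1|$ while the $x^1$ constraint needs $q\ge|a_1|/a_0$, so you would need $a_1^2\le a_0a_2$, which is strictly stronger than the available $a_1^2<4a_0a_2$.

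Your two-term real solution $(s,t)$ is a genuinely different and attractive starting point: it cleanly exhibits a monic real $g$ of degree $m=\lfloor\pi/\theta\rfloor$ such that every non-leading coefficient of $fg$ is $\le0$, and it pins down the right threshold without the $\Delta_j$ machinery. But the passage to an \emph{integer} $g$ is where the remaining work sits and is not routine, since the leading coefficient of $g$ must stay pinned to $1$, so you cannot clear denominators by a uniform scaling. The paper instead finds a strictly interior rational point of the polytope (when $\Delta_{m_0}>0$), shows that multiplying only the $b_j$'s by the common denominator $N$ still satisfies all inequalities because the negative $b_j$'s dominate the fixed constant contributions, and then separately treats the boundary case $\Delta_{m_0}=0$ (equivalently $\pi/\theta\in\Z$, where integrality of $e^{2\pi\im/(m_0+1)}$ forces $a_1^2/(a_0a_2)\in\{1,2,3\}$ and hence $m_0\in\{2,3,5\}$), disposing of each by an explicit feasible solution. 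None of this appears in your sketch, so the upper bound is not yet established.
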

\begin{proof}
%
If $\pi/2<\theta<\pi$, then $a_1 = -2 a>0$ and hence $f\in\N^{>0}[x]$.
By the proof of Lemma \ref{lm-s3}, there exists an $N$ such that $(x-N)f\in\Phi_0$
and hence $\widehat{\deg}(f)=1=\lfloor\pi/\theta\rfloor$.
If $\theta=\pi/2$, then $f=a_2x^2+a_0$. It is easy to check $\widehat{\deg}(f)=2=\lfloor\pi/\theta\rfloor$.

From now on, we assume $0<\theta<\pi/2$, so $a>0$ and $a_1<0$.
Considering $f_1(x)=(x-a-bi)(x-a+bi)=x^2-2ax+a^2+b^2\in\Z[a_2^{-1}][x]$, we will solve the integer programming mentioned in Lemma \ref{lm-ip}:
\begin{eqnarray}\label{mx-1}
\left(
   \begin{array}{ccccc}
     {-2a} & {1} & {} & {} & {} \\
     {a^2+b^2} & {-2a} & {1} & {} & {} \\
     {} &  {\ddots} & {\ddots} & \ddots & {} \\
     {} & {} & {a^2+b^2}&-2a & {1} \\
     {} & {} & {} & a^2+b^2& {-2a} \\
     {} & {} & {} & {} &  {a^2+b^2}\\
   \end{array}\right)_{(m+2)\times(m+1)}\left(
  \begin{array}{c}
    {1} \\
    {b_{m-1}} \\
    {b_{m-2}} \\
    \vdots \\
    {b_0} \\
  \end{array}
\right)\le0.
\end{eqnarray}
Let $\Delta_1 = -2a$ and r $\Delta_{j+1} = {-2a}-({a^2+b^2})/{\Delta_{j}}$ for $j>1$.
Then
$$\Delta_{j}=-\frac{(a+bi)^{j+1}-(a-bi)^{j+1}}{(a+bi)^{j}-(a-bi)^{j}}=-\frac{r\sin(j+1)\theta}{\sin j\theta}.$$
Let $m_0=\lceil\pi/\theta\rceil-1$. Then  we have $\Delta_j<0$ for $j=1,\cdots,m_0-1$ but $\Delta_{m_0}\ge0$.

We will do row transformations on \bref{mx-1} to relax its feasible region.
Let $m=m_0-1$. We add $(m+1)$-th row multiplied by $1/({-\Delta_1})>0$ to the $m$-th row. Then the $-2a$ at the $m$-th row becomes $\Delta_2 = -2a-({a^2+b^2})/{\Delta_1}$, and the $1$ at the $m$-th row becomes $0$.
Then add $m$-th row multiplied by $1/({-\Delta_2})>0$ to the $(m-1)$-th row.
Repeat the above process until $\Delta_{m_0}\ge 0$, and we obtain a lower triangular matrix:
\begin{eqnarray}\label{mx-lt}
\left(
   \begin{array}{ccccc}
    \Delta_{m_0}  & {0} & {} & {} & {} \\
     {a^2+b^2} & {\Delta_{m_0-1}} & {0} & {} & {} \\
     {} &  {\ddots} & {\ddots} & \ddots & {} \\
     {} & {} & {a^2+b^2}&\Delta_2 & {0} \\
     {} & {} & {} & a^2+b^2& {\Delta_1} \\
     {} & {} & {} & {} &  {a^2+b^2}\\
   \end{array}\right)_{m_0+1 \times m_0}.
\end{eqnarray}
\begin{enumerate}
\item
If $\Delta_{m_0}>0$,   the first coordinate of the left side of
\begin{eqnarray}\label{mx-lt1}
\left(
   \begin{array}{cccc}
    \Delta_{m_0}  & {} & {} & {}  \\
     {a^2+b^2} & {\Delta_{m_0-1}} & {} & {}  \\
     {} &  {\ddots} & {\ddots} &   \\
     {} & {} & {a^2+b^2}&\Delta_1 \\
     {} & {} & {} & a^2+b^2 \\
   \end{array}\right)\left(
  \begin{array}{c}
    {1} \\
    {b_{m_0-2}} \\
    {b_{m_0-3}} \\
    \vdots \\
    {b_0} \\
  \end{array}
\right)\le0\end{eqnarray}
is $\Delta_{m_0}>0$. So the feasible region of \bref{mx-lt1} is empty and hence the feasible region of \bref{mx-1} is also empty.
Thus $fg\notin \Phi_0$ for any monic polynomial $g$ of degree $< m_0$ by Lemma \ref{lm-ip}.

Let $m=m_0$. We have
\begin{eqnarray}\label{mx-lt2}
\left(
   \begin{array}{ccccc}
     -2a&1  & {} & {} & {}  \\
  {a^2+b^2}   & \Delta_{m_0}  & {} & {} & {}  \\
    & {a^2+b^2} & {\Delta_{m_0-1}} & {} & {}  \\
   &  {} &  {\ddots} & {\ddots} &   \\
   &  {} & {} & {a^2+b^2}&\Delta_1 \\
   &  {} & {} & {} & a^2+b^2 \\
   \end{array}\right)\left(
  \begin{array}{c}
    {1} \\
    {b_{m_0-1}} \\
    {b_{m_0-2}} \\
    \vdots \\
    {b_0} \\
  \end{array}
\right)\le0.\end{eqnarray}
Similarly, we can obtain a quasi-upper trangular matrix from \bref{mx-1} by row transformations:
\begin{eqnarray}\label{mx-ut}
\left(
      \begin{array}{ccccc}
     {\Delta_1} & {1} & {} & {} & {} \\
     {} & {\ddots} & {\ddots} & {} & {} \\
     {} & {} & {\Delta_{m_0-1}} & 1 & {} \\
     {} & {} & {}& \Delta_{m_0}& {1} \\
     {} & {} & {} & a^2+b^2 & {-2a} \\
   \end{array}\right)\left(
  \begin{array}{c}
    {1} \\
    {b_{m_0-1}} \\
    {b_{m_0-2}} \\
    \vdots \\
    {b_0}\\
    \end{array}\right)\le0.
\end{eqnarray}
Combining \bref{mx-1}, \bref{mx-lt2} and \bref{mx-ut}, we have
\begin{eqnarray}\label{ie-m1}
b_{m_0-1}\le-\frac{a^2+b^2}{\Delta_{m_0}}\Rightarrow b_{m_0-1}<0\Rightarrow-2a+b_{m_0-1}\le0;
\end{eqnarray}
\begin{eqnarray}\label{ie-in}
-\frac{a^2+b^2}{\Delta_{j+1}}b_{j+1}\le b_j\le-(a^2+b^2)b_{j+2}+2ab_{j+1},\ j=m_0-2,m_0-3,\cdots,0;
\end{eqnarray}
\begin{eqnarray}
b_j\le-(a^2+b^2)b_{j+2}+2ab_{j+1}\Rightarrow b_j\le-\Delta_{m_0-j}b_{j+1}\Rightarrow b_j<0, \ j=m_0-2,m_0-3,\cdots,1;
\end{eqnarray}
\begin{eqnarray}
b_0\le0\Rightarrow\Delta_{m_0}b_1+b_0\le0.
\end{eqnarray}
In \bref{ie-in}, we need to show that there exists a rational number $b_j$ satisfying
\begin{eqnarray}\label{ie-in2}
-\frac{a^2+b^2}{\Delta_{j+1}}b_{j+1}<b_j<-(a^2+b^2)b_{j+2}+2ab_{j+1}.\end{eqnarray}
We need to show
$$-(a^2+b^2)b_{j+2}+2ab_{j+1}+\frac{a^2+b^2}{\Delta_{j+1}}b_{j+1}=-(a^2+b^2)b_{j+2}-\Delta_{j+2}b_{j+1}>0,$$
which is true from the first `$<$' in \bref{ie-in2} when $j=j+1$.

Then we can choose some rational number $b_{m_0-1},\cdots,b_0$  satisfying \bref{ie-m1} and \bref{ie-in2}, and then $(1,b_{m_0-1},\cdots,$ $b_0)$ is a feasible solution of \bref{mx-1}. Taking the common denominator $N\in\N_{\ge1}$ of $\{b_j\,|\,j=0,\cdots,m_0-1\}$, we have
$$-2a+Nb_{m_0-1}<-2a+b_{m_0-1}\le0;$$
$$a^2+b^2-2aNb_{m_0-1}+Nb_{m_0-2}<N(a^2+b^2-2ab_{m_0-1}+b_{m_0-2})\le0;$$
$$(a^2+b^2)Nb_j-2aNb_{j-1}+Nb_{j-2}\le0,\ j=m_0-1,\cdots,2;$$
$$(a^2+b^2)Nb_1-2aNb_0\le0;\ (a^2+b^2)Nb_0\le0,$$
and then
\begin{eqnarray}\label{ie-nb}
f(x)g_1(x)=a_2(x^2-2ax+a^2+b^2)(x^{m_0}+\sum_{j=0}^{m_0-1} Nb_jx^{j})\in\Phi_0.
\end{eqnarray}
Then $\Delta_{m_0}>0$ implies $\widehat{\deg}(f)=m_0=\lceil\pi/\theta\rceil-1=\lfloor\pi/\theta\rfloor$.
\item
If $\Delta_{m_0}=0$, $\pi/\theta=m_0+1>2$, $z=re^{\pi \im/({m_0+1})}$. Then $e^{2\pi \im/({m_0+1})}$ is a root of $(x-1)^{-2}R_u(f(x),f(ux))=a_2a_0x^2+(2a_2a_0-a_1^2)x+a_2a_0$. Since $e^{2\pi \im/({m_0+1})}$ is integral over $\Z$, we have $a_0a_2\,|\,(2a_2a_0-a_1^2)$ or $a_0a_2\,|\,a_1^2$.
For $0<2\pi/({m_0+1})<\pi$, $a_2a_0x^2+(2a_2a_0-a_1^2)x+a_2a_0$ has no real roots,
and then we have $(2a_2a_0-a_1^2)^2-4(a_2a_0)^2<0$, that is ${a_1^2}<4{a_0a_2}$. Then we have $m_0=2$ when ${a_1^2}={a_0a_2}$,  $m_0=3$ when ${a_1^2}=2{a_0a_2}$ or $m_0=5$ when ${a_1^2}=3{a_0a_2}$.
\begin{enumerate}
\item
If $m_0=2$ and $\Delta_{2}=0$, $f(x)=a_2x^2+a_1x+a_0$, where $a_1=-\sqrt{a_0a_2}$. When solving \bref{mx-1} for $m=3$, we have
$$b_0\le\frac{a_0^3}{a_1^3},\ b_1\le-\frac{a_1b_0}{a_0},\ -\frac{a_0^2 + a_1^2 b_1}{a_0 a_1} \le b_2\le-\frac{a_1^2 b_0 + a_0 a_1 b_1}{a_0^2}.$$
In order for  an integer $b_2$ to satisfy these inequations, we need to assume
$$\frac{a_1^2 b_0 - a_0 a_1 b_1}{a_0^2}+\frac{a_0^2 + a_1^2 b_1}{a_0 a_1} \ge2,\textup{
that is } \ b_0\le\frac{a_0^3-2 a_0^2 a_1}{a_1^3}.$$
Here $b_0<0$ implies $\min\{\deg(g)\,|\,fg\in \Phi_0\}\ge3$, so $\widehat{\deg}(f)=3=\pi/\theta=\lfloor\pi/\theta\rfloor$.
\item
If $m_0=3$ and $\Delta_{3}=0$, $f(x)=a_2x^2+a_1x+a_0$, where $a_1=-\sqrt{2a_0a_2}$. When we solve \bref{mx-1} for $m=4$, we have
$$b_0 \le\frac{ - a_0^2 }{a_2^2},\
 b_3 \le \frac{a_0^2 a_1 + a_1 a_2^2 b_0}{-a_0^2 a_2},$$
 $$ \frac{-a_2^2 b_0 + a_0 a_1 b_3}{-a_0 a_2}\le b_2 \le\frac{-a_0 - a_1 b_3}{a_2} ,\  \frac{-a_2 b_0 - a_0 b_2}{a_1} \le b_1 \le \frac{-a_1 b_2 - a_0 b_3}{a_2} .$$
 When we want
 $$\frac{-a_1 b_2 - a_0 b_3}{a_2}-\frac{-a_2 b_0 - a_0 b_2}{a_1}\ge2,\ \frac{-a_0 - a_1 b_3}{a_2}-  \frac{-a_2^2 b_0 + a_0 a_1 b_3}{- a_0 a_2}\ge2,$$
we only need
$$ b_0 \le\min\{ \frac{-a_0^2 + 2 a_1 a_2}{a_2^2}, \frac{-a_0^2 - 2 a_0 a_2}{a_2^2}\},\ b_3 \le\frac{a_0^2 a_1 + a_1 a_2^2 b_0}{-a_0^2 a_2}.$$
Here $b_0\le - a_0^2/a_2^2<0$ implies $\min\{\deg(g)\,|\,fg\in \Phi_0\}\ge4$, so $\widehat{\deg}(f)=4=\pi/\theta=\lfloor\pi/\theta\rfloor$.
\item
If $m_0=5$ and $\Delta_{5}=0$, $f(x)=a_2x^2+a_1x+a_0$, where $a_1=-\sqrt{3a_0a_2}$. Rewriting $a_2f(x)=a_2^2x^2+a_2a_1x+3a_1^2$, When we solve \bref{mx-1} for $a_2f(x)$ for $m=6$, we get
\begin{eqnarray*}
b_5<0,\
b_4\le\frac{-a_1^2+3a_1a_2b_5}{3a_2^2},\
\frac{a_1b_4}{a_2}\le b_3\le\frac{3a_1a_2b_4-a_1^2b_5}{3a_2^2}\end{eqnarray*}\begin{eqnarray*}
\frac{2a_1b_3}{3a_2}\le b_2\le\frac{3a_1a_2b_3-a_1^2b_4}{3a_2^2},\
\frac{a_1b_2}{2a_2}\le b_1 \le\frac{3a_1a_2b_2-a_1^2b_3}{3a_2^2},\
\frac{a_1b_1}{3a_2}\le b_0\le\frac{3a_1a_2b_1-a_1^2b_2}{3a_2^2}.
\end{eqnarray*}
Because $b_5<0$ implies $\frac{a_1b_4}{a_2}<\frac{3a_1a_2b_4-a_1^2b_5}{3a_2^2}$, $\frac{a_1b_4}{a_2}< b_3$ implies $\frac{2a_1b_3}{3a_2}<\frac{3a_1a_2b_3-a_1^2b_4}{3a_2^2}$, $\frac{2a_1b_3}{3a_2}< b_2$ implies $\frac{a_1b_2}{2a_2}<\frac{3a_1a_2b_2-a_1^2b_3}{3a_2^2}$, and $\frac{a_1b_2}{2a_2}<b_1$ implies $\frac{a_1b_1}{3a_2}<\frac{3a_1a_2b_1-a_1^2b_2}{3a_2^2}$, there exists a feasible solution $\{b_5,b_4,b_3,b_2,b_1,b_0\}\in\Q_{<0}^6$, which is an inner point of the semi-algebraic set. Using the same notations in \bref{ie-nb}, let $N\in\N_{>1}$ be the common denominator of $\{b_0,\dots,b_5\}$, and we have $f(x)(x^6+N\sum_{j=0}^5b_jx^j)\in\Phi_0$.

Here $b_0<0$ implies $\min\{\deg(g)\,|\,fg\in \Phi_0\}\ge6$, so $\widehat{\deg}(f)=6=\pi/\theta=\lfloor\pi/\theta\rfloor$.
\end{enumerate}
\end{enumerate}
We complete the proof. 
\end{proof}
The following example is used to illustrate the proof.

\begin{exmp}
Let $f=x^2-x+2$, $\Delta_1=-1$, $\Delta_2=1>0$, $m_0=2$, $\widehat{\deg}(f)=2$.
Here $f\notin\N[x]$ implies $\widehat{\deg}(f)>1$, and $(x^2-x+2)(x^2-5x-7)\in\Phi_0$ implies $\widehat{\deg}(f)\le2$.
\end{exmp}

\begin{exmp}
Let $f=x^2-2x+2$. By the effective Polya Theorem \ref{lm-Po}, we have $d_1= \min\{\deg (g)\,|$ $\,g\in\Zx\hbox{ and monic},\  fg\in \Phi_0\}\le10$.
%
%
However, we have $\min\{\deg (g)\,|\,g\in\Zx\hbox{ and monic},\  fg\in \Phi_0\}=4$ by proposition \ref{lm-qu}, where $g=x^4-2x^2-4x-4$ and $fg= x^6-2x^5-8$.
\end{exmp}

\section{Conclusion}

In this paper, we study when a $\sigma$-ideal has a finite $\sigma$-G\"obner basis.
We focused on a special class of $\sigma$-ideals:
normal binomial $\sigma$-ideals which can be be described by
the Gr\"obner basis of a $\Zx$-module.
We give a criterion for a univariate normal binomial $\sigma$-ideal to have a
finite $\sigma$-Gr\"obner basis.
When the characteristic set of the $\sigma$-ideal consists of one $\sigma$-polynomial,
we can give constructive criteria for the $\sigma$-ideal to have a finite $\sigma$-Gr\"obner basis
and an algorithm to compute the finite $\sigma$-Gr\"obner basis under these criteria.
One case is still not solved and we summary it as a conjecture.
Also, it is desirable to extend the criteria given in this paper to multivariate binomial
$\sigma$-ideals.
Example \ref{ex-2} shows that extending Theorem \ref{th-sg1} to the multivariate case is
quite nontrivial.
For $\sigma$-Gr\"obner basis of general $\sigma$-ideals, the work on
monomial $\sigma$-ideals may be helpful \cite{wangjie-1}.

\end{document}